\newcommand{\set}[1]{\{#1\}}
\newcommand{\Set}[2]{\{#1\,|\,#2\}}
\newcommand{\bin}{\set{0,1}} % binary alphabet
\newcommand{\N}{\Bbb{N}}
\newcommand{\R}{\Bbb{R}}
\newcommand{\eps}{\varepsilon}
\newcommand{\bhat}{\hat{b}}
\newcommand{\etal}{{\it et~al}}
\newcommand{\rndc}{\xi}
\newcommand{\rndo}{\eta}
\newcommand{\drndc}{\bar{\xi}}
\newcommand{\drndo}{\bar{\eta}}
\title{On the Composition of Two-Prover Commitments, and Applications to Multi-Round Relativistic Commitments\thanks{This paper is an extended version of
our EUROCRPYT 2016 paper. The eprint version is available at \texttt{https://eprint.iacr.org/2016/113}.}}
\author{Serge Fehr \and Max Fillinger\thanks{Supported by the {\em NWO Free Competition} grant 617.001.203.}}
\institute{Centrum Wiskunde \& Informatica (CWI), Amsterdam, The Netherlands  \\
{\tt \{serge.fehr,\,max.fillinger\}@cwi.nl}
}
\def\prot{{\sf prot}}
\def\com{{\sf com}}
\def\open{{\sf open}}
\def\dcom{\overline{\sf com}}
\def\dopen{\overline{\sf open}}
\def\dnewopen{\overline{\sf newopen}}
\def\ptoq{{\sf ptoq}}
\def\Extr{{\rm Extr}}
\def\CHSH{{\cal CHSH}}
\def\XCHSH{{\cal XCHSH}}
\begin{document}

{\def\addcontentsline#1#2#3{}\maketitle}

%=================================================================
%                  ABSTRACT
%=================================================================

\begin{abstract}
We consider the related notions of {\em two-prover} and of {\em relativistic} commitment schemes. In recent work, Lunghi \etal. proposed a new relativistic commitment scheme with a {\em multi-round sustain phase} that keeps the binding property alive as long as the sustain phase is running. 
They prove security of their scheme against classical attacks; however, the proven bound on the error parameter is very weak: It blows up {\em double exponentially} in the number of rounds. 

In this work, we give a new analysis of the multi-round scheme of Lunghi \etal., and we show a {\em linear} growth of the error parameter instead (also considering classical attacks only). Our analysis is based on a new {\em composition theorem} for two-prover commitment schemes. The proof of our composition theorem is based on a better understanding of the binding property of two-prover commitments that we provide in the form of new definitions and relations among them. These new insights are certainly of independent interest and are likely to be useful in other contexts as well.

Finally, our work gives rise to several interesting open problems, for instance extending our results to the quantum setting, where the dishonest provers are allowed to perform measurements on an entangled quantum state in order to try to break the binding property. 
\end{abstract}

\tableofcontents

\newpage

\section{Introduction}

\paragraph{\sc Two-Prover Commitment Schemes. }
We consider the notion of {\em 2-prover commitment schemes}, as originally introduced by Ben-Or, Goldwasser, Kilian and Wigderson in their seminal paper~\cite{BGKW88}. In a 2-prover commitment scheme, the prover (i.e., the entity that is responsible for preparing and opening the commitment) consists of two agents, $P$ and $Q$, and it is assumed that these two agents cannot communicate with each other during the execution of the protocol. With this approach, the classical and quantum impossibility results for unconditionally secure commitment schemes \cite{Mayers97,LC97} can be circumvented. 

A simple 2-prover bit commitment scheme is the scheme proposed by Cr{\'e}peau \etal.~\cite{CSST11}, which works as follows. 
The verifier $V$ chooses a uniformly random $a \in \set{0,1}^n$ and sends it to $P$, who replies with $x := y + a \cdot b$, where $b$ is the bit to commit to, and $y \in \set{0,1}^n$ is a uniformly random string known (only) to $P$ and $Q$. Furthermore, ``$+$'' is bit-wise XOR, and ``$\cdot$'' is scalar multiplication (of the scalar $b$ with the vector~$a$). In order to open the commitment (to $b$), $Q$ sends $y$ to $V$, and $V$ checks if $x + y = a \cdot b$. It is clear that this scheme is hiding: The commitment $x = y + a \cdot b$ is uniformly random and independent of $a$ no matter what $b$ is. On the other hand, the binding property follows from the observation that in order to open the commitment to $b = 0$, $Q$ needs to announce $y = x$, and in order to open to $b = 1$, he needs to announce $y = x + a$. Thus, in order to open to {\em both}, he must know $x$ {\em and} $x + a$, and thus $a$, which is a contradiction to the no-communication assumption, because $a$ was sent to 
$P$ 
only. 

In the quantum setting, where the dishonest provers are allowed to share an entangled quantum state and can produce $x$ and $y$ by means of performing measurements on their respective parts of the state, the above reasoning for the binding property does not work anymore. Nevertheless, as shown in~\cite{CSST11}, the binding property still holds (though with a weaker parameter). 

\paragraph{\sc Relativistic Commitment Schemes. }
The idea of {\em relativistic commitment schemes}, as introduced by Kent~\cite{Kent99}, is to take a 2-prover commitment scheme as above and enforce the no-communication assumption by means of relativistic effects: Place $P$ and $Q$ spatially far apart, and execute the scheme fast enough, so that 
there is not enough time for them to communicate. The obvious downside of such a relativistic commitment scheme is that the binding property stays alive only for a very short time: The opening has to take place almost immediately after the committing, before the provers have the chance to exchange information. 
This limitation can be circumvented by considering {\em multi-round} schemes, where after the actual commit phase there is a {\em sustain phase}, during which the provers and the verifier keep exchanging messages, and as long as this sustain phase is running, the commitment stays binding (and hiding), until the commitment is finally opened. Such schemes were proposed in \cite{Kent99} and~\cite{Kent05}, but they are rather inefficient, and the security analyses are somewhat informal (e.g., with no formal security definitions) and of asymptotic nature.
Schemes that require quantum communication were also considered and studied
\cite{Kent12,KTHW13,LKB+13} but those were all without sustain
phase.

More recently, Lunghi \etal.~\cite{LKB+15} proposed a new and simple multi-round relativistic commitment scheme, and provided a rigorous security analysis. 
Their scheme works as follows. 
The actual commit protocol is the commit protocol from the Cr{\'e}peau \etal. scheme: $V$ sends a uniformly random string $a_0 \in \set{0,1}^n$ to $P$, who returns $x_0 := y_0 + a_0 \cdot b$. Then, to sustain the commitment, before $P$ has the chance to tell $a_0$ to $Q$, $V$ sends a new uniformly random string $a_1 \in \set{0,1}^n$ to $Q$ who replies with $x_1 := y_1 + a_1 \cdot y_0$, where $y_1 \in \set{0,1}^n$ is another random string shared between $P$ and $Q$, and the multiplication $a_1 \cdot y_0$ is in a suitable finite field. Then, to further sustain the commitment, $V$
sends a new uniformly random string $a_2 \in \set{0,1}^n$ to $P$ who replies with $x_2 := y_2 + a_2 \cdot y_1$, etc. Finally, after the last sustain round where $x_m := y_m + a_m \cdot y_{m-1}$ has been sent to $V$, in order to finally open the commitment, $y_m$ is sent to $V$ (by the other prover). See Figure~\ref{fig:multiround}. In order to verify the opening, $V$ computes $y_{m-1},y_{m-2},\ldots,y_0$ inductively in the obvious way, and checks if $x_0 + y_0 = a_0 \cdot b$. 

\begin{figure}
$$
\begin{array}{lrcccccl}
 & P \qquad & & & V & & & \qquad Q \\[1.6ex]
\hdashline \\[-1.8ex]
\text{commit:} \qquad\qquad & & \quad\longleftarrow\quad & a_0  \\
& x_0 := y_0 + a_0 \cdot b & \longrightarrow & \\[1.5ex]
\hdashline \\[-1.8ex]
\text{sustain:}& & & & \qquad & a_1 & \quad\longrightarrow\quad &  \\
& & & & \qquad &  & \quad\longleftarrow\quad & x_1 := y_1 + a_1 \cdot y_0 \\[2.8ex]
&  & \quad\longleftarrow\quad & a_2  \\
& x_2 := y_2 + a_2 \cdot y_1 & \longrightarrow & \\[2.8ex]
& & & & \qquad & a_3 & \quad\longrightarrow\quad &  \\
& & & & \qquad &  & \quad\longleftarrow\quad & x_3 := y_3 + a_3 \cdot y_2 \\[1.5ex]
\hdashline \\[-1.8ex]
\text{open:}& y_3 & \longrightarrow \\[1.5ex]
\hdashline 
\end{array}
$$
\caption{The Lunghi \etal. multi-round scheme (for $m = 3$). }\label{fig:multiround}
\end{figure}

What is crucial is that in round $i$ (say for odd $i$), when preparing $x_i$, the prover $Q$ must not know $a_{i-1}$, but he is allowed to know $a_1,\ldots,a_{i-2}$. Thus, the execution must be timed in such a way that between subsequent rounds there is not enough time for the provers to communicate, but they may communicate over multiple rounds. 

As for the security of this scheme, it is obvious that the hiding property stays satisfied up to the open phase: Every single message $V$ receives is one-time-pad encrypted. As for the binding property, Lunghi \etal. prove that the scheme with a $m$-round sustain phase is $\eps_m$-binding against classical attacks, where $\eps_m$ satisfies $\eps_0 = 2^{-n}$ (this is just the standard Cr{\'e}peau \etal. scheme) and 
$
\eps_m \leq 2^{-n-1} + \sqrt{\eps_{m-1}} 
$
for $m \geq 1$. Thus, even when reading this recursive formula liberally by ignoring the $2^{-n-1}$ term, we obtain 
$$
\eps_m \lesssim \sqrt[2^m]{\eps_0} = 2^{-\frac{n}{2^m}} \, ,
$$
i.e., the error parameter blows up {\em double exponentially} in $m$.%
\footnote{Lunghi \etal. also provide a more complicated recursive formula for $\eps_m$ that is slightly better, but the resulting blow-up is still double exponential.}
In other words, in order to have a non-trivial $\eps_m$ we need that $n$, the size of the strings that are communicated, is {\em exponential} in $m$. This means that Lunghi \etal. can only afford a very small number of rounds. 
For instance, in their implementation where they can manage $n = 512$ (beyond that, the local computation takes too long), asking for an error parameter $\eps_m$ of approximately $2^{-32}$, they can do $m = 4$ rounds.%
\footnote{Note that \cite{LKB+15} mentions $\eps_m \approx 10^{-5} \approx 2^{-16}$, but this is an error, as communicated to us by the authors, and as can easily be verified. Also, \cite{LKB+15} mentions $m = 5$ rounds, but this is because they include the commit round in their counting, and we do not. }
This allows them to keep a commitment alive for $2\,$ms.

\paragraph{\sc Our Results. }
Our main goal is to improve the bound on the binding parameter of the above multi-round scheme. Indeed, our results show that the binding parameter blows up only {\em linearly} in $m$, rather than double exponentially. Explicitly, our results show that (for classical attacks)
$$
\eps_m \leq (m+1)\cdot 2^{-\frac{n}{2}+2} \, .
$$
Using the same $n$ and error parameter as in the implementation of Lunghi \etal., we can now afford approximately $m = 2^{224}$ rounds. Scaling up the $2\,$ms from the Lunghi \etal. experiment for $4$ rounds gives us a time that is in the order of $10^{56}$ years.
We also show tightness of our bound up to a small constant factor (for even $n$).

We use the following strategy to obtain our improved bound on $\eps_m$. We observe that the first sustain round can be understood as committing on the opening information $y_0$ of the actual commitment, using an extended version of the Cr{\'e}peau \etal. scheme that commits to a {\em string} rather than to a bit. Similarly, the second sustain round can be understood as committing on the opening information $y_1$ of that commitment from the first sustain round, etc. Thus, thinking of the $m=1$ version of the scheme, what we have to prove is that if we have two commitment schemes $\cal S$ and ${\cal S}'$, and we modify the opening phase of $\cal S$ in that we first commit to the opening information (using ${\cal S}'$) and then open that commitment, then the resulting commitment scheme is still binding; note that, intuitively, this is what one would indeed expect. Given such a  composition theorem, we can then apply it inductively and conclude security (i.e. the binding property) of the Lunghi \etal. 
multi-round scheme. 

Our main result is such a general composition theorem, which shows that if $\cal S$ and ${\cal S}'$ are respectively $\eps$- and $\delta$-binding (against classical attacks) then the composed scheme is $(\eps+\delta)$-binding (against classical attacks), under some mild assumptions on $\cal S$ and~${\cal S}'$. Hence, the error parameters simply add up; this is what gives us the linear growth. The proof of our composition theorem crucially relies on new definitions of the binding property of 2-prover commitment schemes, which seem to be handier to work with than the $p_0+p_1 \leq 1 + \eps$ definition as for instance used by Lunghi \etal. Our definitions formalize the following intuitive requirement: After the commit phase, even if the provers are dishonest, there should exist some bit $\hat{b}$ such that opening the commitment to any other bit fails (with high probability). We show that one of our new definitions is equivalent to the $p_0+p_1$-definition, while the other one is strictly stronger. Our result 
holds for both definitions, so we not only obtain a better parameter than Lunghi \etal. but also with respect to a stronger definition, and thus we improve the result also in that direction.

One subtle issue is that the extended version of the Cr{\'e}peau \etal. scheme to strings, as it is used in the sustain phase, is not a fully secure string commitment scheme. The reason is that for {\em any} $y$ that may be announced in the opening phase, there exists a string $s$ such that $x + y = a \cdot s$; as such, the provers can commit to some fixed string, and then can still decide to either open the commitment to that string (by running the opening phase honestly), or to open it to a random string that is out of their control (by announcing a random $y$). We deal with this by also introducing a {\em relaxed} version (which we call \em fairly-binding\em ) of the binding property, which captures this limited freedom for the provers, and we show that it is satisfied by the (extended version of the) Cr{\'e}peau \etal. scheme and that our composition theorem holds for this relaxed version;
finally, we observe that the composed fairly-binding string commitment scheme is a  binding {\em bit} commitment scheme when restricting the domain to a bit. 

As such, we feel that our techniques and insights not only give rise to an improved analysis of the Lunghi \etal. multi-round scheme, but they significantly improve our understanding of the security of 2-prover commitment schemes, and as such are likely to find further applications.

\vspace{-0.5ex}
\paragraph{\sc Open Problems. }
Our work gives rise to a list of interesting and challenging open problems. For instance, our composition theorem only applies to pairs ${\cal S},{\cal S}'$ of commitment schemes of a certain restricted form, e.g., only one prover should be involved in the commit phase (as it is the case in the Cr{\'e}peau \etal. scheme). Our proof crucially relies on this, but there seems to be no fundamental reason for such a restriction. Thus, we wonder if it is possible to generalize our composition theorem to a larger class of pairs of schemes, or, ultimately, to {\em all} pairs of schemes (that ``fit together''). 

In another direction, some of our observations and results generalize immediately to the quantum setting, where the two dishonest provers are allowed to compute their messages by performing measurements on an entangled quantum state, but in particular our main result, the composition theorem, does not generalize. Also here, there seems to be no fundamental reason, and thus, generalizing our composition theorem to the quantum 
setting is an interesting open problem. 
Finally, in order to obtain security of the Lunghi \etal. multi-round scheme against quantum attacks, beyond a quantum version of the composition theorem, one also needs to prove security against quantum attacks of the (extended version of the) original Cr{\'e}peau \etal. scheme as a (fairly-binding) {\em string} commitment scheme.

\vspace{-0.5ex}
\paragraph{\sc Concurrent Work. } In independent and concurrent work, Chakraborty, Chailloux and Leverrier \cite{CCL15} showed (almost) the same linear bound for the Lunghi \etal. scheme, but with respect to the original\,---\,and thus weaker\,---\,notion of security. Their approach is more direct and tailored to the specific scheme; our approach is more abstract and provides more insight, and our result applies much more generally.

\section{Preliminaries}

\subsection{Basic Notation}

\paragraph{\sc Probability Distributions. }
For the purpose of this work, a {\em (probability) distribution} is a 
function $p: {\cal X} \rightarrow [0,1]$, $x \mapsto p(x)$, where $\cal X$ 
is a finite non-empty set, with the property that $\sum_{x \in \cal X} p(x) = 1$. 
For specific choices $x_\circ \in \cal X$, we tend to write $p(x\!=\!x_\circ)$ instead of $p(x_\circ)$. 
For any subset $\Lambda \subset \cal X$, called an {\em event}, the probability $p(\Lambda)$ is naturally 
defined as $p(\Lambda) = \sum_{x \in \Lambda} p(x)$, and it holds that
\begin{equation}
\label{eq:simple_pr_sum}
p(\Lambda) + p(\Gamma) = p(\Lambda \cup \Gamma) + p(\Lambda \cap \Gamma) 
\leq 1 + p(\Lambda \cap \Gamma)
\end{equation}
for all $\Lambda,\Gamma \subset \cal X$, and, more generally, that
\begin{equation}
\label{eq:pr_sum}
\sum_{i=1}^k p(\Lambda_i) 
\leq p(\Lambda_1 \cup \ldots \cup \Lambda_k) + \sum_{i < j} p(\Lambda_i \cap \Lambda_j) 
\leq 1 + \sum_{i < j} p(\Lambda_i \cap \Lambda_j) 
\end{equation}
for all $\Lambda_1,\ldots,\Lambda_k \subset \cal X$. 
For a distribution $p: 
{\cal X} \times {\cal Y} \rightarrow \R$ on two (or more) variables, probabilities like 
$p(x\!=\!y)$, $p(x\!=\!f(y))$, $p(x\!\neq\!y)$ etc. are naturally 
understood as
$$
p(x=y) = p\bigl(\Set{(x,y) \in {\cal X} \times {\cal Y}}{x = y}\bigr) = 
\sum_{x \in {\cal X}, y \in {\cal Y} \atop \text{s.t. } x = y} p(x,y)
$$
etc., and the {\em marginals} $p(x)$ and $p(y)$ are given by 
$p(x) = \sum_y p(x,y)$ and $p(y) = \sum_x p(x,y)$, respectively. 
Vice versa, given two distributions $p(x)$ and $p(y)$, we say that a distribution $p(x,y)$ on two variables is a {\em consistent joint distribution} if the two marginals of $p(x,y)$ coincide with $p(x)$ and $p(y)$, respectively. We will make use of the following property on the existence of a consistent joint distribution that maximizes the probability that $x = y$; the proof is given in the appendix. 

\begin{lemma}
  \label{lm:eps_dist}
Let $p(x)$ and $p(y)$ be two distributions on a common set $\cal X$. Then there exists a consistent joint distribution $p(x,y)$ such that $p(x=y=x_\circ) = \min\set{p(x\!=\!x_\circ),p(y\!=\!x_\circ)}$ for all choices of $x_\circ \in \cal X$. Additionally, $p(x,y)$ satisfies $p(x,y|x\neq y) = p(x|x\neq y) \cdot p(y|x\neq y)$. 
\end{lemma}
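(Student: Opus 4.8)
The plan is to construct the desired joint distribution $p(x,y)$ explicitly by treating the "matching part" and the "mismatching part" separately. First I would set, for each $x_\circ \in \cal X$, $m(x_\circ) := \min\{p(x\!=\!x_\circ), p(y\!=\!x_\circ)\}$ and let $\mu := \sum_{x_\circ} m(x_\circ)$ be the total mass we can put on the diagonal. If $\mu = 1$ then necessarily $p(x) = p(y)$ (since two distributions whose pointwise minimum sums to $1$ must be equal), and we just take $p(x,y)$ supported on the diagonal with $p(x\!=\!y\!=\!x_\circ) = m(x_\circ)$; both marginals are then correct and the factorization condition is vacuous. Otherwise $\mu < 1$, and the "leftover" weights $r(x_\circ) := p(x\!=\!x_\circ) - m(x_\circ)$ and $s(x_\circ) := p(y\!=\!x_\circ) - m(x_\circ)$ are nonnegative and each sum to $1-\mu > 0$. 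The key observation is that for every $x_\circ$ at least one of $r(x_\circ), s(x_\circ)$ is zero (whichever of $p(x\!=\!x_\circ)$, $p(y\!=\!x_\circ)$ achieves the min), so the normalized distributions $\tilde r := r/(1-\mu)$ and $\tilde s := s/(1-\mu)$ have disjoint supports, which guarantees that the product distribution $\tilde r(x)\,\tilde s(y)$ is always off-diagonal.

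Next I would assemble the joint distribution as the convex combination
$$
p(x,y) = \underbrace{\textstyle\sum_{x_\circ} m(x_\circ)\cdot\mathbf{1}[x=y=x_\circ]}_{\text{diagonal part}} \;+\; (1-\mu)\cdot \tilde r(x)\cdot\tilde s(y)\,.
$$
Then I would verify the two required properties. For the marginals: summing over $y$, the diagonal part contributes $m(x_\circ)$ at $x = x_\circ$ and the product part contributes $(1-\mu)\tilde r(x_\circ) = r(x_\circ)$, so $p(x\!=\!x_\circ) = m(x_\circ) + r(x_\circ)$, which is the original $p(x\!=\!x_\circ)$ by definition; symmetrically for $p(y)$. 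For the diagonal probabilities: $p(x\!=\!y\!=\!x_\circ)$ gets $m(x_\circ)$ from the diagonal part and $0$ from the product part (since $\tilde r, \tilde s$ have disjoint support, so $\tilde r(x_\circ)\tilde s(x_\circ) = 0$), giving exactly $\min\{p(x\!=\!x_\circ), p(y\!=\!x_\circ)\}$. Finally, conditioned on $x \neq y$: the diagonal part contributes nothing to this event, so $p(x,y \mid x\neq y)$ is just the product $\tilde r(x)\tilde s(y)$ suitably interpreted, and since $\tilde r(x) = p(x \mid x\neq y)$ and $\tilde s(y) = p(y \mid x\neq y)$ (the diagonal part does not affect these conditional marginals either, as it lives entirely on $x=y$), the factorization $p(x,y\mid x\neq y) = p(x\mid x\neq y)\cdot p(y\mid x\neq y)$ follows.

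The main obstacle — really the only place any thought is needed — is the disjoint-support claim: that for each $x_\circ$, not both $r(x_\circ)$ and $s(x_\circ)$ can be strictly positive. This is immediate from $r(x_\circ)s(x_\circ) = (p(x\!=\!x_\circ) - \min)(p(y\!=\!x_\circ) - \min) = 0$, but it is the structural fact that makes everything work: it simultaneously ensures the product term never lands on the diagonal (so the diagonal probabilities come out exactly as the minimum) and that the conditional marginals match. I should also handle the degenerate case $\mu = 1$ separately as above, and note the trivial sub-case where some $r \equiv 0$ or $s \equiv 0$ but $\mu < 1$ cannot actually occur since $\sum r = \sum s = 1-\mu$. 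Everything else is a routine bookkeeping check of nonnegativity and normalization, which I would state but not belabor.
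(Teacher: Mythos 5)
Your proof is correct and is essentially the same construction as the paper's: the paper introduces an auxiliary event $\Delta$ carrying the pointwise-minimum mass on the diagonal and uses the product of the conditional distributions off $\Delta$, which is exactly your convex combination of the diagonal part $m(x_\circ)$ and the product $(1-\mu)\,\tilde r(x)\tilde s(y)$, with the same disjoint-support observation ensuring $\Delta$ coincides with the event $x=y$. Your explicit handling of the degenerate case $\mu=1$ is a minor presentational difference only.
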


\paragraph{\sc Protocols. }
In this work, we will consider 3-party (interactive) {\em protocols}, where the parties are named $P$, $Q$ and $V$ (the two ``provers'' and the ``verifier''). Such a protocol $\prot_{PQV}$ consists of a triple $(\prot_P,\prot_Q,\prot_V)$ of $L$-round {\em interactive algorithms} for some $L \in \N$. Each interactive algorithm takes an input, and for every round $\ell \leq L$ computes the messages to be sent to the other algorithms/parties in that round as deterministic functions of its input, the messages received in the previous rounds, and the local randomness. In the same way, the algorithms produce their respective outputs after the last round. We write
$$
(out_P\|out_Q\|out_V)\leftarrow \bigl(\prot_P(in_P)\|\prot_Q(in_Q)\|\prot_V(in_V)\bigr)
$$
to denote the execution of the protocol $\prot_{PQV}$ on the respective inputs $in_P, in_Q$ and $in_V$, and that the respective outputs $out_P,out_Q$ and $out_V$ are produced. Clearly, for any protocol $\prot_{PQV}$ and any input $in_P, in_Q, in_V$, the probability distribution $p(out_P,out_Q,out_V)$ of the output is naturally well defined. 

If we want to make the local randomness explicit, we write $\prot_P[\xi_P](in_P)$ etc., and understand that $\xi_P$ is correctly sampled\,---\,without loss of generality, we may assume it to be a uniformly random bit string of sufficient length. Furthermore, we write $\prot_P[\xi_{PQ}](in_P)$ and $\prot_Q[\xi_{PQ}](in_Q)$ to express that $\prot_P$ and $\prot_Q$ use {\em the same} randomness, in which case we speak of {\em joint randomness}.   

We can {\em compose} two interactive algorithms $\prot_P$ and $\prot'_P$ in the obvious way, by applying $\prot'_P$ to the output of $\prot_P$. The resulting interactive algorithm is denoted as $\prot'_P \circ \prot_P$. Composing the respective algorithms of two protocols $\prot_{PQV} = (\prot_P,\prot_Q,\prot_V)$ and $\prot'_{PQV} = (\prot'_P,\prot'_Q,\prot'_V)$ results in the composed protocol $\prot'_{PQV} \circ \prot_{PQV}$. If $\prot_P$ is a {\em non-interactive} algorithm, then $\prot'_{PQV} \circ \prot_{P}$ is naturally understood as the protocol $\prot'_{PQV} \circ \prot_{P} = (\prot'_P \circ \prot_{P},\prot'_Q,\prot'_V)$, and similarly $\prot'_{PQV} \circ \prot_{QV}$ in case $\prot_{QV}$ is a protocol among $Q$ and $V$ only. 

\subsection{2-Prover Commitment Schemes}

\begin{definition}\label{def:com}
A {\em 2-prover (string) commitment scheme} $\cal S$ consists of two interactive protocols $\com_{PQV} = (\com_P,\com_Q,\com_V)$ and $\open_{PQV} = (\open_P,\open_Q,\open_V)$ between the two provers $P$ and $Q$ and the verifier $V$, with the following syntactics. 
The {\em commit protocol} $\com_{PQV}$ uses joint randomness $\rndc_{PQ}$ for $P$ and $Q$ and takes a string $s \in \set{0,1}^n$ as input for $P$ and $Q$ (and independent randomness and no input for~$V$), and it outputs a {\em commitment} $com$ to $V$ and some state information to $P$ and $Q$: 
$$
(state_P\|state_Q\|c)\leftarrow\bigl(\com_P[\rndc_{PQ}](s)\|\com_Q[\rndc_{PQ}](s)\|\com_V(\emptyset)\bigr) \, .
$$
The {\em opening protocol} $\open_{PQV}$ uses joint randomness $\rndo_{PQ}$ and outputs a string or a rejection symbol to $V$, and nothing to $P$ and $Q$: 
$$
(\emptyset\|\emptyset\|s) \leftarrow \bigl(\open_P[\rndo_{PQ}](state_P)\|\open_Q[\rndo_{PQ}](state_Q)\|\open_V(c)\bigr)
$$
with $s \in \set{0,1}^n \cup \set{\bot}$. 
The set $\set{0,1}^n$ is called the {\em domain} of $\cal S$; if $n = 1$ then we refer to $\cal S$ as a {\em bit} commitment scheme instead, and we tend to use $b$ rather than $s$ to denote the committed bit. 
\end{definition}

\begin{remark}
\label{rem:entire_com}
By convention, we assume throughout the paper that the commitment $c$ output by $V$ equals the {\em communication} that
takes place between $V$ and the provers during the commit phase. This is without loss of generality since, in general, $c$ is computed as a (possibly randomized) function of the communication, which $V$ just as well can apply in the opening phase. 
\end{remark}
\begin{remark}
Note that we specify that $P$ and $Q$ use {\em fresh} joint randomness $\rndo_{PQ}$ in the opening phase, and, if necessary, the randomness $\rndc_{PQ}$ from the commit phase can be ``handed over'' to the opening phase via $state_P$ and $state_Q$; this will be convenient later on. Alternatively, one could declare that $P$ and $Q$ {\em re-use} the joint randomness from the commit phase.  
\end{remark}
Whenever we refer to such a 2-prover commitment scheme, we take it as understood that the scheme is complete and hiding, as defined below, for ``small'' values of $\gamma$ and $\delta$. Since our focus will be on the binding property, we typically do not make the parameters $\eta$ and $\delta$ explicit. 

\begin{definition}
A 2-prover commitment scheme is {\em $\gamma$-complete} if in an honest execution $V$'s output $s$ of $\open_{PQV}$ equals $P$ and $Q$'s input $s$ to $\com_{PQV}$ except with probability $\eta$, for any choice of $P$ and $Q$'s input $s \in \bin^n$.
\end{definition}

The standard definition for the hiding property is as follows:
\begin{definition}
\label{def:hiding_standard}
A 2-prover commitment scheme is {\em $\delta$-hiding} if for any commit
strategy $\dcom_V$ and any two strings $s_0$ and $s_1$, the distribution of the
commitments $c_0$, $c_1$, produced as
$$
  (state_P\|state_Q\|c_b)\leftarrow(\com_P[\rndc_{PQ}](s_b)\|\com_Q[\rndc_{PQ}](s_b)\|\dcom_V(\emptyset))\text{,} b = 0,1
$$
have statistical distance at most $\delta$. A $0$-hiding scheme is also called
{\em perfectly hiding}. 
\end{definition}

Defining the binding property is more subtle. First, note that an attack against the binding property consists of an ``allowed'' commit strategy $\dcom_{PQ} = (\dcom_P,\dcom_Q)$ and an ``allowed'' opening strategy $\dopen_{PQ} = (\dopen_P,\dopen_Q)$ for $P$ and $Q$. 
Any such attack fixes $p(s)$, the distribution of $s \in \set{0,1}^n \cup \set{\bot}$ that is output by $V$ after the opening phase, in the obvious way. 

What exactly ``allowed'' means may depend on the scheme and needs to be specified. 
Typically, in the 2-prover setting, we only allow strategies $\dcom_{PQ}$ and $\dopen_{PQ}$ with {\em no communication} at all between the two provers, but we may also be more liberal and allow some {\em well-controlled} communication, as in the Lunghi \etal . multi-round scheme.
Furthermore, in this work, we focus on {\em classical} attacks, where $\dcom_{P},\dcom_{Q},\dopen_{P}$ and $\dopen_{Q}$ are classical interactive algorithms as specified in the previous section, with access to joint randomness. But one could also consider {\em quantum} attacks, in which the provers can perform measurements on an entangled quantum state.
Our main result holds for classical attacks only, and so the unfamiliar reader can safely ignore the possibility of quantum attacks, but some of our insights also apply to quantum attacks. 

A somewhat accepted definition for the binding property of a 2-prover {\em bit} commitment scheme, as it is for instance used in~\cite{CSST11,LKB+15,FF15} (up to the factor $2$ in the error parameter), is as follows. Here, we assume it has been specified which attacks are {\em allowed}, e.g., those where $P$ and $Q$ do not communicate during the course of the scheme. 

\begin{definition}\label{def:p0+p1}
A 2-prover bit commitment scheme is {\em $\eps$-binding in the sense of $p_0 + p_1 \leq 1+2\eps$} if for every allowed commit strategy $\dcom_{PQ}$, and for every pair of allowed opening strategies $\dopen^0_{PQ}$ and \smash{$\dopen^1_{PQ}$}, which fix distributions $p(b_0)$ and $p(b_1)$ for $V$'s respective outputs, it holds that 
$$
p(b_0\!=\!0) + p(b_1\!=\!1) \leq 1+2\eps \, .
$$ 
\end{definition}
In the literature (see~e.g.~\cite{CSST11} or~\cite{LKB+15}), the two probabilities $p(b_0\!=\!0)$ and $p(b_1\!=\!1)$ above are usually referred to as $p_0$ and $p_1$, respectively. 

\subsection{The $\CHSH^n$ Scheme}

Our main example is the bit commitment scheme by Cr{\'e}peau \etal.~\cite{CSST11} we mentioned in the introduction, and which works as follows. The commit phase $\com_{PQV}$ instructs $V$ to sample and send to $P$ a uniformly random $a \in \set{0,1}^n$, and it instructs $P$ to return $x:= r + a \cdot b$ to $V$, where $r$ is the joint randomness, uniformly distributed in $\bin^n$, and $b$ is the bit to commit to, and the opening phase $\open_{PQV}$ instructs $Q$ to send $y := r$ to $V$, and $V$ outputs the (smaller) bit $b$ that satisfies $x + y = a \cdot b$, and $b := \bot$ in case no such bit exists.
Note that the provers in this scheme use the same randomness in the
commit and opening phase; thus, formally, $Q$ needs to output the shared
randomness $r\leftarrow\rndc_{PQ}$ as $state_Q$. The opening phase uses no fresh
randomness.

It is easy to see that this scheme is $2^{-n}$-complete and perfectly hiding (completeness fails in case $a = 0$). 
For {\em classical} provers that do not communicate at all, the scheme is $2^{-n-1}$-binding in the sense of $p_0 + p_1 \leq 1+2^{-n}$, i.e. according to Definition~\ref{def:p0+p1}. As for {\em quantum} provers, Cr{\'e}peau \etal. showed that the scheme is $2^{-n/2}$-binding; this was recently minorly improved to $2^{-(n+1)/2}$ by Sikora, Chailloux and Kerenidis~\cite{SCK14}. 

We also want to consider an extended version of the scheme, where the bit $b$ is replaced by a string $s \in \set{0,1}^n$ in the obvious way (where the multiplication $a \cdot s$ is then understood in a suitable finite~field), and we want to appreciate this extension as a 2-prover {\em string} commitment scheme. 
However, it is a priori not clear what is a suitable definition for the binding property, especially because for this particular scheme, the dishonest provers can always honestly commit to a string $s$, and can then decide to correctly open the commitment to $s$ by announcing $y := r$, or open to a {\em random} string by announcing a randomly chosen $y$\,---\,any $y$ satisfies $x + y = a \cdot s$ for {\em some} $s$ (unless $a = 0$, which almost never happens).%
\footnote{This could easily be prevented by requiring $Q$ to announce $s$ (rather than letting $V$ compute it), but we want the information announced during the opening phase to fit into the domain of the commitment scheme. }

Due to its close relation to the CHSH game~\cite{CHSH69}, in particular to the arbitrary-finite-field version considered in~\cite{BS15}, we will refer to this {\em string} commitment scheme as $\CHSH^n$.

\section{On the Binding Property of 2-Prover Commitment Schemes}

We introduce new definitions for the binding property of 2-prover commitment schemes. In the case of {\em bit} commitment schemes, they imply Definition~\ref{def:p0+p1}, as we will show. Although not necessarily simpler, we feel that our definitions are closer to the intuition of what is expected from a commitment scheme, and as such easier to work with. Indeed, the proofs of our composition results are heavily based on our new definitions. 
Also, our new notions are more flexible in terms of tweaking it; for instance, we modify them to obtain a {\em relaxed} notion for the binding property, which captures the binding property that is satisfied by the string commitment scheme $\CHSH^n$. 

Throughout this section, when quantifying over attacks against (the binding property of) a scheme, it is always understood that there is a notion of {\em allowed} attacks for that scheme (e.g., all attacks for which $P$ and $Q$ do not communicate), and that the quantification is over all such allowed attacks. Also, even though our focus is on classical attacks,
Proposition~\ref{prop:fairlyweaktostrong} and Theorem~\ref{thm:equiv_weak} also
apply to quantum attacks.

\subsection{Defining The Binding Property}\label{sec:defs}

Intuitively, we say that a scheme is binding if after the commit phase there
exists a string $\hat{s}$ so that no matter what the provers do in the opening
phase, the verifier will output either $s = \hat{s}$ or $s = \bot$ (except with
small probability). We consider two definitions of the binding property which
interpret this intuitive requirement in two different ways. In the first
definition, which we introduce in this section, $\hat{s}$ is a function of the
provers' (combined) view immediately after the commit phase. In the second one,
which we introduce in Section~\ref{sec:defs_weak}, $\hat{s}$ is specified by its distribution only. Both of these definitions admit a composition theorem.

\begin{definition}[Binding property]
  \label{def:bind}
A 2-prover commitment scheme $\mathcal S$ is \em $\eps$-binding \em if for every
commit strategy $\dcom_{PQ}[\drndc_{PQ}]$ there exists a function $\hat{s}(\drndc_{PQ},c)$
of the joint randomness \smash{$\drndc_{PQ}$} and the commitment%
\footnote{Recall that by convention (Remark~\ref{rem:entire_com}), $c$ equals the communication between $V$ and the provers during the commit phase. }
$c$ such that
for every opening strategy $\dopen_{PQ}$ it holds that
$p(s \neq \hat{s}(\drndc_{PQ},c) \land s\neq\bot) \leq \eps$. In short: 
\begin{equation}
\forall\,\dcom_{PQ} \; \exists \, \hat{s}(\drndc_{PQ},c) \; \forall \, \dopen_{PQ} : p(s \neq \hat{s}\land s\neq\bot) \leq \eps \, .
\end{equation}
\end{definition}
The string commitment scheme $\CHSH^n$ does {\em not} satisfy this definition (the bit commitment version does, as we will show): After the commit phase, the provers can still decide to open the commitment to a {\em fixed} string, chosen before the commit phase, or to a {\em random} string that is out of their control. 
We capture this by the following relaxed version of the binding property: We allow $V$'s output $s$ to be different from $\hat{s}$ and $\bot$, but in this case the provers should have little control over $s$: For any fixed {\em target string} $s_\circ$, it should be unlikely that $s = s_\circ$. Formally, this is captured as follows; we will show in Section~\ref{sec:CSST} that $\CHSH^n$ is fairly-binding in this sense.

\begin{definition}[Fairly binding property]
  \label{def:fairly}
A 2-prover commitment scheme $\mathcal S$ is $\eps$-\em fairly-binding \em if for every commit strategy $\dcom_{PQ}[\drndc_{PQ}]$ there exists a function
$\hat{s}(\drndc_{PQ},c)$ such that for every opening strategy $\dopen_{PQ}[\drndo_{PQ}]$ 
and all functions $s_\circ(\drndc_{PQ}, \drndo_{PQ})$ it holds that
$p(s \neq \hat{s}(\drndc_{PQ},c) \,\land\, s = s_\circ(\drndc_{PQ}, \drndo_{PQ})) \leq \eps$. In short: 
\begin{equation}
\label{eq:fairly_bind}
\forall\, \dcom_{PQ} \; \exists \, \hat{s}(\drndc_{PQ},c) \; \forall \, \dopen_{PQ} \; \forall \, s_\circ(\drndc_{PQ},\drndo_{PQ}) : p(s \neq \hat{s} \,\land\, s = s_\circ) \leq \eps \, .
\end{equation}
\end{definition}

\begin{remark}\label{rem:det}
By means of standard techniques, one can easily show that
it is sufficient for the (fairly) binding property to consider
\em deterministic \em provers. In this case, $\hat{s}$ is a function of $c$
only, and, in the case of fairly-binding, $s_\circ$ runs over all {\em fixed}
strings.
\end{remark}

\begin{remark}
\label{rem:bit_equiv}
Clearly, the binding property implies the fairly binding property. Furthermore, in the case of {\em bit} commitment schemes it obviously holds that $p(b \neq \hat{b} \land b \neq \bot) = p(b \neq \hat{b} \land b = 0) + p(b \neq \hat{b} \land b = 1)$, and thus the fairly-binding property implies the binding property with a factor-2 loss in the parameter. Furthermore, every fairly-binding {\em string} commitment scheme gives rise to a binding {\em bit} commitment scheme in a natural way, as shown by the following proposition. 
\end{remark}

\begin{proposition}\label{prop:fairlytostrong}
Let $\mathcal S$ be a $\eps$-fairly-binding string commitment scheme.
Fix any two distinct strings $s_0, s_1 \in \bin^n$ and consider the
bit-commitment scheme $\mathcal S'$ obtained as follows. To commit to
$b \in \bin$, the provers commit to $s_b$ using $\cal S$, and in the opening
phase $V$ checks if $s = s_b$ for some bit $b \in \bin$ and outputs this bit if
it exists and else outputs $b = \bot$. Then, $\mathcal S'$ is a
$2\eps$-binding bit commitment scheme.
\end{proposition}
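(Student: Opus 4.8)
The plan is to reduce the binding property of $\mathcal S'$ directly to the fairly-binding property of $\mathcal S$. Fix an allowed commit strategy $\dcom'_{PQ}$ for $\mathcal S'$ attacking the bit commitment. Since $\mathcal S'$ commits to $b$ by committing to $s_b$ under $\mathcal S$, this is literally an allowed commit strategy $\dcom_{PQ}$ for $\mathcal S$ (the provers just feed $s_b$ in; the commit protocol is unchanged). Apply Definition~\ref{def:fairly} to $\dcom_{PQ}$ to obtain the function $\hat{s}(\drndc_{PQ},c)$. From this I define the candidate bit $\hat{b}(\drndc_{PQ},c)$ for $\mathcal S'$ by: $\hat b = 0$ if $\hat s = s_0$, $\hat b = 1$ if $\hat s = s_1$, and $\hat b$ arbitrary (say $0$) otherwise. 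By Remark~\ref{rem:det} it suffices to treat deterministic provers, so $\hat s$ and $\hat b$ are functions of $c$ alone and $s_\circ$ ranges over fixed strings.

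Now fix any allowed opening strategy $\dopen'_{PQ}$ for $\mathcal S'$; again this is an allowed opening strategy $\dopen_{PQ}$ for $\mathcal S$, producing the same distribution on $V$'s $\mathcal S$-output $s$. Recall $V$ in $\mathcal S'$ outputs $b\in\bin$ iff $s = s_b$, and $\bot$ otherwise. The bad event for $\mathcal S'$ is $\{b \neq \hat b \land b \neq \bot\}$. I will bound its probability by splitting according to whether $b=0$ or $b=1$:
\begin{equation}
p(b \neq \hat b \land b \neq \bot) = p(b = 0 \land \hat b \neq 0) + p(b = 1 \land \hat b \neq 1)\,.
\end{equation}
Consider the first term. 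If $b = 0$ then $s = s_0$; and $\hat b \neq 0$ means $\hat s \neq s_0$. Hence $\{b = 0 \land \hat b \neq 0\} \subseteq \{s = s_0 \land s \neq \hat s\}$, which by Definition~\ref{def:fairly} applied with the fixed target string $s_\circ = s_0$ has probability at most $\eps$. Symmetrically $\{b = 1 \land \hat b \neq 1\} \subseteq \{s = s_1 \land s \neq \hat s\}$ has probability at most $\eps$ using $s_\circ = s_1$. Adding gives $p(b \neq \hat b \land b \neq \bot) \leq 2\eps$, which is exactly the $2\eps$-binding claim for $\mathcal S'$.

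I do not anticipate a serious obstacle here; the proposition is essentially a bookkeeping argument. The only point requiring a little care is the containment step: one must check that $\hat b \neq 0$ genuinely forces $\hat s \neq s_0$, which holds because $s_0 \neq s_1$ and $\hat b$ was defined to equal $0$ whenever $\hat s = s_0$. A secondary subtlety is making sure the reduction respects the ``allowed attacks'' convention — since $\mathcal S'$ uses $\mathcal S$'s commit and opening protocols verbatim, any communication pattern allowed for $\mathcal S'$ is allowed for $\mathcal S$, so the quantifiers line up. One should also note in passing that $\mathcal S'$ inherits completeness and hiding from $\mathcal S$ (completeness because honest opening of $s_b$ yields $s = s_b$ and then $V$ recovers $b$; hiding because the commitment distribution only depends on the committed $\mathcal S$-string), so that $\mathcal S'$ is indeed a bona fide bit commitment scheme.
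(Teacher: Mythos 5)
Your proof is correct and follows essentially the same route as the paper's: reduce the attack on $\mathcal S'$ to an attack on $\mathcal S$, define $\hat b$ from $\hat s$ via the comparison with $s_0,s_1$, and bound the two bad events $\{b=0\land\hat b\neq 0\}$ and $\{b=1\land\hat b\neq 1\}$ by the fairly-binding guarantee with fixed targets $s_\circ=s_0$ and $s_\circ=s_1$, giving $2\eps$. The only (immaterial) difference is your choice of $\hat b$ when $\hat s\notin\{s_0,s_1\}$ (you take $0$, the paper takes $1$), which does not affect the argument.
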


\begin{proof}
Fix some commit strategy $\dcom_{PQ}$ for $\mathcal S'$ and note that it
can also be used to attack $\mathcal S$. Thus, there exists a function
$\hat{s}(\drndc_{PQ},c)$ as in Definition \ref{def:fairly}. We define
$$
  \hat{b}(\drndc_{PQ},c) =
  \begin{cases}
    0 &\text{ if }\hat{s}(\drndc_{PQ},c) = s_0\\
    1 &\text{ otherwise}
  \end{cases}
$$
Now fix an opening strategy $\dopen _{PQ}$ for $\mathcal S'$, which again is
also a strategy against $\mathcal S$. Thus, we have
$p(\hat{s} \neq s = s_\circ)\leq \eps$ for any $s_\circ$
(and in particular $s_\circ = s_0$ or $s_1$).
This gives us
\begin{align*}
  p(\bhat \neq b \neq \bot)
  &= p(\bhat = 1\land b = 0) + p(\bhat = 0 \land b = 1)\\
  &= p(\hat{s} \neq s_0\land s = s_0) + p(\hat{s} = s_0 \land s = s_1)\\
  &\leq p(\hat{s} \neq s_0\land s = s_0) + p(\hat{s} \neq s_1 \land s = s_1) \, ,\\
  &\leq 2\eps
\end{align*}
and thus $\mathcal S'$ is a $2\eps$-binding bit-commitment scheme.\qed
\end{proof}

\begin{remark}
\label{rem:genfairlytostrong}
The proof of Proposition~\ref{prop:fairlytostrong} generalizes in a
straightforward way to $k$-bit string commitment schemes: Given a
$\eps$-fairly-binding $n$-bit string commitment scheme ${\cal S}$, for $k < n$,
we define a $k$-bit string commitment scheme ${\cal S}_k$ as follows: To commit
to a $k$-bit string, the provers pad the string with $n-k$ zeros and then
commit to the padded string using $\cal S$. In the opening phase, the verifier
outputs the first $k$ bits of $s$ if the remaining bits in $s$ are all zeros,
and $\bot$ otherwise. Then, ${\cal S}'$ is $2^k\eps$-binding.
\end{remark}

\subsection{The Weak Binding Property}
\label{sec:defs_weak}

Here, we introduce yet another definition for the binding property. It is similar in spirit to Definition~\ref{def:bind}, but weaker. One advantage of this weaker notion is that it is also meaningful when considering quantum attacks, whereas Definition~\ref{def:bind} is not. 
In the subsequent section, we will see that for {\em bit} commitment schemes, this weaker notion of the binding property is equivalent to Definition~\ref{def:p0+p1}. 

\begin{definition}[Weak binding property]
  \label{def:weak}
A 2-prover commitment scheme $\mathcal S$ is \em $\eps$-weak-binding \em if for
all commit strategies $\dcom _{PQ}$ there exists a distribution $p(\hat{s} )$
such that for every opening strategy $\dopen _{PQ}$ (which then fixes the
distribution $p(s)$ of $V$'s output $s$) there is a consistent joint
distribution $p(\hat{s}, s)$ such that
$p(s \neq \hat{s} \land s\neq\bot) \leq \eps$. In short: 
\begin{equation}
\forall\,\dcom_{PQ} \; \exists \, p(\hat{s}) \; \forall \, \dopen_{PQ} \; \exists \, p(\hat{s},s) : p(s \neq \hat{s}\land s\neq\bot) \leq \eps \, .
\end{equation}
\end{definition}
We also consider a related, i.e., ``fairly'', version of this binding property,
similar to Definition~\ref{def:fairly}.
\begin{definition}[Fairly weak binding property]
  \label{def:fairly_weak}
A 2-prover commitment scheme $\mathcal S$ is $\eps$-\em fairly-weak-binding \em if for all commit strategies $\dcom _{PQ}$ there exists a distribution $p(\hat{s} )$ such that for every 
  opening strategy $\dopen _{PQ}$ (which then fixes the distribution $p(s)$ of $V$'s output $s$) there
  is a consistent joint distribution $p(\hat{s}, s)$ so that for all $s_\circ\in\bin^n$ it holds that
  $p(s \neq \hat{s} \land s = s_\circ) \leq \eps$. In short: 
\begin{equation}
\label{eq:fairly_weak_bind}
\forall\, \dcom_{PQ} \; \exists \, p(\hat{s}) \; \forall \, \dopen_{PQ} \; \exists \, p(\hat{s},s) \; \forall \, s_\circ  : p(s \neq \hat{s} \,\land\, s = s_\circ) \leq \eps \, .
\end{equation}
\end{definition}

\begin{remark}
\label{rem:weak_and_strong}
Remarks~\ref{rem:det} and \ref{rem:bit_equiv} also hold for the
weak binding properties. Furthermore, it is easy to see that the binding and fairly-binding
properties imply their weak counterparts.
\end{remark}

\begin{proposition}\label{prop:fairlyweaktostrong}
Let $\mathcal S$ be a $\eps$-fairly-weak-binding string commitment scheme
and define ${\cal S}'$ as in Proposition~\ref{prop:fairlytostrong}.
Then, $\mathcal S'$ is a $2\eps$-weak-binding bit commitment scheme.
\end{proposition}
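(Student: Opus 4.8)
The plan is to mimic the proof of Proposition~\ref{prop:fairlytostrong}, the only new twist being that we now work with distributions rather than functions, and hence must explicitly exhibit the consistent joint distribution $p(\bhat,b)$ demanded by Definition~\ref{def:weak}. I would first fix an arbitrary commit strategy $\dcom_{PQ}$ for $\mathcal S'$, observe that it is simultaneously a commit strategy for $\mathcal S$, and invoke $\eps$-fairly-weak-binding of $\mathcal S$ to get a distribution $p(\hat s)$ as in Definition~\ref{def:fairly_weak}. Then I would introduce the two deterministic maps $\beta\colon \bin^n\cup\set{\bot}\to\bin$ with $\beta(s_0)=0$ and $\beta(t)=1$ for $t\neq s_0$, and $\gamma\colon \bin^n\cup\set{\bot}\to\bin\cup\set{\bot}$ with $\gamma(s_0)=0$, $\gamma(s_1)=1$ and $\gamma(t)=\bot$ otherwise (well-defined since $s_0\neq s_1$); here $\gamma$ is precisely the rule by which $V$ turns the output $s$ of the underlying $\mathcal S$-opening into its own output $b$ in $\mathcal S'$. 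The distribution we commit to is $p(\bhat)$, defined as the $\beta$-image of $p(\hat s)$, i.e.\ $p(\bhat\!=\!\hat\beta)=\sum_{t\,:\,\beta(t)=\hat\beta}p(\hat s\!=\!t)$.

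Next, given an opening strategy $\dopen_{PQ}$ for $\mathcal S'$ --- which is also an opening strategy for $\mathcal S$ --- let $p(s)$ denote the induced distribution of $V$'s output in the underlying $\mathcal S$-opening, so that $V$'s output $b$ in $\mathcal S'$ has distribution $p(b)$ equal to the $\gamma$-image of $p(s)$. Fairly-weak-binding of $\mathcal S$ supplies a consistent joint distribution $p(\hat s,s)$ with $p(s\neq\hat s\land s=s_\circ)\leq\eps$ for all $s_\circ$. I would take $p(\bhat,b)$ to be the image of $p(\hat s,s)$ under $(\hat s,s)\mapsto(\beta(\hat s),\gamma(s))$; since $\beta$ acts on the first and $\gamma$ on the second coordinate only, the two marginals of $p(\bhat,b)$ are the $\beta$-image of $p(\hat s)$ and the $\gamma$-image of $p(s)$, i.e.\ exactly $p(\bhat)$ and $p(b)$, so $p(\bhat,b)$ is a consistent joint distribution.

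The remaining bound is then verbatim as in Proposition~\ref{prop:fairlytostrong}: from $p(b\neq\bhat\land b\neq\bot)=p(\bhat\!=\!1\land b\!=\!0)+p(\bhat\!=\!0\land b\!=\!1)$, unravelling the pushforward gives $p(\bhat\!=\!1\land b\!=\!0)=p(\hat s\neq s_0\land s=s_0)\leq\eps$ (fairly-weak-binding with $s_\circ=s_0$) and $p(\bhat\!=\!0\land b\!=\!1)=p(\hat s=s_0\land s=s_1)\leq p(\hat s\neq s_1\land s=s_1)\leq\eps$ (using $s_0\neq s_1$, and fairly-weak-binding with $s_\circ=s_1$), so $p(b\neq\bhat\land b\neq\bot)\leq 2\eps$.

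The only step I expect to need a moment's care --- and it is not really an obstacle --- is verifying that the $\bhat$-marginal of the constructed $p(\bhat,b)$ truly coincides with the distribution $p(\bhat)$ fixed \emph{before} the opening strategy was seen; this holds because $\bhat=\beta(\hat s)$ is a fixed function of $\hat s$ and $p(\bhat)$ was defined as exactly the $\beta$-push-forward of $p(\hat s)$, so that marginal is independent of the choice of $p(\hat s,s)$. Everything else is a transcription of the bit-commitment argument, and the generalization along the lines of Remark~\ref{rem:genfairlytostrong} goes through the same way, pushing forward through the corresponding padding/truncation maps.
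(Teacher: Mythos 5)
Your proposal is correct and follows essentially the same route as the paper's proof: you push forward $p(\hat s)$ and the consistent joint $p(\hat s,s)$ through the deterministic maps $\hat s\mapsto\hat b$ and $s\mapsto b$ (the paper writes this push-forward as $p(\hat b,b):=\sum_{\hat s,s}p(\hat b,\hat s)\,p(s,b|\hat s)$), and then reuses the bit-level bound from Proposition~\ref{prop:fairlytostrong}. Your explicit check that the $\hat b$-marginal is fixed before the opening strategy is chosen is exactly the consistency point the paper relies on.
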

\begin{proof}
The proof of Proposition~\ref{prop:fairlytostrong} can be easily adapted: Let
$p(\hat{s})$ be as required by Definition~\ref{def:fairly_weak}. We define
$p(\hat{b})$ by taking the marginal of $p(\hat{s}, \hat{b})$ where
$\hat{b} = 0$ if $\hat{s} = s_0$, and $\hat{b} = 1$ otherwise. An opening
strategy $\dopen_{PQ}$ for ${\cal S}'$ can also be viewed as a strategy for
${\cal S}$. As such, there is a joint distribution $p(\hat{s}, s)$ as required
by Definition~\ref{def:weak} which we can extend to $p(\hat{s}, s, b)$ by
setting $b = 0$ if $s = s_0$, $b = 1$ if $s = s_1$ and $b=\bot$ otherwise. We
define
$p(\hat{b}, b) := \sum_{\hat{s}, s} p(\hat{b}, \hat{s})\cdot p(s, b|\hat{s})$.
As in the proof of Proposition~\ref{prop:fairlytostrong}, one can easily check
that $p(\hat{b}\neq b\neq\bot) \leq 2\eps$ holds.
\end{proof}

\subsection{Relations Between The Definitions}
\label{sec:equiv}

Here, we show that in case of {\em bit} commitment schemes, the weak binding property as introduced in Definition~\ref{def:weak} above is actually {\em equivalent} to the $(p_0+p_1)$-definition. Even though our focus is on classical attacks, the proof immediately carries over to quantum attacks as well. 
\begin{theorem}\label{thm:equiv_weak}
A 2-prover bit-commitment scheme is $\eps$-binding in the sense of
$p_0+p_1\leq 1+2\eps$ if and only if it is $\eps$-weak-binding. 
\end{theorem}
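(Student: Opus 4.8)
The plan is to prove the two directions of the equivalence separately, exploiting the fact (Remark~\ref{rem:det}) that it suffices to consider deterministic provers, so that $\hat{b}$ becomes a function of $c$ alone and, in the weak-binding definition, $p(\hat{b})$ is a distribution over $\{0,1\}$.

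\textbf{Weak-binding $\Rightarrow$ $p_0+p_1 \leq 1+2\eps$.} Suppose $\mathcal S$ is $\eps$-weak-binding. Fix an allowed commit strategy $\dcom_{PQ}$ and the two opening strategies $\dopen^0_{PQ}, \dopen^1_{PQ}$ from Definition~\ref{def:p0+p1}, fixing distributions $p(b_0)$ and $p(b_1)$. Let $p(\hat b)$ be the distribution guaranteed by weak-binding for $\dcom_{PQ}$. Applying weak-binding to $\dopen^0_{PQ}$ gives a consistent joint $p(\hat b, b_0)$ with $p(b_0 \neq \hat b \land b_0 \neq \bot) \leq \eps$, hence $p(\hat b = 1 \land b_0 = 0) \leq \eps$; since the marginal on $\hat b$ is $p(\hat b)$, we get $p(b_0 = 0) \leq p(\hat b = 0) + \eps$. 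Symmetrically, applying weak-binding to $\dopen^1_{PQ}$ yields $p(b_1 = 1) \leq p(\hat b = 1) + \eps$. Adding these and using $p(\hat b = 0) + p(\hat b = 1) = 1$ gives $p(b_0 = 0) + p(b_1 = 1) \leq 1 + 2\eps$, as required.

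\textbf{$p_0+p_1 \leq 1+2\eps$ $\Rightarrow$ weak-binding.} This is the direction I expect to be the main obstacle, since we must manufacture a single distribution $p(\hat b)$ that works against \emph{every} opening strategy, using only the two-opening-strategy guarantee. Fix a deterministic commit strategy $\dcom_{PQ}$; I would like to define $p(\hat b)$ from the ``best achievable'' opening probabilities. Concretely, let $p_0^* := \sup_{\dopen_{PQ}} p(b = 0)$ and $p_1^* := \sup_{\dopen_{PQ}} p(b = 1)$ be the suprema over allowed opening strategies of the probability that $V$ outputs the respective bit; the $(p_0+p_1)$-bound says $p_0^* + p_1^* \leq 1 + 2\eps$. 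Now set $p(\hat b = 1) := \min\{1, \max\{0, p_1^* - \eps\}\}$ and $p(\hat b = 0) := 1 - p(\hat b = 1)$ (some clamping/normalization may be needed; one should check $p(\hat b = 0) \geq p_0^* - \eps$ as well, which follows from $p_0^* \leq 1 + 2\eps - p_1^*$). Given an arbitrary opening strategy $\dopen_{PQ}$ fixing $p(b)$, we have $p(b = 0) \leq p_0^*$ and $p(b = 1) \leq p_1^*$, hence $p(b = 0) \leq p(\hat b = 0) + \eps$ and $p(b = 1) \leq p(\hat b = 1) + \eps$. I then need to produce a consistent joint $p(\hat b, b)$ with $p(b \neq \hat b \land b \neq \bot) \leq \eps$. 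Here I invoke Lemma~\ref{lm:eps_dist}: apply it to the distributions $p(\hat b)$ and $p(b)$ restricted appropriately on the common domain $\{0, 1, \bot\}$ (extending $p(\hat b)$ by $p(\hat b = \bot) = 0$), obtaining a joint with $p(\hat b = b = \beta) = \min\{p(\hat b = \beta), p(b = \beta)\}$ for each $\beta$. Then $p(b \neq \hat b \land b \neq \bot) = \sum_{\beta \in \{0,1\}} \bigl(p(b = \beta) - \min\{p(\hat b = \beta), p(b = \beta)\}\bigr) = \sum_{\beta \in \{0,1\}} \max\{0, p(b=\beta) - p(\hat b = \beta)\} \leq \eps + \eps$? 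That would only give $2\eps$, so I must be more careful: the two slacks $p(b=0) - p(\hat b = 0)$ and $p(b=1) - p(\hat b = 1)$ cannot both be positive and large, because $p(b=0) + p(b=1) \leq 1 = p(\hat b = 0) + p(\hat b = 1)$ forces the positive parts to sum to at most the negative part's magnitude; combined with each being at most $\eps$, a short case analysis (depending on the sign of $p(b=0) - p(\hat b = 0)$) gives that the sum of positive slacks is at most $\eps$. This uses crucially that $p(\hat b = \bot) = 0$, so no mass of $b$ needs to match against $\bot$.

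The remaining routine points are: checking that the clamping of $p(\hat b)$ never destroys the inequalities $p(\hat b = \beta) \geq p_\beta^* - \eps$ (using $p_0^* + p_1^* \leq 1 + 2\eps$ and $p_\beta^* \leq 1$); verifying that the suprema are attained or can be approached, so the argument goes through with a limiting argument or by noting the strategy space can be taken finite for deterministic provers with bounded-length randomness; and observing that the whole argument is insensitive to whether the provers are classical or quantum, since it only manipulates the output distributions $p(b)$ and the scalar bound $p_0 + p_1 \leq 1 + 2\eps$ — this gives the claimed carry-over to quantum attacks.
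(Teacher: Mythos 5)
Your proof is correct and follows essentially the same route as the paper: for the harder direction you build $p(\hat b)$ by shifting the maximal opening probabilities down by $\eps$ (the paper shifts both symmetrically by $\eps'$ where $p_0+p_1=1+2\eps'$), couple via Lemma~\ref{lm:eps_dist}, and use the key observation that the two slacks $p(b\!=\!i)-p(\hat b\!=\!i)$ cannot both be positive; the converse direction is the same summation argument as in the paper. The remaining details you flag (clamping, suprema vs.\ attained maxima, the sign case analysis) all check out, so no gap remains.
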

\begin{proof}
First, consider a scheme that is $\eps$-binding according to Definition~\ref{def:p0+p1}. Fix a commit strategy $\dcom _{PQ}$ and opening strategies
$\dopen _{PQ}^0$ and $\dopen _{PQ}^1$ so that $p_0 = p(b_0=0)$ and
$p_1 = p(b_1 = 1)$ are maximized, where $b_i \in \{ 0, 1, \bot \}$ is $V$'s
output when the dishonest provers use opening strategy $\dopen_{PQ}^i$. Let
$p_0 + p_1 = 1+2\eps'$. Since the scheme is $\eps$-binding, we have
$\eps'\leq \eps$. We define the distribution \smash{$p(\bhat)$} as
$p(\bhat = 0) := p_0-\eps'$ and $p(\bhat = 1) := p_1 - \eps'$. To see that this
is indeed a probability distribution, note that $p_0, p_1 \geq 2\eps'$
(otherwise, we would have $p_0 > 1$ or $p_1 > 1$) and that
$p(\bhat = 0) + p(\bhat = 1) = p_0 + p_1 - 2\eps' = 1$. Now we consider an
arbitrary opening strategy $\dopen _{PQ}$ which fixes a distribution $p(b)$. By
definition of $p_0$ and $p_1$, we have $p(b=i) \leq p_i$ and thus \smash{$p(b=i)\leq p(\bhat = i) + \eps'\leq p(\bhat = i) + \eps$}.
By Lemma \ref{lm:eps_dist}, there exists a consistent joint distribution
$p(\bhat, b)$ with the property that
$p(\bhat = b = i) = \min\set{p(b=i),p(\bhat = i)}$.
We wish to bound $p(\hat{b}\neq b\land b \neq \bot)
= p(\hat{b}=0\land b = 1) + p(\hat{b}=1\land b=0)$.
For $i\in\{0,1\}$, it holds that
\begin{align*}
  p(\hat{b} = 1-i \land b = i)
  &= p(b = i) - p(\hat{b} = b = i)\\
  &= p(b = i) - \min\set{p(\hat{b} = i), p(b=i)}\\
  &= \max\set{0, p(b=i) - p(\hat{b}=i)}\\
  &\leq \eps
\end{align*}
and furthermore, there is at most \em one \em $i\in\{0,1\}$ such that
$p(b=i) > p(\hat{b}=i)$, for if $p(b=i) > p(\hat{b}=i)$ for both $i=0$ and
$i=1$, then $p(b=0)+p(b=1) > p(\hat{b}=0) + p(\hat{b}=1) = 1$ which is a
contradiction.
Thus, we have $p(\hat{b}\neq b\land b\neq\bot) \leq \eps$.
This proves one direction of our claim.

For the other direction, consider a scheme that is $\eps$-binding. Fix
$\dcom _{PQ}$ and let $p(\bhat)$ be a distribution such that for every opening
strategy $\dopen _{PQ}$, there is a joint distribution \smash{$p(\bhat, b)$}
with $p(\bhat\neq b\neq\bot)\leq \eps$. Now consider two opening strategies
$\dopen _{PQ}^0$ and $\dopen _{PQ}^1$ which give distributions $p(b_0)$ and
$p(b_1)$. We need to bound $p(b_0 = 0) + p(b_1 = 1)$. There is a joint
distribution \smash{$p(\bhat, b_0)$} such that $p(\bhat\neq b_0\neq \bot)\leq\eps$
and likewise for $b_1$. Thus,
\begin{align*}
  p(b_0 = 0) + p(b_1 = 1)
  &= p(\bhat = 0, b_0 = 0) + p(\bhat = 1, b_0 = 0)
    + p(\bhat = 0, b_1 = 1) + p(\bhat = 1, b_1 = 1)\\
  &\leq p(\bhat = 0) + p(\bhat = 1) + p(\bhat \neq b_0 \neq \bot)
    + p(\bhat \neq b_1 \neq \bot)\\
  &\leq 1 + 2\eps
\end{align*}
which proves the other direction.\qed
\end{proof}

\begin{remark}
\label{rem:sep}
By Remark~\ref{rem:weak_and_strong}, it follows that Definition~\ref{def:bind}
also implies the $p_0+p_1$-definition. In fact, Definition~\ref{def:bind} is
\em strictly \em stronger (and hence, also strictly stronger than the
weak-binding definition). Consider the following
(artificial and very non-complete) scheme: In the commit phase, $V$ chooses a
uniformly random bit and sends it to the provers, and then accepts everything
or rejects everything during the opening phase, depending on that bit. Then,
$p_0 + p_1 = 1$, yet a commitment can be opened to $1-\hat{b}$ (no matter how
$\hat{b}$ is defined) with probability $\frac12$.

Since a non-complete separation example may not be fully satisfying, we note
that it can be converted into a complete (but even more artificial) scheme.
Fix a ``good'' (i.e., complete, hiding and binding with low parameters) scheme
and call our example scheme above the ``bad'' scheme. We define a
\em combined \em scheme as follows: At the start, the first prover can request
either the ``good'' or ``bad'' scheme to be used. The honest prover is
instructed to choose the former, guaranteeing completeness. The dishonest
prover may choose the latter, so the combined scheme inherits the binding
properties of the ``bad'' scheme: It is binding according to the
$(p_0+p_1)$-definition, but not according to Definition~\ref{def:bind}.
\end{remark}

\subsection{Security of $\CHSH^n$}\label{sec:CSST}

In this section, we show that $\CHSH^n$ is a fairly-binding string commitment
scheme.%
\footnote{It is understood that the allowed attacks against $\CHSH^n$ are those where the provers do not communicate.} 
To this end, we introduce yet another version of the binding property and show
that $\CHSH^n$ satisfies this property. Then we show that this version of the
binding property implies the fairly-binding property (up to some loss in the
parameter, and some mild restrictions on the scheme). 

This new binding property is based on the intuition that it should not be possible to open a commitment to two different values {\em simultaneously} (except with small probability). For this, we observe that (for {\em classical} attacks), when considering a commit strategy $\dcom_{PQ}$, as well as {\em two} opening strategies $\dopen_{PQ}$ and $\dopen'_{PQ}$, we can run both opening strategies {\em simultaneously} on the produced commitment with two (independent) copies of $\open_V$, by applying $\dopen_{PQ}$ and $\dopen'_{PQ}$ to two copies of the respective internal states of $P$ and $Q$). This gives rise to a {\em joint} distribution $p(s,s')$ of the respective outputs $s$ and $s'$ of the two copies of $\open_V$. 

\begin{definition}[Simultaneous opening]
  \label{def:sim_open}
  A 2-prover commitment scheme $\mathcal S$ is \em $\eps$-fairly-binding in the sense of
simultaneous opening\em \footnote{We use ``fairly'' here to distinguish the notion from 
a ``non-fairly'' version with
$p(\bot\neq s\neq s'\neq\bot)\leq\eps$; however, we do not consider this latter version any further here.}
if for all $\dcom _{PQ}$, all pairs of opening strategies
  $\dopen _{PQ}$ and $\dopen _{PQ}'$, and all pairs $s_\circ,s_\circ '$ of
  distinct strings, we have $p(s = s_\circ \land s' = s_\circ')\leq \eps$. 
\end{definition}

\begin{remark}
\label{rem:sim_open_det}
Also for this notion of fairly-binding, it is sufficient to consider \em deterministic \em strategies, as can easily be seen. 
\end{remark}

\begin{proposition}
  \label{prop:csst_sim_open}
The string commitment scheme $\CHSH^n$ is $2^{-n}$-fairly-binding in the sense of simultaneous opening.
\end{proposition}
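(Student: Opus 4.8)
The plan is to first reduce to deterministic strategies and then exploit the algebraic structure of $\CHSH^n$ together with the crucial fact that the prover who produces the opening message never learns the challenge. By Remark~\ref{rem:sim_open_det} it suffices to bound $p(s = s_\circ \land s' = s_\circ')$ for deterministic commit and opening strategies. Once all of the provers' shared randomness is fixed, a commit strategy becomes a function $f\colon\bin^n\to\bin^n$ with which $P$ answers $x = f(a)$ on challenge $a$, and each of the two opening strategies becomes a single fixed string, say $y$ resp.\ $y'$, that $Q$ sends to $V$. The point I would emphasize is that $y$ and $y'$ do not depend on $a$: the prover $Q$ receives nothing during the commit phase, so $state_Q$ — and hence $Q$'s opening message — is a function of the shared randomness only. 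The only remaining randomness is then $V$'s uniform choice of $a\in\bin^n$.

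Next I would compute $V$'s two outputs and trace the constraint they impose. On a challenge $a\neq 0$, each copy of $\open_V$ outputs the unique $s$ with $a\cdot s = f(a)+y$, i.e.\ $s = (f(a)+y)\cdot a^{-1}$ in $\F_{2^n}$, and likewise $s' = (f(a)+y')\cdot a^{-1}$. Hence the event $s = s_\circ\land s' = s_\circ'$ entails the two equations $a\cdot s_\circ = f(a)+y$ and $a\cdot s_\circ' = f(a)+y'$; adding them and using that we work over the field $\F_{2^n}$ (so the $f(a)$ terms cancel) yields $a\cdot(s_\circ+s_\circ') = y+y'$. Since $s_\circ\neq s_\circ'$ means $s_\circ+s_\circ'\neq 0$, this pins $a$ down to the single value $(y+y')\cdot(s_\circ+s_\circ')^{-1}$, so $p(s = s_\circ\land s' = s_\circ'\land a\neq 0)\le 2^{-n}$. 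In the degenerate case $a = 0$ the verification equation reads $0 = f(0)+y$, so each copy of $\open_V$ outputs either $\bot$ or the canonical string $0^n$, the same in both copies; as $s_\circ$ and $s_\circ'$ are distinct strings they cannot both equal $0^n$, so the event is impossible when $a = 0$. Summing the two cases gives $p(s = s_\circ\land s' = s_\circ')\le 2^{-n}$, which is the claim.

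The computation itself is short; the one place that genuinely needs care is making explicit why $y$ and $y'$ are independent of $a$, since this is exactly where the no-communication assumption enters and is the reason the bound is non-trivial (if $Q$ knew $a$, the provers could steer $s$ and $s'$ to arbitrary distinct targets). A secondary bookkeeping point is the handling of $a = 0$, which relies on the convention that $\open_V$ outputs a fixed canonical value — not under the provers' control — whenever the verification equation holds for every $s$.
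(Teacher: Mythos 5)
Your proof is correct and follows essentially the same route as the paper: reduce to deterministic strategies, observe that the commit strategy is a function $f(a)$ while the two opening messages $y,y'$ are constants independent of $a$, and note that simultaneous success forces $a$ to a single field element, giving the bound $2^{-n}$. The only difference is cosmetic: the paper subtracts the two verification equations $f(a)+y=a\cdot s_\circ$ and $f(a)+y'=a\cdot s_\circ'$ directly, which pins down $a=(y-y')/(s_\circ-s_\circ')$ without any case distinction, so your separate treatment of $a=0$ (and the appeal to the canonical-output convention of $\open_V$) is fine but not needed.
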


\begin{proof}
By Remark~\ref{rem:sim_open_det}, it suffices to consider deterministic attack
strategies. Fix a deterministic strategy $\dcom_{PQ}$ and two deterministic
opening strategies $\dopen_{PQ}$ and $\dopen_{PQ}'$. The strategy $\dcom_{PQ}$
specifies $P$'s output $x$ as a function $f(a)$ of the verifier's message $a$.
The opening strategies are described by constants $y$ and $y'$. By definition
of $\CHSH^n$, $s = s_\circ$ implies $f(a) + y = a\cdot s_\circ$ and likewise,
$s' = s_\circ'$ implies $f(a) +y'  = a\cdot s_\circ'$. 
Therefore, $s = s_\circ \land s' = s_\circ'$ implies $a = (y-y')/(s_\circ-s'_\circ)$.
It thus holds that 
$
  p(s = s_\circ \land s' = s_\circ')
  \leq p\bigl(a = (y-y')/(s_\circ-s'_\circ)\bigr)
  \leq \frac{1}{2^n}
$, which proves our claim.
\qed
\end{proof}

\begin{remark}\label{rem:comparison}
It follows directly from (\ref{eq:simple_pr_sum}) that every {\em bit} commitment scheme that is $\eps$-fairly-binding in the sense of simultaneous opening (against classical attacks) is  $\eps/2$-binding in the sense of $p_0 + p_1 \leq 1 + \eps$ (and thus also according to Definitions~\ref{def:weak}). The converse is
not true though: The schemes from Remark~\ref{rem:sep} again serve as
counterexamples.
\end{remark}

\begin{theorem}
  \label{thm:sim_open2fairly}
Let ${\cal S} = (\com_{PQV}, \open_{PQV})$ be a 2-prover commitment scheme. If
$\cal S$ is $\eps$-fairly-binding in the sense of simultaneous opening and
$\open_V$ is deterministic, then $\cal S$ is $2\sqrt{\eps}$-fairly-binding. 
\end{theorem}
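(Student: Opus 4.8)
The plan is to convert a strategy that breaks the fairly-binding property (Definition~\ref{def:fairly}) into one that breaks the simultaneous-opening property (Definition~\ref{def:sim_open}) with a quadratically worse parameter, by running two copies of the opening phase. By Remark~\ref{rem:det} and Remark~\ref{rem:sim_open_det} I may restrict to deterministic strategies; so $\hat s$ will be a function of $c$ only. Fix a deterministic commit strategy $\dcom_{PQ}$. The natural candidate for the function required by Definition~\ref{def:fairly} is the following: for a commitment $c$, let $\hat s(c)$ be the string $s_\circ$ that maximizes $p(s = s_\circ \mid c)$ over all (deterministic) opening strategies, i.e. the most likely opening outcome; ties broken arbitrarily, and if even this maximum probability is small we may leave $\hat s(c)$ arbitrary. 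The content of the theorem is that with this choice, for \emph{every} deterministic opening strategy $\dopen_{PQ}$ and every fixed target string $s_\circ$, we have $p(s \neq \hat s(c) \land s = s_\circ) \leq 2\sqrt\eps$.

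The key step is the following counting/averaging argument. Suppose, for contradiction, that for some opening strategy $\dopen_{PQ}$ and some target $s_\circ$ we had $p(s \neq \hat s(c) \land s = s_\circ) > 2\sqrt\eps$. Call a commitment $c$ ``bad'' if $\hat s(c) \neq s_\circ$ and, conditioned on $c$, the opening strategy $\dopen_{PQ}$ produces $s = s_\circ$ with probability exactly $1$ — here I use that $\open_V$ is deterministic and the provers are deterministic, so conditioned on $c$ the output $s$ is a deterministic function of the (now fixed) randomness that produced $c$; in fact, once we also fix the commit-phase randomness, $s$ is fully determined, so ``$s = s_\circ$'' is really an event over the choice of $\drndc_{PQ}$ (equivalently over $c$). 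Thus $p(s \neq \hat s \land s = s_\circ)$ is just the probability mass of commitments $c$ with $\hat s(c) \neq s_\circ$ on which $\dopen_{PQ}$ yields $s_\circ$. On each such $c$, by definition of $\hat s(c)$ as the \emph{maximizer}, the ``best'' opening strategy $\dopen_{PQ}'$ (the one achieving $\hat s(c)$) yields $\hat s(c) \neq s_\circ$ with probability at least $p(s = s_\circ \mid c) = 1$ relative to that $c$... — this last line needs care, and is where the square root enters: I should \emph{not} take $\dopen'_{PQ}$ to depend on $c$, since the simultaneous-opening definition quantifies a \emph{single} pair of strategies. Instead, I argue as follows. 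Let $\mathcal C = \{c : \hat s(c) \neq s_\circ \text{ and } \dopen_{PQ}\text{ yields }s=s_\circ\text{ on }c\}$, so $p(\mathcal C) > 2\sqrt\eps$. For each $c \in \mathcal C$ there is \emph{some} deterministic opening strategy achieving $\hat s(c)$ with probability $1$ on that $c$; but the number of distinct ``target'' values that $\hat s$ takes on $\mathcal C$ could be large. The right move is a pigeonhole on these values: there is a single string $t \neq s_\circ$ and a subset $\mathcal C_t \subseteq \mathcal C$ with $p(\mathcal C_t) \geq p(\mathcal C)^2 / (\text{something})$ — hmm, that is not quite it either, since the alphabet is exponential.

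Let me restructure the hard step, which is genuinely the crux. The clean argument: define $\hat s(c)$ not as the overall maximizer but via a single auxiliary opening strategy. Concretely, for the purpose of the simultaneous-opening attack, pick the deterministic opening strategy $\dopen'_{PQ}$ that, for each prover, outputs whatever maximizes the probability (over $\drndc_{PQ}$) that $\open_V$ accepts and outputs some value; and set $\hat s(c)$ to be the value $\open_V$ returns in the simultaneous run when the second copy uses $\dopen'_{PQ}$. Then for any attacking $\dopen_{PQ}$ and target $s_\circ$, the event $\{s \neq \hat s \land s = s_\circ\}$ implies $\{s = s_\circ \land s' = \hat s\}$ with $s' = \hat s \neq s_\circ = s$ distinct, so by Definition~\ref{def:sim_open} its probability is at most $\eps$ \emph{provided} $\hat s$ is itself constant (a fixed string). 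It is not constant, so I split: either $\hat s(c) = s_\circ$ (contributes $0$) or $\hat s(c) \neq s_\circ$. On the latter event, I still need $s' = \hat s$ to hold simultaneously with $s = s_\circ$, and $\hat s$ is by construction exactly $s'$ — so the event is literally $\{s = s_\circ \land s' \neq s_\circ \land s = s' \text{ fails}\}$... i.e. $\{s = s_\circ \land s' \neq s_\circ\}$. Summing over the at most one... no. I will instead bound $p(s = s_\circ \land s' \neq s_\circ)$ directly: write $1 - p(s' = s_\circ \lor s = s_\circ) \cdots$; the honest-looking route is $p(s=s_\circ \land s'\neq s_\circ) \le p(s=s_\circ\land s' = t)$ summed over $t\neq s_\circ$, but that sum is over exponentially many $t$, each term $\le \eps$, useless. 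The square root must come from a threshold argument: call $c$ ``heavy'' if $p(s' = \hat s \mid c) \geq 1 - \sqrt\eps$ under $\dopen'_{PQ}$; wait, with determinism that probability is $0$ or $1$. So the square root comes from elsewhere — namely from \emph{averaging over which of the two copies plays which role}, an idea I have seen in this exact setting. I will run the pair $(\dopen_{PQ}, \dopen'_{PQ})$ and also $(\dopen'_{PQ},\dopen_{PQ})$; if $p(s \neq \hat s \land s = s_\circ) = \beta$, then $\beta^2$ lower-bounds the probability that in a doubled experiment \emph{both} copies land in the ``bad'' region, yielding $s, s'$ that are distinct targets, contradicting $\beta^2 > \eps$ if $\beta > \sqrt\eps$. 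Chasing this down to get exactly $2\sqrt\eps$ (rather than $\sqrt\eps$) is the routine bookkeeping I will omit.

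\textbf{Main obstacle.} The delicate point — and the one I expect to be the real obstacle — is precisely how the \emph{single} simultaneous-opening bound (which quantifies over one fixed pair of opening strategies) absorbs the fact that $\hat s$ is a function of the commitment and that the attacker $\dopen_{PQ}$ is also arbitrary: one cannot simply plug $\hat s$ in as ``$s_\circ'$'' because $\hat s$ is not a fixed string. Resolving this cleanly is what forces the doubling trick and the resulting $\sqrt\eps$ loss, and getting the constant and the determinism-of-$\open_V$ hypothesis to play together correctly is where I would spend most of the effort.
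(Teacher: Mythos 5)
You correctly isolate the crux---$\hat s$ depends on $c$, whereas Definition~\ref{def:sim_open} only bounds $p(s=s_\circ\land s'=s'_\circ)$ for a \emph{single} fixed pair of strategies and a \emph{fixed} pair of distinct strings, and summing over the exponentially many possible values of $\hat s$ is useless---but neither of your attempted resolutions closes the gap. Defining $\hat s(c)$ as the output $s'=f'(c)$ of one auxiliary opening strategy chosen to maximize the acceptance probability is already refuted by $\CHSH^n$: let the (deterministic) provers commit honestly to $s_\circ$ with shared string $r$, and note that \emph{every} fixed announcement $y'$ is accepted whenever $a\neq 0$, so your rule may well select some $y'\neq r$; then $f'(c)=s_\circ+(r+y')/a\neq s_\circ$ whenever $a\neq 0$, while the honest opening gives $s=s_\circ$ with probability $1$, so $p(s\neq\hat s\land s=s_\circ)\approx 1$ for your $\hat s$. (Your earlier candidate, the most likely outcome conditioned on $c$, is equally stuck: with deterministic provers and deterministic $\open_V$, in $\CHSH^n$ every string is openable with conditional probability $1$ given $c$, so that maximizer is vacuous.) Finally, the symmetrization/``doubling'' step that is supposed to yield $\beta^2\le\eps$ has no independence to exploit: in the simultaneous-opening experiment both openings act on the \emph{same} commitment, i.e.\ on the same commit-phase randomness, so the two ``bad'' events are perfectly correlated and probabilities do not multiply; and even on that event the two outputs need not form a fixed pair of distinct strings, which is what Definition~\ref{def:sim_open} quantifies over. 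The ``routine bookkeeping'' you defer is exactly the missing idea, so as it stands the proposal does not prove the theorem.

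The paper's proof uses a construction you never reach: a threshold argument on \emph{unconditional} probabilities. Enumerate the deterministic opening strategies, each giving $s=f_i(c)$ since $\open_V$ is deterministic; greedily carve the commitment space into disjoint sets $C_{s,i}\subseteq f_i^{-1}(\{s\})$ each of probability at least $\alpha$, stopping when no pair $(s,i)$ has mass $\geq\alpha$ on the remainder $R$; there are at most $1/\alpha$ carved sets. Set $\hat s(c)=s$ on $C_{s,i}$ (arbitrary on $R$). For any attacking strategy $f_i$ and any target $s_\circ$, the failure event is covered by $R$ (contributing $<\alpha$) together with at most $1/\alpha$ events $\{f_j(c)=s\land f_i(c)=s_\circ\}$ with $s\neq s_\circ$, each of probability at most $\eps$ by the simultaneous-opening property; choosing $\alpha=\sqrt\eps$ gives the bound $2\sqrt\eps$. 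The square-root loss thus comes from trading the threshold $\alpha$ against the number $1/\alpha$ of heavy sets, not from squaring any probability.
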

\begin{proof}
By Remark~\ref{rem:det}, it suffices to consider deterministic strategies for
the provers. We fix some deterministic commit strategy $\dcom_{PQ}$ and an
enumeration $\{ \dopen_{PQ}^i \}_{i=1}^N$ of all deterministic opening
strategies. Since we assume that $\open_V$ is deterministic, for any fixed
opening strategy for the provers, 
the verifier's output $s$ is a \em function \em of the commitment $c$. Thus, for each opening
strategy $\dopen_{PQ}^i$ there is a function $f_i$ such that the verifier's
output is $s = f_i(c)$. We will now define the function $\hat{s}(c)$ that satisfies the properties required by Definition~\ref{def:fairly}. Our
definition depends on a parameter $\alpha > 0$ which we fix later. To define
$\hat{s}$, we partition the set $C$ of all possible commitments into {\em disjoint}
sets $C = R \cup\bigcup_{s,i} C_{s,i}$ that satisfy the following three properties for every $i$ and every~$s$: 
$$
C_{s,i} \subseteq f^{-1}_i(\{s\}) \, , \;\; p(c\in C_{s,i})\geq\alpha \;\,\text{or}\;\, C_{s,i} = \emptyset \, , \;\; \text{and} \;\; p(c\in R\land f_i(c)=s) < \alpha \, .
$$
The second property implies that there are at most $\alpha^{-1}$ non-empty sets
$C_{s,i}$. It is easy to see that such a partitioning exists: Start with
$R=C$ and while there exist $s$ and $i$ with 
$p(c\in R\land f_i(c) = s)\geq\alpha$, let
$C_{s,i} = \{ c\in R\mid f_i(c)=s \}$ and remove the elements of $C_{s,i}$
from $R$. 
For any $c \in C$, we now define $\hat{s}(c)$ as follows. 
We set $\hat{s}(c) = s$ for $c\in C_{s,i}$ and $\hat{s}(c) = 0$ for
$c \in R$.

Now fix some opening strategy $\dopen_{PQ}^i$ and a string $s_\circ$, 
and write $s_i$ for the verifier's output. Using
$C_{\neq s_\circ}$ as a shorthand for
$\bigcup_{s\neq s_\circ} \bigcup_{j} C_{s,j}$, we note that if $\hat{s}(c)\neq s_\circ$ then $c \in R \cup C_{\neq s_\circ}$. Thus, it follows that 
\begin{align*}
  p(s_i\neq\hat{s}(c)\land s_i = s_\circ)
  &= p(\hat{s}(c)\neq s_\circ \land s_i = s_\circ)\\
  &\leq p\big(c\in (R \cup C_{\neq s_\circ})
    \land f_i(c) = s_\circ\big)\\
  &= p(c\in R\land f_i(c) = s_\circ)
    + \sum_{s\neq s_\circ, j} p(c \in C_{s,j} \land f_i(c) = s_\circ)\\
  &\leq p(c\in R\land f_i(c) = s_\circ)
    + \!\!\!\sum_{\substack{s\neq s_\circ, j\\\text{ s.t. } C_{s,j}\neq\emptyset}}\!\!\!
    p(f_j(c) = s \land f_i(c) = s_\circ)\\[-1ex]
  &< \alpha + \alpha^{-1}\cdot\eps
\end{align*}
where the final inequality holds because $p(c\in R\land f_i(c)=s_\circ) < \alpha$ by the choice of $R$, because $p(f_j(c) = s \land f_i(c) = s_\circ) \leq \eps$ 
by the assumed binding property, and because the number of non-empty $C_{s,j}$s 
is at most $1/\alpha$. It is
easy to see that the upper bound $\alpha + \alpha^{-1}\cdot\eps$ is minimized
by setting $\alpha = \sqrt{\eps}$. We conclude that
$p(s_i\neq\hat{s}(c) \land s_i = s_\circ) < 2\sqrt{\eps}$.
\qed
\end{proof}
Combining Proposition~\ref{prop:csst_sim_open} and Theorem \ref{thm:sim_open2fairly},
we obtain the following statement for the (fairly-)binding property of the 2-prover string commitment scheme $\CHSH^n$.
\begin{corollary}\label{cor:CSST}
$\CHSH^n$ is $2^{-\frac{n}{2} + 1}$-fairly-binding.
\end{corollary}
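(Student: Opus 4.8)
The plan is to chain together the two results that immediately precede the corollary. Proposition~\ref{prop:csst_sim_open} tells us that $\CHSH^n$ is $2^{-n}$-fairly-binding in the sense of simultaneous opening, so the only thing I need to check in order to apply Theorem~\ref{thm:sim_open2fairly} with $\eps = 2^{-n}$ is that $\open_V$ is deterministic for $\CHSH^n$. This is clear from the description of the scheme: upon receiving $y$ from $Q$, the verifier $V$ computes the (unique, if it exists) string $s$ with $x + y = a \cdot s$ and outputs it, outputting $\bot$ if no such $s$ exists; no fresh randomness is used in the opening phase (as noted explicitly when $\CHSH^n$ is introduced). Hence $\open_V$ is indeed a deterministic function of the commitment $c = (a,x)$ and the received message $y$.

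With that observation in place, Theorem~\ref{thm:sim_open2fairly} applies directly and yields that $\CHSH^n$ is $2\sqrt{\eps}$-fairly-binding with $\eps = 2^{-n}$. I then just simplify: $2\sqrt{2^{-n}} = 2 \cdot 2^{-n/2} = 2^{-\frac{n}{2}+1}$, which is exactly the claimed bound. So the proof is essentially a one-line composition of the two cited results together with the trivial check that the opening verifier is deterministic.

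I do not expect any real obstacle here — the corollary is purely a matter of instantiating the general theorem on the concrete scheme and doing the elementary arithmetic. The only thing worth being slightly careful about is making sure the allowed-attack model matches: Theorem~\ref{thm:sim_open2fairly} is stated for an arbitrary scheme with a fixed notion of allowed attacks, and for $\CHSH^n$ the allowed attacks are those in which $P$ and $Q$ do not communicate (as recorded in the footnote at the start of Section~\ref{sec:CSST}); Proposition~\ref{prop:csst_sim_open} is proved under exactly this convention, so the notions are consistent and the chaining is legitimate.

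\begin{proof}
Recall that in $\CHSH^n$ the verifier, in the opening phase, receives a string $y$ from $Q$ and outputs the unique $s$ with $x + y = a \cdot s$ if it exists and $\bot$ otherwise, using no fresh randomness; hence $\open_V$ is deterministic. By Proposition~\ref{prop:csst_sim_open}, $\CHSH^n$ is $2^{-n}$-fairly-binding in the sense of simultaneous opening. Applying Theorem~\ref{thm:sim_open2fairly} with $\eps = 2^{-n}$, we conclude that $\CHSH^n$ is $2\sqrt{2^{-n}}$-fairly-binding, and $2\sqrt{2^{-n}} = 2^{-\frac{n}{2}+1}$.
\qed
\end{proof}
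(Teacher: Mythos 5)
Your proof is correct and follows exactly the paper's route: the corollary is obtained by combining Proposition~\ref{prop:csst_sim_open} with Theorem~\ref{thm:sim_open2fairly}, and your additional check that $\open_V$ is deterministic (and that the allowed-attack model is the no-communication one) is exactly the hypothesis the theorem needs. Nothing is missing.
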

For the fairly-weak-binding property, we can get a slightly better parameter.
Note that we do not require $\open_V$ to be deterministic here. The
proof of the theorem below is given in Appendix~\ref{ap:sim_open2fairlyweak}.
\begin{theorem}
  \label{thm:sim_open2fairlyweak}
Every 2-prover commitment scheme $\mathcal S$ that is $\eps$-fairly-binding in the sense of simultaneous opening (against classical attacks) is
$\sqrt{2\eps}$-fairly-weak-binding (against classical attacks). 
\end{theorem}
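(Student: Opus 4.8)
The plan is to mirror the proof of Theorem~\ref{thm:sim_open2fairly}, but replace the deterministic bookkeeping (partitioning the commitment space, defining $\hat s(c)$ pointwise) with a distributional argument, since here $\open_V$ need not be deterministic and we only need to exhibit a distribution $p(\hat s)$ rather than a function $\hat s(\drndc_{PQ},c)$. By Remark~\ref{rem:sim_open_det} it suffices to consider deterministic commit and opening strategies for the provers; fix a deterministic $\dcom_{PQ}$ and enumerate all deterministic opening strategies $\{\dopen_{PQ}^i\}_{i=1}^N$. For each $i$, the strategy $\dopen_{PQ}^i$ together with (possibly randomized) $\open_V$ induces a distribution $p(s_i)$ on $V$'s output, and the simultaneous-opening hypothesis says that for $i\neq j$ the jointly-sampled outputs satisfy $p(s_i = s_\circ \land s_j = s_\circ') \le \eps$ for all distinct $s_\circ, s_\circ'$.

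The key step is to define the distribution $p(\hat s)$. I would set, for each string $s$, $p(\hat s = s) := \max\bigl(0,\ \max_i p(s_i = s) - \alpha\bigr)$ for a threshold $\alpha$ to be optimized, and then renormalize (or rather, verify this already sums to at most $1$ and pad with mass on some fixed string). The intuition parallels the deterministic proof: a value $s$ is a ``plausible commitment value'' only if \emph{some} opening strategy hits it with probability exceeding $\alpha$; there are at most $\alpha^{-1}$ such values, and for any two of them, held by strategies $i\neq j$, the simultaneous-opening bound limits how much they can overlap. To verify $\sum_s p(\hat s = s) \le 1$: for each $s$ with $p(\hat s = s) > 0$ pick a witness index $i(s)$ achieving the max; distinct $s, s'$ may share the same witness index only if\ldots\ actually they may, so more care is needed — one groups the strings by witness index and uses that for a \emph{fixed} $i$, $\sum_s \max(0, p(s_i = s) - \alpha) \le \sum_s p(s_i = s) = 1$ is too weak across indices. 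Instead I would bound $\sum_s p(\hat s = s) = \sum_{s : p(\hat s = s)>0}(p(s_{i(s)}=s) - \alpha) \le \sum_s p(s_{i(s)}=s) - \alpha\cdot|\{s:\ldots\}|$, and separately, using simultaneous opening among the witness strategies, control $\sum_s p(s_{i(s)}=s)$; when several strings share a witness index this sum is $\le 1$ for that index, and strings with distinct witness indices $i\neq j$ contribute overlapping events bounded via $\sum_i p(s_i = s)$-type inequalities. This is exactly the combinatorial heart and the place where the constant $\sqrt{2\eps}$ (rather than $2\sqrt\eps$) comes out — presumably because one does not ``waste'' an additive $\alpha$ on defining $\hat s$ as a function of $c$ but only needs it at the level of marginals.

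Given $p(\hat s)$, the remaining step is, for each opening strategy $\dopen_{PQ}^i$, to construct the consistent joint distribution $p(\hat s, s_i)$. Here I would invoke Lemma~\ref{lm:eps_dist}: there is a consistent joint distribution with $p(\hat s = s_i = s) = \min(p(\hat s = s), p(s_i = s))$ for all $s$. Then $p(s_i \neq \hat s \land s_i = s_\circ) = p(s_i = s_\circ) - \min(p(\hat s = s_\circ), p(s_i = s_\circ)) = \max(0, p(s_i = s_\circ) - p(\hat s = s_\circ))$. If $p(\hat s = s_\circ) = p(s_i = s_\circ) - \alpha > 0$ this is at most $\alpha$; if $p(\hat s = s_\circ) = 0$, i.e. $p(s_i = s_\circ) \le \alpha$ for every opening strategy including $i$, then the quantity is $p(s_i = s_\circ) \le \alpha$ as well — wait, that only gives $\alpha$, not $\alpha^{-1}\eps$; the genuine saving over the naive $\eps$-free bound must come from a sharper estimate on $\max_j p(s_j = s_\circ)$ when $s_\circ$ is ``claimed'' by some other strategy, using simultaneous opening to say two strategies can't both concentrate on distinct values. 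So I would split on whether $s_\circ$ equals some heavily-weighted value of another strategy, balancing $\alpha$ against $\alpha^{-1}\eps$ and optimizing $\alpha = \sqrt{\eps/2}$ or similar to land at $\sqrt{2\eps}$.

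The main obstacle I anticipate is making the definition of $p(\hat s)$ genuinely a probability distribution while simultaneously keeping the per-string bound tight: the deterministic proof gets disjointness of the $C_{s,i}$ for free by construction, whereas at the level of marginals the events $\{s_i = s\}$ for different $i$ overlap, and controlling that overlap is precisely what the simultaneous-opening hypothesis is for — so the crux is a careful application of the union-type inequality~\eqref{eq:pr_sum} (or a direct pairwise argument) to the collection of high-probability output values across all opening strategies, paying $\eps$ for each pair. I expect the final constant and the exact optimal $\alpha$ to fall out of that calculation, and the rest (invoking Lemma~\ref{lm:eps_dist}, reducing to deterministic strategies, handling the $\bot$ case which only helps) to be routine.
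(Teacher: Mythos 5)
Your overall architecture matches the paper's: for each string take the maximum probability $M(s_\circ):=\max_i p(s_i=s_\circ)$ over opening strategies, control the sum of the large $M$-values via the pairwise simultaneous-opening bound together with inequality~\eqref{eq:pr_sum}, define $p(\hat s)$ by subtracting a correction from these maxima, and build the consistent joint distribution with Lemma~\ref{lm:eps_dist}. But the step you yourself flag as the ``combinatorial heart'' is exactly where the proposal has a genuine gap, and your specific construction does not survive it. With $p(\hat s = s):=\max\bigl(0, M(s)-\alpha\bigr)$ and $\alpha\approx\sqrt{\eps/2}$, the hypothesis gives you only $\sum_{j\le k} M(t_j)\le 1+\binom{k}{2}\eps$ for the $k$ strings above threshold, so the total mass is bounded by $1+\binom{k}{2}\eps-k\alpha$, which exceeds $1$ as soon as $k>2\alpha/\eps+1$; nothing in the simultaneous-opening assumption caps $k$ at $O(\sqrt{1/\eps})$ (e.g.\ many strings with $M(s)$ slightly above $\alpha$ are compatible with all pairwise constraints). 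Renormalizing would destroy the per-string bound, so the distribution is simply not constructed. Relatedly, your per-string analysis concludes a bound of $\alpha$ in every case, which would give $\sqrt{\eps/2}$ and is too good to be true; the confusion you voice (``that only gives $\alpha$, not $\alpha^{-1}\eps$'') is a symptom of the missing ingredient: in this distributional setting the trade-off is not $\alpha$ versus $\alpha^{-1}\eps$ as in Theorem~\ref{thm:sim_open2fairly}, but comes from a tail term for strings that receive no $\hat s$-mass.

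The paper's proof closes precisely this hole by sorting the strings so that $p_1\ge p_2\ge\dots$ and truncating at the top $N:=\lceil\sqrt{2/\eps}\rceil$ strings rather than thresholding at a fixed $\alpha$. The amount subtracted is data-dependent: one sets $p(\hat s=s_i)=p_i-(N-1)\tilde\eps/2$ for $i\le N'$, where $N'\le N$ is chosen to keep the values non-negative and $\tilde\eps\le\eps$ is defined so that the mass sums to exactly $1$ (this is the move that replaces your ``verify it sums to at most 1 and pad''). The per-string error is then $(N-1)\eps/2$ for $i\le N$, while for the tail $i>N$ one uses the averaging bound $p_i\le p_N\le \frac1N\sum_{j\le N}p_j\le \frac1N+\frac{(N-1)\eps}{2}$, giving the uniform bound $\frac1N+\frac{N-1}{2}\eps$, which balances at $N\approx\sqrt{2/\eps}$ to yield $\sqrt{2\eps}$. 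So the correct trade-off is $\frac1N$ (tail, from strings outside the support of $\hat s$) against $\frac{(N-1)\eps}{2}$ (shaving), not your $\alpha$ against $\alpha^{-1}\eps$; without the sorting/truncation and the averaging argument for the tail, your sketch cannot be completed as stated.
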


\begin{corollary}\label{cor:CSST_weak}
$\CHSH^n$ is $2^{-\frac{n-1}{2}}$-fairly-weak-binding.
\end{corollary}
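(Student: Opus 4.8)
The plan is to derive Corollary~\ref{cor:CSST_weak} by directly combining Proposition~\ref{prop:csst_sim_open} with Theorem~\ref{thm:sim_open2fairlyweak}. Proposition~\ref{prop:csst_sim_open} tells us that $\CHSH^n$ is $2^{-n}$-fairly-binding in the sense of simultaneous opening (against classical attacks, which are the only attacks we consider for $\CHSH^n$). Theorem~\ref{thm:sim_open2fairlyweak} then converts this into a fairly-weak-binding guarantee, losing only a square root (and a small constant): any scheme that is $\eps$-fairly-binding in the sense of simultaneous opening is $\sqrt{2\eps}$-fairly-weak-binding. So I would simply plug in $\eps = 2^{-n}$ and read off the parameter.

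Concretely, the argument is: by Proposition~\ref{prop:csst_sim_open}, $\CHSH^n$ is $\eps$-fairly-binding in the sense of simultaneous opening with $\eps = 2^{-n}$. By Theorem~\ref{thm:sim_open2fairlyweak}, it is therefore $\sqrt{2\eps}$-fairly-weak-binding, and
$$
\sqrt{2\eps} = \sqrt{2 \cdot 2^{-n}} = \sqrt{2^{-(n-1)}} = 2^{-\frac{n-1}{2}} \, ,
$$
which is exactly the claimed bound. One small point worth noting explicitly is that Theorem~\ref{thm:sim_open2fairlyweak}, unlike Theorem~\ref{thm:sim_open2fairly}, does \emph{not} require $\open_V$ to be deterministic, so there is no side condition to check for $\CHSH^n$; we can invoke it as a black box. (If one instead routed through Theorem~\ref{thm:sim_open2fairly} one would get the weaker $2^{-n/2+1}$-fairly-binding bound of Corollary~\ref{cor:CSST}, and then fairly-binding implies fairly-weak-binding by Remark~\ref{rem:weak_and_strong}, but that is a strictly worse parameter, so the route through Theorem~\ref{thm:sim_open2fairlyweak} is the right one.)

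There is essentially no obstacle here: the corollary is a one-line substitution into a theorem that has already done all the work, and the only thing to be careful about is the arithmetic simplification $\sqrt{2}\cdot 2^{-n/2} = 2^{1/2 - n/2} = 2^{-(n-1)/2}$ and confirming that the "allowed attacks" conventions match (both the proposition and the theorem are stated for classical attacks where the provers do not communicate, which is precisely the allowed-attack model for $\CHSH^n$). So the entire proof is just: apply Proposition~\ref{prop:csst_sim_open} to get $\eps = 2^{-n}$, apply Theorem~\ref{thm:sim_open2fairlyweak} to get $\sqrt{2\eps}$-fairly-weak-binding, and simplify $\sqrt{2\cdot 2^{-n}} = 2^{-(n-1)/2}$.
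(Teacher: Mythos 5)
Your proposal is correct and is exactly the derivation the paper intends: Corollary~\ref{cor:CSST_weak} follows by plugging the $2^{-n}$ bound from Proposition~\ref{prop:csst_sim_open} into Theorem~\ref{thm:sim_open2fairlyweak} and simplifying $\sqrt{2\cdot 2^{-n}} = 2^{-\frac{n-1}{2}}$. Your side remarks (no determinism condition on $\open_V$, matching attack models) are also accurate.
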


\begin{remark}
It is not too hard to see that Corollary~\ref{cor:CSST_weak} above implies an upper bound on the classical value $\omega$ of the game ${\sf CHSH}_{2^n}$ considered in~\cite{BS15} of $\omega({\sf CHSH}_{2^n}) \leq 2^{-\frac{n-1}{2}} + 2^{-n}$. As such, Theorem 1.3 in~\cite{BS15} implies that the above $\eps$ is asymptotically optimal for odd $n$, i.e., the square root loss to the binding property of the {\em bit} commitment version is unavoidable (for odd $n$). 

As for security against quantum attacks, we point out that~\cite{BS15,RAM15} provide an upper bound on the quantum value $\omega^*({\sf CHSH}_q)$ of general finite-field CHSH; however, this does not directly imply security against quantum attacks of $\CHSH^n$ as a (fairly-weak-binding) string commitment scheme. 
\end{remark}

\section{Composing Commitment Schemes}
\label{sec:comp}

\subsection{The Composition Operation}

We consider two 2-prover commitment schemes $\cal S$ and ${\cal S}'$ of a restricted form, and we compose them to a new 2-prover commitment scheme ${\cal S}'' = {\cal S} \star {\cal S}'$ in a well-defined way; our composition theorem then shows that ${\cal S}''$ is secure (against classical attacks) if $\cal S$ and ${\cal S}'$ are. We start by specifying the restriction to $\cal S$ and ${\cal S}'$ that we impose. 

\begin{definition}
Let ${\cal S}$ and ${\cal S}'$ be two 2-prover string commitment schemes. We call the pair $(\cal S,{\cal S}')$ {\em eligible} if the following three properties hold, or they hold with the roles of $P$ and $Q$ exchanged.
\begin{enumerate}\setlength{\parskip}{1ex}
\item The commit phase of $\cal S$ is a protocol $\com_{PV} = (\com_P,\com_V)$ between $P$ and $V$ only, and the opening phase of $\cal S$ is a protocol $\open_{QV} = (\open_Q,\open_V)$ between $Q$ and $V$ only. In other words, $\com_Q$ and $\open_P$ are both trivial and do nothing.%
\footnote{Except that $\com_Q$ may
output state information to the opening protocol $\open_Q$, e.g., in order
to pass on the commit phase randomness.}
Similarly, the commit phase of ${\cal S}'$ is a protocol $\com'_{QV}$ between $Q$ and $V$ only (but both provers may be active in the opening phase).

\item The opening phase $\open_{QV}$ of $\cal S$ is of the following simple form: $Q$ sends a bit string $y \in \set{0,1}^m$ to $V$, and $V$ computes $s$ deterministically as $s = \Extr(y,c)$, where $c$ is the commitment.%
\footnote{Our composition theorem also works for a randomized $\Extr$, but
for simplicity, we restrict to the deterministic case.}
\item The domain of ${\cal S}'$ contains (or equals) $\set{0,1}^m$. 
\end{enumerate}
Furthermore, we specify that the allowed attacks on $\cal S$ are so that
$P$ and $Q$ do not communicate during the course of the entire scheme, and the allowed attacks on $\cal S'$ are so that $P$ and $Q$ do not communicate during the course of the commit phase but there may be limited communication during the opening phase.
\end{definition}

\noindent
An example of an eligible pair of 2-prover commitments is the pair $(\CHSH^n,\XCHSH^n)$, where $\XCHSH^n$ coincides with scheme $\CHSH^n$ except that the roles of $P$ and $Q$ are exchanged. % The reader can safely think of this concrete example. 

\begin{remark}
For an eligible pair $(\cal S,{\cal S}')$, it will be convenient to understand $\open_Q$ and $\open_V$ as {\em non-interactive} algorithms, where $\open_Q$ produces $y$ as its {\em output}, and $\open_V$ takes $y$ as additional {\em input} (rather than viewing the pair as a protocol with a single one-way communication round). 
\end{remark}
We now define the composition operation. Informally, committing is done by means of committing using $\cal S$, and to open the commitment, $Q$ uses $\open_Q$ to locally compute the opening information $y$ and he commits to $y$ with respect to the scheme ${\cal S}'$, and then this commitment is opened (to $y$), and $V$ computes and outputs $s = \Extr(y, c)$. Formally, this is captured as follows (see also Figure~\ref{fig:comp}). 

\begin{definition}
Let ${\cal S} = (\com_{PV},\open_{QV})$ and ${\cal S}' = (\com'_{QV},\open'_{PQV})$ be an eligible pair of 2-prover commitment schemes. Then, their composition ${\cal S} \star {\cal S}'$ is defined as the 2-prover commitment scheme consisting of $\com_{PV} = (\com_P[\rndc_{PQ}],\com_V)$ and 
$$
\open''_{PQV} = (\open'_P,\, \open'_Q \circ \com'_{Q} \circ \open_Q,\, \open_V \circ \open'_V \circ \com'_V) \, ,
$$
where we make it explicit that $\com_P$ and $\open_Q$ use joint randomness, and so do $\com'_{Q}$ and $\open'_P$. 

When considering attacks against the binding property of the composed scheme ${\cal S} \star {\cal S}'$, we declare that the allowed deterministic attacks%
\footnote{The allowed \em randomized \em attacks are then naturally given as those that pick one of the deterministic attacks according to some distribution.}
 are those of the form $(\dcom_P,\dopen'_{PQ}\circ\ptoq_{PQ}\circ\dcom'_Q)$, where $\dcom_P$ is an allowed deterministic commit strategy for $\cal S$, $\dcom'_Q$ and $\dopen'_{PQ}$ are allowed deterministic commit and opening strategies for ${\cal S}'$, and $\ptoq_{PQ}$ is the one-way communication protocol that communicates $P$'s input to~$Q$ (see also Figure~\ref{fig:dishonest}).%
 \footnote{This one-way communication models that in the relativistic setting, sufficient time has passed at this point for $P$ to inform $Q$ about what happened during $\com_P$. }
\end{definition}

\begin{figure}
\begin{center}
\ifpdf
\includegraphics[scale=0.45]{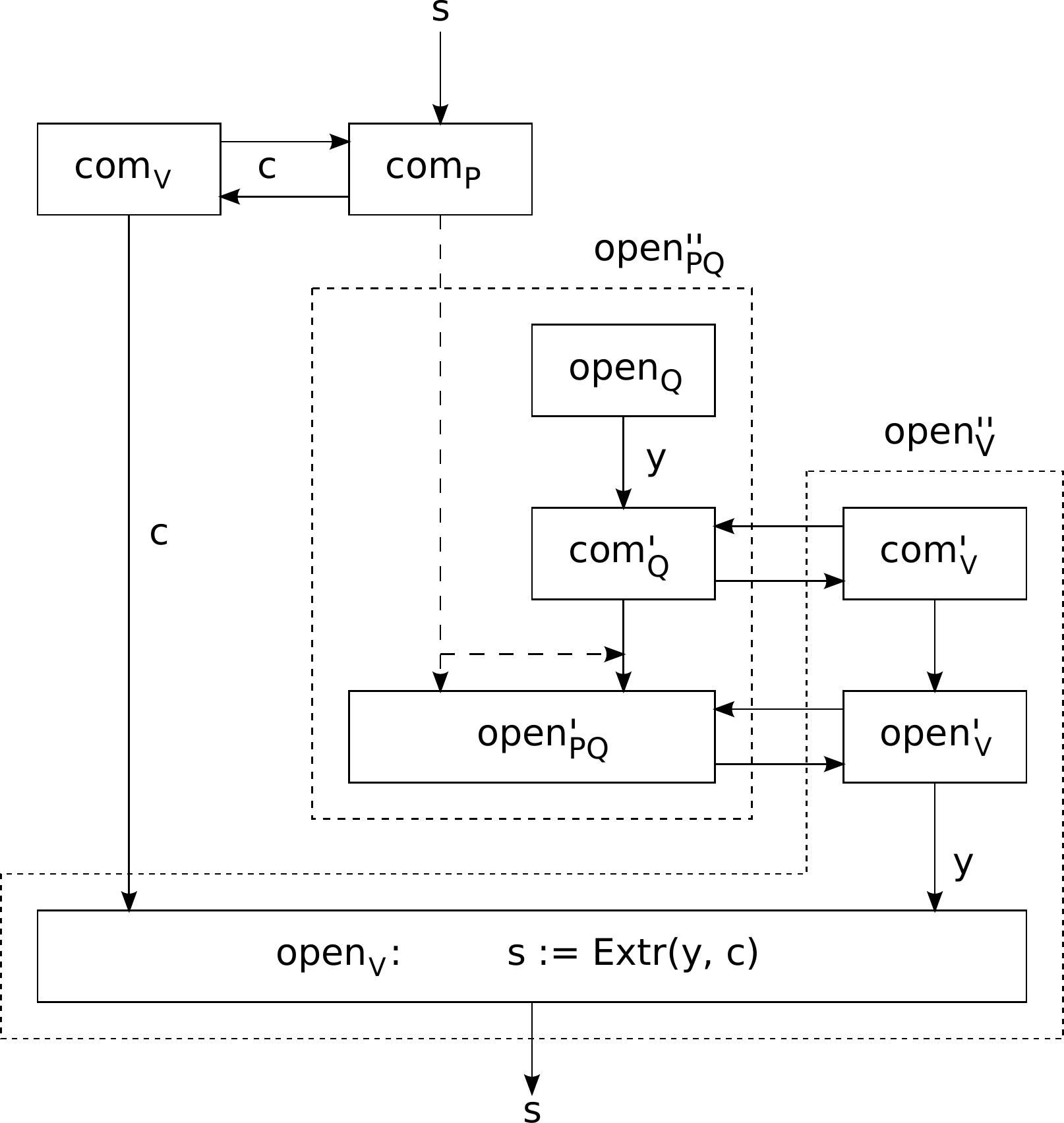} 
\else
\includegraphics[scale=0.45]{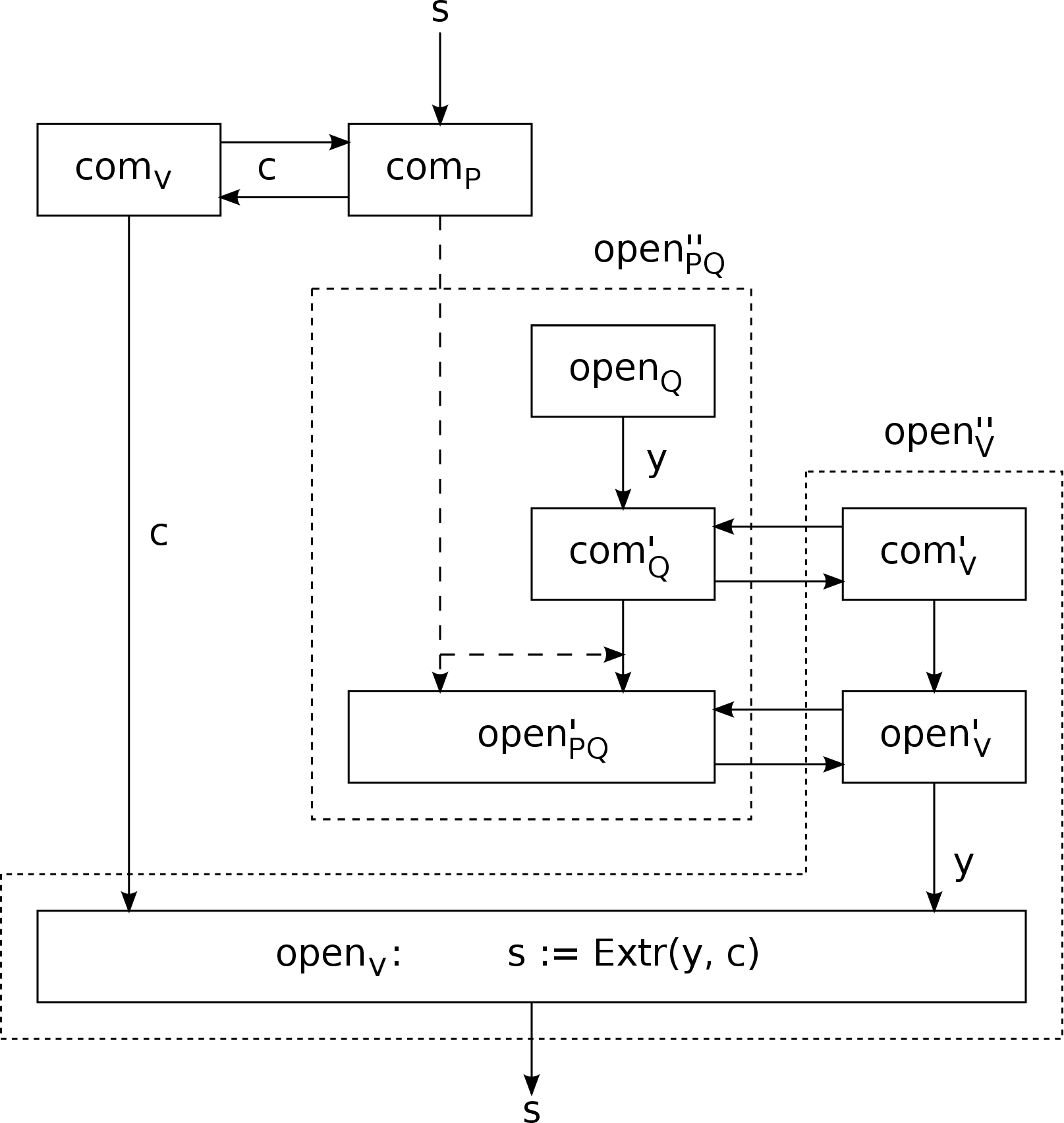} 
\fi
\end{center}
\caption{The composition of ${\cal S}$ and ${\cal S}'$ (assuming single-round commit phases). The dotted arrows indicate communication allowed to the dishonest provers. }\label{fig:comp}
\end{figure}

\begin{remark}
It is immediate that ${\cal S} \star {\cal S}'$ is a commitment scheme in the
sense of Definition~\ref{def:com}, and that it is complete if ${\cal S}$ and ${\cal S}'$ are, with the error parameters adding up. Also, the hiding property is obviously inherited from $\cal S$; however, the point of the composition is to keep the hiding property alive for longer, namely up to before the last round of the opening phase\,---\,recall that, using the terminology used in context of relativistic commitments, these rounds of the opening phase up to before the last would then be referred to as the {\em sustain phase}. We show in Appendix~\ref{ap:hiding} that ${\cal S} \star {\cal S}'$ is hiding up to before the last round, with the error parameters adding up. 

It is intuitively clear that ${\cal S} \star {\cal S}'$ {\em should be} binding if ${\cal S}$ and ${\cal S}'$ are: Committing to the opening information $y$ and then opening the commitment allows the provers to {\em delay} the announcement of $y$ (which is the whole point of the exercise), but it does not allow them to {\em change} $y$, by the binding property of ${\cal S}'$; thus, ${\cal S} \star {\cal S}'$ should be (almost) as binding as $\cal S$. This intuition is confirmed by our composition theorem below. 
\end{remark}

\begin{remark}
We point out that the composition ${\cal S} \star {\cal S}'$ can be naturally defined for a {\em larger} class of pairs of schemes (e.g.~where {\em both} provers are active in the commit phase of both schemes), and the above intuition still holds. However, our proof only works for this restricted class of (pairs of) schemes. Extending the composition result in that direction is an open problem. 
\end{remark}

\begin{remark}
\label{rem:selfcomp}
We observe that if $({\cal S},{\cal XS})$ is an eligible pair, where $\cal XS$ coincides with $\cal S$ except that the roles of $P$ and $Q$ are exchanged, then so is $({\cal XS},{\cal S} \star {\cal XS})$. 
As such, we can then compose $\cal XS$ with ${\cal S} \star {\cal XS}$, and obtain yet another eligible pair $({\cal S},{\cal XS} \star {\cal S} \star {\cal XS})$, etc.
We write $\cal S_m$ for the $m$-fold composition of $\cal S$ with itself, i.e.,
$\cal S_m = {\cal S} \star {\cal XS} \star {\cal S} \star \ldots$ for $m$ terms.
Applying this to the schemes ${\cal S} = \CHSH^n$, we obtain the multi-round scheme from Lunghi \etal.~\cite{LKB+15}. As such, our composition theorem below implies security of their scheme\,---\,with a {\em linear} blow-up of the error term (instead of double exponential). 

We point out that formally we obtain security of the Lunghi \etal. scheme as a {\em 2-prover commitment scheme} under an {\em abstract restriction} on the provers' communication: In every round, the active prover cannot access the message that the other prover received in the previous round. As such, when the rounds of the protocol are executed fast enough so that it is ensured that there is no time for the provers to communicate between subsequent rounds, then security as a {\em relativistic commitment scheme} follows immediately. 
\end{remark}
Before stating and proving the composition theorem, we need to single out one more relevant parameter. 

\begin{definition}
Let $({\cal S},{\cal S}')$ be an eligible pair, which in particular means that $V$'s action in the opening phase of $\cal S$ is determined by a function $\Extr$. 
We define $k({\cal S}) := \max_{c,s}|\Set{y}{\Extr(y,c) = s}|$. 
\end{definition}
I.e., $k({\cal S})$ counts the number of $y$'s that are consistent with a given string $s$ (in the worst case). 
Note that $k(\CHSH^n) = 1$: For every $a,x,s \in \bin^n$ there is at most one $y \in \bin^n$ such that $x + y = a \cdot s$.

\subsection{The Composition Theorems}

In the following composition theorems, we take it as understood that the assumed respective binding properties of $\cal S$ and ${\cal S}'$ hold with respect to a well-defined respective classes of allowed attacks. 
We start with the composition theorem for the fairly-binding property, which
is easier to prove than the one for the fairly-weak-binding property.

\begin{theorem}\label{thm:comp}
Let $({\cal S},{\cal S}')$ be an eligible pair of 2-prover commitment schemes, and assume that $\cal S$ and ${\cal S}'$ are respectively $\eps$-fairly-binding
and $\delta$-fairly-binding. Then, their composition ${\cal S}'' = {\cal S} \star {\cal S}'$ is $(\eps+k({\cal S})\cdot\delta)$-fairly-binding. 
\end{theorem}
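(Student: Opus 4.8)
The plan is to reduce the binding property of $\mathcal{S}'' = \mathcal{S} \star \mathcal{S}'$ to those of $\mathcal{S}$ and $\mathcal{S}'$ by exploiting the structure of the composition: an attack on $\mathcal{S}''$ of the allowed form $(\dcom_P, \dopen'_{PQ} \circ \ptoq_{PQ} \circ \dcom'_Q)$ contains, as components, a commit strategy for $\mathcal{S}$ and a commit/opening strategy for $\mathcal{S}'$. By Remark~\ref{rem:det} it suffices to treat deterministic attacks. First I would fix a deterministic attack on $\mathcal{S}''$. Its $\dcom_P$-part is a commit strategy for $\mathcal{S}$, so the $\eps$-fairly-binding property of $\mathcal{S}$ yields a function $\hat{s}(c)$ (of the $\mathcal{S}$-commitment $c$ only, using Remark~\ref{rem:det}); this is the function we will use to witness the fairly-binding property of $\mathcal{S}''$, since in the composition the $\mathcal{S}''$-commitment equals the $\mathcal{S}$-commitment $c$.

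Next I would fix an opening strategy and a target $s_\circ$ for $\mathcal{S}''$ and bound $p(s \neq \hat{s}(c) \land s = s_\circ)$. The opening of $\mathcal{S}''$ proceeds by $Q$ running $\open_Q$ to obtain $y$, committing to $y$ via $\com'_Q$, and then this $\mathcal{S}'$-commitment being opened to some value $\bar{y}$, after which $V$ outputs $s = \Extr(\bar{y}, c)$. The key point: the part of the $\mathcal{S}''$-attack that generates and commits to $y$ is a valid commit strategy $\dcom'_Q$ for $\mathcal{S}'$ (it uses only $Q$ and $V$, with no $P$–$Q$ communication during the commit phase of $\mathcal{S}'$, as required by eligibility); and the remainder is an opening strategy $\dopen'_{PQ}$ for $\mathcal{S}'$ (allowed to have communication during the opening, which matches the $\ptoq_{PQ}$ step in the allowed attack form). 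So the $\delta$-fairly-binding property of $\mathcal{S}'$ gives a function $\hat{y}$ of the $\mathcal{S}'$-commitment such that for every fixed target $y_\circ$, $p(\bar{y} \neq \hat{y} \land \bar{y} = y_\circ) \leq \delta$.

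Now I would decompose the event $\{s \neq \hat{s}(c) \land s = s_\circ\}$ according to whether $\bar{y}$ equals $\hat{y}$. On the branch $\bar{y} = \hat{y}$: here $s = \Extr(\hat{y}, c)$; whether this contributes depends on $c$ and $\hat{y}$. The idea is that $\hat{y}$, as produced through $\com'_Q$ honestly run on the output $y$ of $\open_Q$ on the genuine state, is "the" opening information, so $\Extr(\hat{y}, c)$ should be $\hat{s}(c)$ with high probability — but more carefully, we should define $\hat{s}$ on the $\mathcal{S}$ side cleverly. Actually the cleaner route: on the branch $\bar{y} = \hat{y}$, the composed attack effectively opens $\mathcal{S}$ to $\Extr(\hat{y}, c)$, and since running $\open_Q$ (possibly modified) and then honestly committing and opening via $\mathcal{S}'$ yields $\bar{y} = \hat{y}$, this is itself an opening strategy for $\mathcal{S}$; hence $p(\Extr(\hat{y},c) \neq \hat{s}(c) \land \Extr(\hat{y},c) = s_\circ) \leq \eps$. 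On the branch $\bar{y} \neq \hat{y}$: then $s = \Extr(\bar{y},c) = s_\circ$ forces $\bar{y}$ to lie in $\{y : \Extr(y,c) = s_\circ\}$, a set of size at most $k(\mathcal{S})$; summing over the (at most $k(\mathcal{S})$) possible values $y_\circ$ that $\bar{y}$ could take in this set, each contributing at most $\delta$ by applying $\mathcal{S}'$-fairly-binding with target $y_\circ$, gives at most $k(\mathcal{S}) \cdot \delta$. Adding the two branches yields $\eps + k(\mathcal{S})\cdot\delta$.

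The main obstacle I anticipate is making the first branch rigorous: one must argue that "honestly committing to $\hat{y}$ via $\mathcal{S}'$ and opening it" genuinely reduces to an allowed opening strategy for $\mathcal{S}$ and that the function $\hat{s}(c)$ chosen from $\mathcal{S}$'s fairly-binding property really does control $\Extr(\hat{y},c)$ simultaneously with the target condition $s = s_\circ$ — i.e., the quantifier order (first $\dcom_P$ fixes $\hat{s}$, then we range over openings) must be respected, which it is, since $\hat{y}$ and the induced $\mathcal{S}$-opening are determined by the single fixed $\mathcal{S}''$-attack. A secondary subtlety is bookkeeping the joint randomness: $\com_P$/$\open_Q$ share randomness, and $\com'_Q$/$\open'_P$ share randomness, and one must check that the reductions to $\mathcal{S}$ and to $\mathcal{S}'$ each see a legitimately-distributed joint-randomness attack; restricting to deterministic attacks (Remark~\ref{rem:det}) removes most of this pain. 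I expect the branch-split via $\bar{y} = \hat{y}$ versus $\bar{y} \neq \hat{y}$ together with the counting bound $|\{y : \Extr(y,c)=s_\circ\}| \leq k(\mathcal{S})$ to be the crux of why $k(\mathcal{S})$ appears exactly as a multiplicative factor on $\delta$.
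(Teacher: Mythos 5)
Your proposal follows the paper's proof essentially step for step: you extract $\hat{s}(c)$ from the $\eps$-fairly-binding property of $\mathcal S$ applied to $\dcom_P$, reduce the composed opening to an attack on $\mathcal S'$ to obtain $\hat{y}$, turn $\hat{y}$ into an opening strategy for $\mathcal S$ (giving $\tilde{s} = \Extr(\hat{y},c)$), and split on $\bar{y} = \hat{y}$ versus $\bar{y}\neq\hat{y}$ with the preimage bound $|\{y : \Extr(y,c)=s_\circ\}|\leq k(\mathcal S)$ producing exactly the $\eps + k(\mathcal S)\cdot\delta$ bound. The only point you should make explicit (as the paper does) is that in the reduced attack on $\mathcal S'$ the provers simulate $\dcom_P\|\com_V$ ahead of time and treat the resulting $c$ as shared randomness\,---\,this is where eligibility (only one prover active per commit phase, so the phases commute) is used, and it is what makes your $c$-dependent targets $y_\circ(c)$ legitimate in Definition~\ref{def:fairly}, rather than the ``fixed target'' form you initially invoke.
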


\begin{proof}
We first consider the case $k({\cal S})=1$. 
We fix an attack $(\dcom_P,\dopen''_{PQ})$ against ${\cal S}''$. Without loss of generality, the attack is deterministic, so $\dopen''_{PQ}$ is of the form $\dopen''_{PQ} = \dopen'_{PQ}\circ\ptoq_{PQ}\circ\dcom'_Q$.

Note that $\dcom_P$ is also a commit strategy for $\cal S$. As such, by the fairly-binding property of $\cal S$, there exists a function $\hat{s}(c)$, only depending on $\dcom_P$, so that the property specified in Definition~\ref{def:fairly} is satisfied for every opening strategy $\dopen_Q$ for $\cal S$. We will show that it is also satisfied for the (arbitrary) opening strategy \smash{$\dopen''_{PQ}$} for ${\cal S}''$, except for a small increase in $\eps$: We will show that \smash{$p(\hat{s}(c) \neq s \wedge s = s_\circ) \leq \eps + \delta$} for every fixed target string $s_\circ$. This then proves the claim. 

To show this property on $\hat{s}(c)$, we ``decompose and reassemble'' the
attack strategy $(\dcom_P,\dopen'_{PQ}\circ\ptoq_{PQ}\circ\dcom'_Q)$ for
${\cal S}''$ into an attack strategy $(\dcom'_Q,\dnewopen'_{PQ})$ for
${\cal S}'$ with $\dnewopen'_{PQ}$ formally defined as
$$
\dnewopen'_{PQ}[c](\overline{state}'_Q) := \dopen'_{PQ}\bigl(\overline{state}_P(c)\|(\overline{state}_P(c),\overline{state}'_Q)\bigr)
$$
where 
$$
(\overline{state}_P(c)\|c) \leftarrow \bigl(\dcom_P||\com_V\bigr) \, .
$$
Informally, this means that ahead of time, $P$ and $Q$ {\em simulate} an execution of $(\dcom_P(\emptyset)||\com_V(\emptyset))$ and take the resulting communication/commitment%
\footnote{Recall that by convention (Remark~\ref{rem:entire_com}), the commitment $c$ equals the communication between $V$ and, here,~$P$. } 
$c$ as shared randomness, and then $\dnewopen'_{PQ}$ computes $\overline{state}_P$ from $c$ as does $\dcom_P$, and runs $\dopen'_{PQ}$ (see Figure~\ref{fig:dishonest}).\footnote{We are using here that $Q$ is
inactive during $\dcom_{PQ}$ and $P$ during $\dcom'_{PQ}$, and thus the two
``commute''. } 
It follows from the fairly-binding property that there is a function $\hat{y}(c')$
of the commitment $c'$ so that
$p(\hat{y}(c') \neq y \,\wedge\, y = y_\circ(c)) \leq \delta$ for every
function $y_\circ(c)$. 

\begin{figure}
\begin{center}
\ifpdf
\includegraphics[scale=0.45]{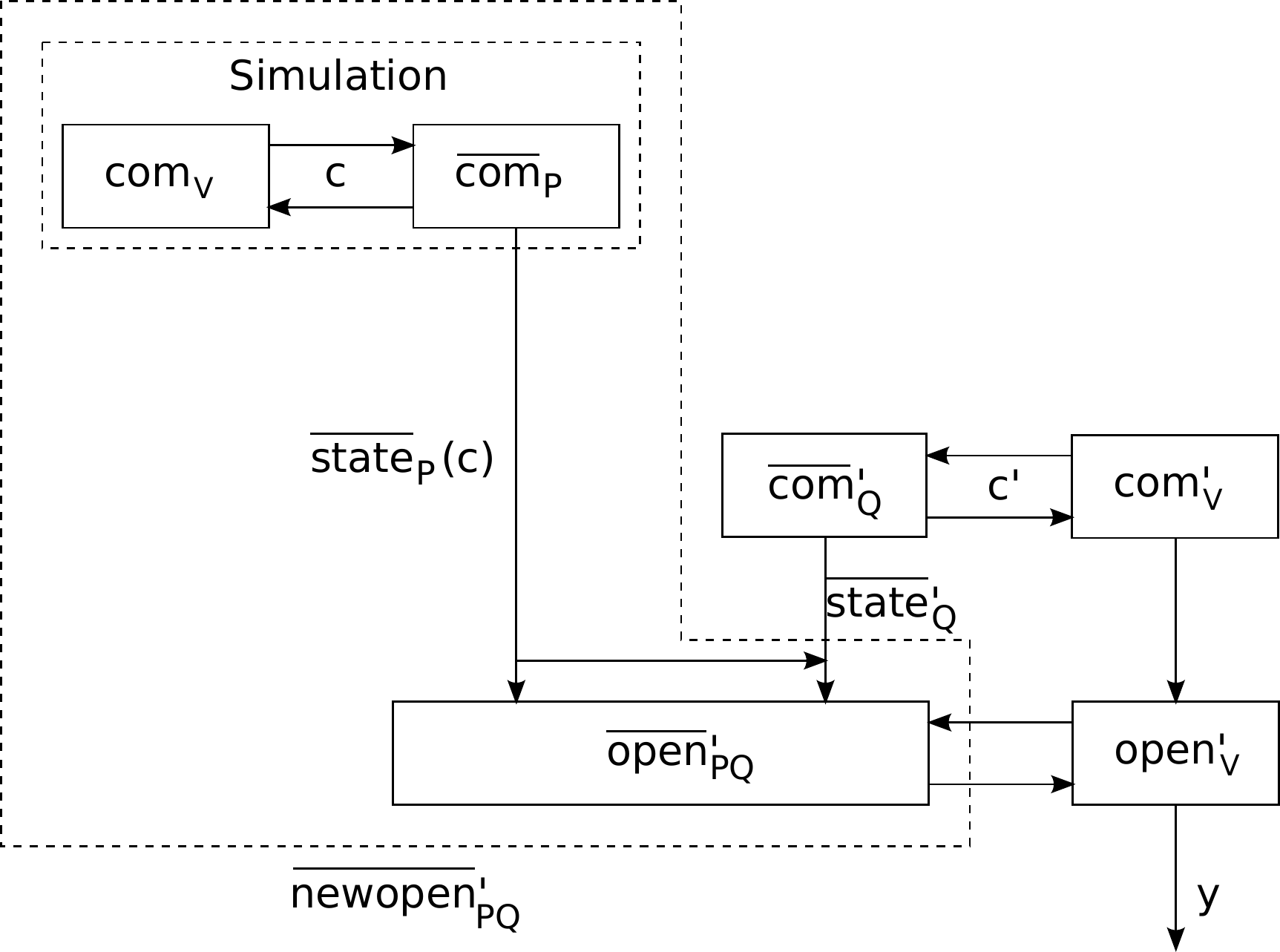} 
\else
\includegraphics[scale=0.45]{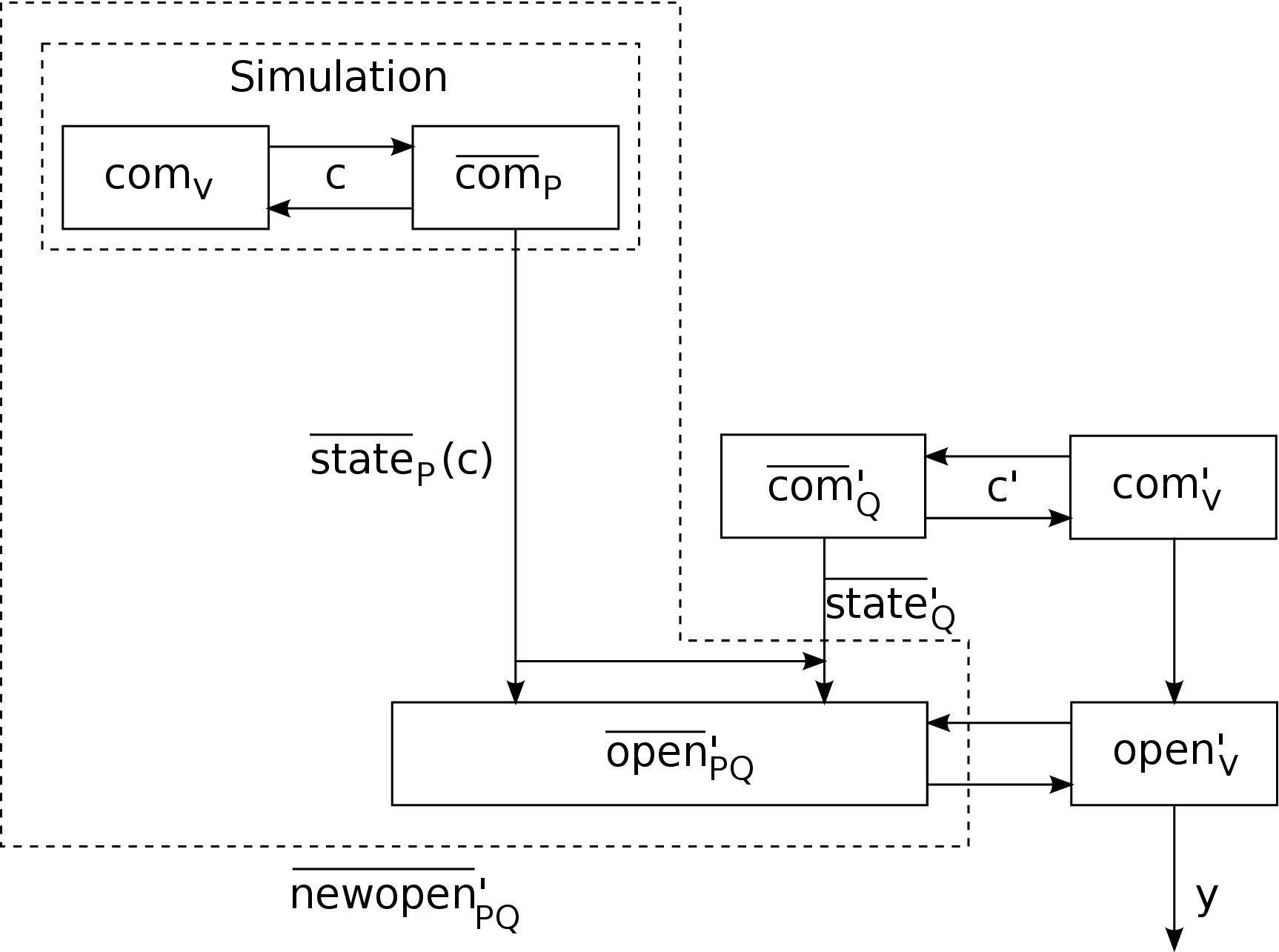} 
\fi
\end{center}
\caption{Constructing the opening strategy $\dnewopen'_{PQ}$ against ${\cal S}'$. }\label{fig:dishonest}
\end{figure}

The existence of $\hat{y}$ now gives rise to an opening strategy
$\dopen_Q$ for $\cal S$; namely, simulate the commit phase of $\mathcal S'$ to
obtain the commitment $c'$, and output $\hat{y}( c')$.
By Definition~\ref{def:fairly}, for $\tilde{s} := \Extr(\hat{y}(c'),c)$ and every
$s_\circ$, $p(\hat{s}(c) \neq \tilde{s} \wedge \tilde{s} = s_\circ) \leq \eps$.

We are now ready to put things together. 
Fix an arbitrary target string $s_\circ$. For any $c$ we let
$y_\circ(c)$ be the unique string such that
$\Extr(y_\circ(c),c) = s_\circ$ (and some default string if no
such string exists); recall, we assume for the moment that $k({\cal S})=1$. 
Omitting the arguments in $\hat{s}(c),\hat{y}(c')$ and $y_\circ(c)$, 
it follows that
\begin{align*}
p(\hat{s}& \neq s  \wedge s = s_\circ) 
\,\leq\, p(\hat{s} \neq s \wedge s = s_\circ \wedge s = \tilde{s}) + p(s=s_\circ \wedge s \neq \tilde{s}) \\
&\leq\, p(\hat{s} \neq \tilde{s} \wedge \tilde{s} = s_\circ) + p\bigr(\Extr(y,c) \neq  \Extr(\hat{y},c)
\wedge\, \Extr(y,c) = s_\circ\bigl) \\
&\leq\, p(\hat{s} \neq \tilde{s} \wedge \tilde{s} = s_\circ) + p(y \neq \hat{y} \wedge y = y_\circ) \\
&\leq\, \eps + \delta\text .
\end{align*}
Thus, $\hat{s}$ is as required. 

For the general case where $k({\cal S}) > 1$, we can reason similarly, except that we then list the $k \leq k({\cal S})$ possibilities $y^1_\circ(c),\ldots,y^{k}_\circ(c)$ for $y_\circ(c)$, and conclude that 
$
  p(s\neq\tilde{s} \land s = s_\circ)
  \leq\sum_{i} p\bigl(y\neq\hat{y}\land y=y^i_\circ\bigr)
  \leq k(\mathcal S) \cdot \delta
$,
which then results in the claimed bound. 
\qed
\end{proof}

\begin{remark}
\label{rem:fairly_to_strong}
Putting things together, we can now conclude the security (i.e., the binding property) of the Lunghi \etal.\ multi-round commitment scheme. Corollary~\ref{cor:CSST} ensures the fairly-binding property of $\CHSH^n$, i.e., the Cr{\'e}peau \etal. scheme as a string commitment scheme, with parameter $2^{-n/2 + 1}$. The composition theorem (Theorem~\ref{thm:comp}) then guarantees the fairly-binding property of the $m$-fold composition as a string commitment scheme, with parameter \mbox{$(m+1)\cdot 2^{-n/2 + 1}$}. Finally, Proposition~\ref{prop:fairlytostrong} implies
that the $m$-fold composition of $\CHSH^n$ with itself is a $\eps_m$-binding bit commitment scheme with error parameter $\eps_m = (m+1)\cdot 2^{-n/2+2}$ as claimed in the introduction, or, more generally, and by taking Remark~\ref{rem:genfairlytostrong} into account, a $(m+1)\cdot 2^{-n/2+k+1}$-binding $k$-bit-string commitment scheme. 
\end{remark}

For completeness, we also show the composition theorem for the weak version of the binding property. Since this notion makes sense also against quantum attacks, we emphasize the restriction to classical attacks\,---\,extending the theorem to quantum attacks is an open problem.  

\begin{theorem}\label{thm:comp_weak}
Let $({\cal S},{\cal S}')$ be an eligible pair of 2-prover commitment schemes, and assume that $\cal S$ and ${\cal S}'$ are respectively $\eps$-fairly-weak-binding
and $\delta$-fairly-weak-binding against classical attacks. Then, their composition ${\cal S}'' = {\cal S} \star {\cal S}'$ is a $(\eps+k({\cal S})\cdot\delta)$-fairly-weak-binding 2-prover commitment scheme against classical attacks. 
\end{theorem}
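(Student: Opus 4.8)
The plan is to mimic the structure of the proof of Theorem~\ref{thm:comp}, replacing each appeal to the fairly-binding property with an appeal to the fairly-weak-binding property, and threading consistent joint distributions through the reduction rather than deterministic functions. Concretely, fix an allowed (deterministic, by Remarks~\ref{rem:det} and~\ref{rem:weak_and_strong}) commit strategy $\dcom''_{PQ}$ for ${\cal S}''$; since the commit phase of ${\cal S}''$ is just $\com_{PV}$, this is a commit strategy $\dcom_P$ for $\cal S$, so the fairly-weak-binding property of $\cal S$ yields a distribution $p(\hat{s})$. I claim this same $p(\hat{s})$ works for ${\cal S}''$. To verify the defining inequality, fix an allowed opening strategy $\dopen''_{PQ} = \dopen'_{PQ} \circ \ptoq_{PQ} \circ \dcom'_Q$, which fixes the distribution $p(s)$ of $V$'s output.

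First I would decompose and reassemble the attack exactly as in the proof of Theorem~\ref{thm:comp}: the triple $(\dcom_P, \dopen'_{PQ}\circ\ptoq_{PQ}\circ\dcom'_Q)$ gives rise to the attack $(\dcom'_Q, \dnewopen'_{PQ})$ on ${\cal S}'$, where $P$ and $Q$ simulate $(\dcom_P\|\com_V)$ to produce $c$ as shared randomness and then run $\dopen'_{PQ}$. By the fairly-weak-binding property of ${\cal S}'$, there is a distribution $p(\hat{y})$ and, for the opening strategy $\dnewopen'_{PQ}$ that fixes $p(y)$, a consistent joint distribution $p(\hat{y}, y)$ with $p(y \neq \hat{y} \land y = y_\circ) \leq \delta$ for all fixed $y_\circ$. (Note the simulated $c$ is part of the probability space here, so I actually get a joint distribution $p(c, \hat{y}, y)$ whose $(c,y)$-marginal matches the real execution.) Now, as in Theorem~\ref{thm:comp}, $\hat{y}$ together with the simulation induces an opening strategy $\dopen_Q$ for $\cal S$ — simulate $\com'_{QV}$ to get $c'$ and output $\hat{y}(c')$, except now $\hat{y}$ is not a function but sampled from $p(\hat{y} \mid c')$; this is still a legitimate (randomized) opening strategy for $\cal S$. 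Let $\tilde{s} = \Extr(\hat{y}, c)$ be its output. The fairly-weak-binding property of $\cal S$, applied with the fixed $p(\hat{s})$, supplies a consistent joint distribution $p(\hat{s}, \tilde{s})$ with $p(\hat{s} \neq \tilde{s} \land \tilde{s} = s_\circ) \leq \eps$ for all $s_\circ$.

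The remaining work is to glue $p(\hat{s}, \tilde{s})$ and $p(\hat{y}, y)$ into a single consistent joint distribution $p(\hat{s}, s)$ for ${\cal S}''$ with $s = \Extr(y, c)$, and then repeat the event-algebra chain from Theorem~\ref{thm:comp}: $p(\hat{s} \neq s \land s = s_\circ) \leq p(\hat{s} \neq \tilde{s} \land \tilde{s} = s_\circ) + p(s = s_\circ \land s \neq \tilde{s})$, bounding the first term by $\eps$ and the second by $p(y \neq \hat{y} \land y = y_\circ) \leq \delta$ (or $\sum_{i=1}^{k({\cal S})} p(y \neq \hat{y} \land y = y^i_\circ) \leq k({\cal S})\cdot\delta$ in general, using the enumeration of the at most $k({\cal S})$ preimages $y^i_\circ(c)$ of $s_\circ$ under $\Extr(\cdot, c)$). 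The $(\hat{s}, s)$-marginal built this way must not depend on $s_\circ$, which it doesn't, since $p(\hat{s})$, $p(\hat{s},\tilde{s})$ and $p(\hat{y}, y \mid c)$ are all fixed before $s_\circ$ is chosen.

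The main obstacle is precisely this gluing step: in the deterministic proof everything is a function of $c$ and the inequalities chain trivially, but here $\hat{s}$, $\tilde{s}$, $\hat{y}$, $y$ live in different joint distributions that only share some marginals, so I must build one probability space carrying all of $c, \hat{s}, \tilde{s}, \hat{y}, y$ coherently — e.g.\ sample $c$ and then $(y \mid c)$ from the real execution, sample $\hat{y}$ from $p(\hat{y} \mid c', y)$ (valid because $c'$ is determined within the simulation and $p(\hat{y},y)$ lives over the same randomness), set $\tilde{s} = \Extr(\hat{y}, c)$, and finally sample $\hat{s}$ from $p(\hat{s} \mid \tilde{s})$. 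One has to check that the $(\hat{s}, s)$-marginal so obtained is consistent with the prescribed $p(\hat{s})$ and with the $p(s)$ fixed by $\dopen''_{PQ}$; consistency with $p(s)$ is immediate since $s = \Extr(y,c)$ is computed from the real $(c, y)$, and consistency with $p(\hat{s})$ follows because the $\hat{s}$-marginal of $p(\hat{s}, \tilde{s})$ is $p(\hat{s})$ by construction and the conditioning $p(\hat{s}\mid\tilde{s})$ preserves it provided the $\tilde{s}$-marginal is respected — which it is, since $\tilde{s} = \Extr(\hat{y},c)$ is exactly the output of the opening strategy $\dopen_Q$ to which the fairly-weak-binding property of $\cal S$ was applied. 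Once this bookkeeping is nailed down, the theorem follows; it should be relegated to an appendix, mirroring how the analogous Theorem~\ref{thm:sim_open2fairlyweak} is handled.
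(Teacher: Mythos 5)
Your high-level skeleton is the same as the paper's (reduce the commit strategy to an attack on $\cal S$ to get $p(\hat{s})$, reassemble the opening strategy into an attack on ${\cal S}'$ to get $p(\hat{y})$, turn $\hat{y}$ into a virtual opening strategy for $\cal S$, and glue), but the gluing step---which you yourself flag as the main obstacle---is exactly where your proposal breaks down, and the fix you sketch does not work as written. Applying the fairly-weak-binding property of ${\cal S}'$ \emph{once}, to the randomized reassembled attack with the simulated $c$ as shared randomness, only yields an abstract coupling $p(\hat{y},y)$ of two marginals; $\hat{y}$ is not a random variable on the attack's probability space, so there is no induced joint distribution $p(c,\hat{y},y)$, and the conditional $p(\hat{y}\mid c',y)$ you propose to sample from is simply undefined. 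Moreover, even granting some coupling with $c$, the bound you need at the end is for the target $y_\circ(c)$ (a preimage of $s_\circ$ under $\Extr(\cdot,c)$), which \emph{depends on $c$}, whereas Definition~\ref{def:fairly_weak} only guarantees the $\delta$-bound for fixed targets; this is precisely the point where the ``fairly'' (non-weak) Definition~\ref{def:fairly} is stronger and why the proof of Theorem~\ref{thm:comp} cannot be copied verbatim. Finally, by sampling $\hat{y}$ conditioned on the real $y$ you may correlate $\hat{y}$ with $V$'s commit-phase randomness $a$; but the virtual opening strategy $\dopen_Q$ for $\cal S$ must have $Q$ produce $\hat{y}$ without knowing $a$, so the glued joint distribution of $(a,\hat{y})$---hence of $\tilde{s}=\Extr(\hat{y},c(a))$---need not coincide with the one arising from that strategy, and the $\eps$-bound from the fairly-weak-binding of $\cal S$ no longer transfers to your glued space; your claim that ``the $\tilde{s}$-marginal is respected'' is unjustified for exactly this reason.

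The paper resolves all three issues with one device that is absent from your proposal: it applies the fairly-weak-binding property of ${\cal S}'$ \emph{separately for every fixed value of $a$} (each value of $a$ hardwires $c(a)$ and gives a deterministic opening strategy $\dnewopen'_{PQ,a}$ against ${\cal S}'$), using that the distribution $p(\hat{y})$ depends only on $\dcom'_Q$ and is therefore the same for all $a$. This yields couplings $p(\hat{y},y\mid a)$ with $p(\hat{y}\mid a)=p(\hat{y})$, hence a well-defined glued distribution $p(a,\hat{y},y)$ in which $\hat{y}$ is independent of $a$---so the opening strategy for $\cal S$ is just ``sample $\hat{y}$ from $p(\hat{y})$ and output it'' (no simulation of $\com'_{QV}$ and no conditioning on $c'$ is needed, in contrast to the non-weak proof), and it reproduces the correct joint distribution of $(a,\hat{y})$---together with per-$a$ bounds $p(y\neq\hat{y}\land y=y_\circ(a)\mid a)\leq\delta$ for the now-fixed target $y_\circ(a)$, which average over $a$ to $\delta$ (respectively $k({\cal S})\cdot\delta$ after enumerating preimages). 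With this conditioning-on-$a$ argument inserted, your outline goes through and coincides with the paper's proof; without it, the reduction to the binding properties of ${\cal S}'$ and $\cal S$ is not sound.
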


\begin{proof}
We first consider the case $k({\cal S})=1$. 
We fix an arbitrary deterministic attack $(\dcom_P,\dopen''_{PQ})$ against ${\cal S}''$, where $\dopen''_{PQ}$ is of the form $\dopen''_{PQ} = \dopen'_{PQ}\circ\ptoq_{PQ}\circ\dcom'_Q$. 
Let $a$ be $V$'s randomness in $\com_V$. Then, $c$ is a function $c(a)$ of $a$, and the distribution $p(a,y)$ is well defined.  
Since $\dcom_P$ is also
an attack strategy against $\cal S$, there exists a distribution $p(\hat{s})$
(only depending on $\dcom_P$) such that Definition~\ref{def:fairly_weak} is
satisfied for every opening strategy $\dopen_Q$ for $\cal S$.

Similar to the proof of Theorem~\ref{thm:comp}, we reassemble the attack strategy $(\dcom_P,\dopen'_{PQ}\circ\ptoq_{PQ}\circ\dcom'_Q)$ for ${\cal S}''$ into an attack strategy $(\dcom'_Q,\dnewopen'_{PQ})$ for ${\cal S}'$. 
Concretely, for every fixed choice of $a$,
we obtain a deterministic opening strategy $\dnewopen'_{PQ,a}$ given by
$$
\dnewopen'_{PQ,a}(\overline{state}'_Q) := \dopen'_{PQ}\bigl(\overline{state}_P(c(a))\|(\overline{state}_P(c(a)),\overline{state}'_Q)\bigr) \, ,
$$
and the distribution of the verifier's output $y$ when the provers
use $\dnewopen'_{PQ,a}$ is $p(y|a)$.
It follows from the fairly-weak-binding property of ${\cal S}'$ that there exists a distribution $p(\hat{y})$, only depending on $\dcom'_Q$, so that for every choice of $a$ there exists a consistent joint distribution $p(\hat{y},y|a)$ so that $p(\hat{y} \neq y \,\wedge\, y = y_\circ|a) \leq \delta$ for every fixed target string $y_\circ$. 
Note that here, consistency in particular means that $p(\hat{y}|a) = p(\hat{y})$.
This joint conditional distribution $p(\hat{y},y|a)$ 
together with the distribution $p(a)$ of $a$ then naturally defines the distribution $p(a,\hat{y},y)$, which is consistent with $p(a,y)$ considered above.

The existence of $p(\hat{y})$ now gives rise to an opening strategy $\dopen_Q$ for $\cal S$; namely, sample $\hat{y}$ according to $p(\hat{y})$ and output $\hat{y}$.
Note that the joint distribution of $a$ and $\hat{y}$ in this ``experiment'' is given by
$$
p(a)  \cdot  p(\hat{y}) = p(a) \cdot p(\hat{y}|a) = p(a,\hat{y}) \, ,
$$
i.e., is consistent with the distribution $p(a,\hat{y},y)$ above.
By Definition~\ref{def:fairly_weak}, we know there exists a joint distribution $p(\hat{s},\tilde{s})$, consistent with $p(\hat{s})$ fixed above and with $p(\tilde{s})$ determined by $\tilde{s} := \Extr(\hat{y},c(a))$, and such that $p(\hat{s} \neq \tilde{s} \wedge \tilde{s} = s_\circ) \leq \eps$ for every $s_\circ$. 
We can now ``glue together'' $p(\hat{s},\tilde{s})$ and $p(c,\hat{y},y,\tilde{s})$, i.e., find a joint distribution that is consistent with both, by setting
$$
p(a,\hat{y},y,\tilde{s},\hat{s}) := p(a,\hat{y},y,\tilde{s}) \cdot p(\hat{s}|\tilde{s}) \, .
$$
We now fix an arbitrary target string $s_\circ$. Furthermore, for any $a$ we let $y_\circ(a)$ be the unique string such that $\Extr(y_\circ(a),c(a)) = s_\circ$ (and to some default string if no such string exists); recall, we assume for the moment that $k({\cal S})=1$. 
With respect to the above joint distribution, it then holds that
\begin{align*}
p(\hat{s} \neq s  \wedge s = s_\circ) 
\,&=\, p(\hat{s} \neq s \wedge s = s_\circ \wedge s = \tilde{s}) + p(s=s_\circ \wedge s = s_\circ \wedge s \neq \tilde{s}) \\
&\leq\, p(\hat{s} \neq \tilde{s} \wedge s = s_\circ \wedge s = \tilde{s}) + p(s \neq \tilde{s} \wedge s = s_\circ) \\
&\leq\, p(\hat{s} \neq \tilde{s} \wedge \tilde{s} = s_\circ) + p\bigr(\Extr(y,c(a)) \neq  \Extr(\hat{y},c(a)) \,\wedge\, \Extr(y,c(a)) = s_\circ\bigl) \\
&\leq\, p(\hat{s} \neq \tilde{s} \wedge \tilde{s} = s_\circ) + p(y \neq \hat{y} \wedge y = y_\circ (a)) \\
&\leq\, p(\hat{s} \neq \tilde{s} \wedge \tilde{s} = s_\circ) + \textstyle\sum_a p(a) \cdot p(y \neq \hat{y} \wedge y = y_\circ(a)|a) \\
&\leq\, \eps + \delta \, .
\end{align*} 
Thus, the distribution $p(\hat{s},s)$ is as required. 

For the case where $k({\cal S}) > 1$, we can reason similarly, except that we then list the $k \leq k({\cal S})$ possibilities $y^1_\circ(a),\ldots,y^{a}_\circ(a)$ for $y_\circ(a)$, and conclude that 
$
  p(s\neq\tilde{s} \land s = s_\circ)
  \leq\sum_{i} p\bigl(y\neq\hat{y}\land y=y^i_\circ(a)\bigr)
  \leq k(\mathcal S) \cdot \delta
$,
which then results in the claimed bound. 
\qed
\end{proof}

\begin{remark}
\label{rem:fairlyweak_to_strong}
Analogously to Remark~\ref{rem:fairly_to_strong}, we can conclude from
Corollary~\ref{cor:CSST_weak} and Theorem~\ref{thm:comp_weak} that $\CHSH^n$ is
$(m+1)\cdot 2^{-(n-1)/2}$-fairly-weak-binding. It follows from
Proposition~\ref{prop:fairlyweaktostrong} that $\CHSH^n$ is a
$(m+1)\cdot 2^{-(n+1)/2}$-weak-binding bit-commitment scheme. More generally,
we can conclude that for any $k < n$, it is a
$(m+1)\cdot 2^{-(n-1)/2 + k}$-weak-binding $k$-bit string commitment scheme.
Below, we show how to avoid the factor $2$ introduced by invoking
Proposition~\ref{prop:fairlyweaktostrong}. 
\end{remark}

\subsection{Variations}

In this section, we show two variants of the composition theorems. The first
one says that if we compose a weak-binding with a fairly-weak-binding scheme, we
obtain a weak-binding scheme. This allows us to slightly improve the parameter
in Remark~\ref{rem:fairlyweak_to_strong}. The proof crucially relies on the
fact that, in the weak definition, there is some freedom in ``gluing together''
the distributions $p(s)$ and $p(\hat{s})$.
The second variant says that composing two binding (or weak-binding) schemes
yields a binding (or weak-binding, respectively) scheme.

We start by proving the following two properties for fairly-weak-binding commitment schemes. The first property shows that one may assume the joint distribution $p(\hat{s}, s)$ to be such that $s$ and $\hat{s}$ are independent conditioned on $s\neq\hat{s}$. 

\begin{lemma}
\label{lm:indep}
Let $\mathcal S$ be a $\eps$-fairly-weak-binding commitment scheme. Then, for any $\dcom _{PQ}$ and $\dopen _{PQ}$ there exists a joint distribution $p(\hat{s}, s)$ as required by Definition~\ref{def:fairly_weak}, but with the additional property that
$$
p(\hat{s}, s|s\neq\hat{s}) = p(\hat{s}|s\neq\hat{s}) \cdot p(s|s\neq\hat{s}) \, .
$$
\end{lemma}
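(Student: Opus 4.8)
The plan is to start from an arbitrary joint distribution $p(\hat{s},s)$ that satisfies the requirements of Definition~\ref{def:fairly_weak} for the given $\dcom_{PQ}$ and $\dopen_{PQ}$ (which exists by assumption), and to ``repair'' it so that $\hat{s}$ and $s$ become conditionally independent given $s \neq \hat{s}$, without changing either marginal and without increasing $p(s \neq \hat{s} \land s = s_\circ)$ for any $s_\circ$. The key observation is that Lemma~\ref{lm:eps_dist}, applied to the two conditional distributions $p(\hat{s} \mid s \neq \hat{s})$ and $p(s \mid s \neq \hat{s})$ on the common set $\bin^n$, hands us exactly such a conditionally-independent-off-diagonal distribution: it produces a consistent joint distribution on a pair of variables, call them $(\hat{s}', s')$, which maximizes the diagonal mass, and which moreover (by the ``additionally'' clause of Lemma~\ref{lm:eps_dist}) factorizes as a product off its own diagonal. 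I would then \emph{splice} this new off-diagonal behaviour into the original distribution.

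\textbf{Key steps.} First I would fix notation: let $q := p(s \neq \hat{s})$ be the total off-diagonal mass of the original $p(\hat{s},s)$, let $D(\hat{s}) := p(\hat{s} \mid s \neq \hat{s})$ and $E(s) := p(s \mid s \neq \hat{s})$ be the two conditional marginals, and let $m := p(\hat{s} = s)$ be the original diagonal mass (so $q + m = 1$). Second, invoke Lemma~\ref{lm:eps_dist} on $D$ and $E$ to get a consistent joint distribution $r(\hat{s},s)$ with $r(\hat{s} = s = x) = \min\{D(x),E(x)\}$ for all $x$, and with $r(\hat{s},s \mid \hat{s} \neq s) = r(\hat{s} \mid \hat{s} \neq s)\cdot r(s \mid \hat{s} \neq s)$; write $\rho := r(\hat{s} = s) = \sum_x \min\{D(x),E(x)\}$ for its diagonal mass. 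Third, I define the new candidate joint distribution $\tilde p(\hat{s},s)$ by keeping the original diagonal untouched and replacing the entire off-diagonal part by a rescaled copy of $r$: concretely, on the diagonal set $\tilde p(\hat{s}=s=x) := p(\hat{s}=s=x) + q\cdot r(\hat{s}=s=x)$, and off the diagonal (for $\hat{s}\neq s$) $\tilde p(\hat{s},s) := q\cdot r(\hat{s},s)$. A short check shows this is a probability distribution (total mass $m + q\rho + q(1-\rho) = m + q = 1$) and that its off-diagonal conditional factorizes, since $\tilde p(\hat{s},s \mid s\neq\hat{s})$ is just $r(\hat{s},s\mid s\neq\hat{s})$ up to the constant rescaling, which is already a product.

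\textbf{The marginal and bound checks.} The remaining work is to verify (i) that $\tilde p$ has the same marginals as $p$, and (ii) that $\tilde p(s \neq \hat{s} \land s = s_\circ) \leq p(s\neq\hat{s}\land s = s_\circ) \leq \eps$ for every $s_\circ$. For (i), the $s$-marginal of $\tilde p$ at a point $x$ is $p(\hat{s}=s=x) + q\cdot r(\hat{s}=s=x) + \sum_{\hat{s}\neq x} q\cdot r(\hat{s},x) = p(\hat{s}=s=x) + q\cdot E(x) = p(\hat{s}=s=x) + p(s = x \land s\neq\hat{s}) = p(s=x)$, using that $r$'s $s$-marginal is $E$ and that $q\cdot E(x) = q\cdot p(s=x\mid s\neq\hat{s})$ by definition of $q$ and $E$; the argument for the $\hat{s}$-marginal is identical with $D$ in place of $E$. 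For (ii), the event $\{s\neq\hat{s}, s = s_\circ\}$ has $\tilde p$-mass $\sum_{\hat{s}\neq s_\circ} q\cdot r(\hat{s},s_\circ) = q\cdot(E(s_\circ) - r(\hat{s}=s=s_\circ)) \leq q\cdot E(s_\circ) = p(s\neq\hat{s}\land s = s_\circ)$, which is at most $\eps$ by the fairly-weak-binding property of the original distribution.

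\textbf{Main obstacle.} I do not expect a genuine conceptual obstacle here — the statement is essentially a bookkeeping lemma whose content is ``Lemma~\ref{lm:eps_dist} already gives you a product off the diagonal, so transplant it.'' The one point requiring a little care is making sure the transplant preserves \emph{both} marginals simultaneously (not just one), which works precisely because $r$ was built by Lemma~\ref{lm:eps_dist} to have marginals $D$ and $E$, and these are exactly the off-diagonal conditionals of the original $p$; and one must double-check that rescaling $r$ by $q$ and adding its diagonal mass back to the original diagonal keeps everything nonnegative and summing to $1$, which it does. A secondary subtlety worth a sentence is the degenerate case $q = 0$ (the original distribution is already supported on the diagonal), where the factorization condition is vacuous and $\tilde p = p$ works.
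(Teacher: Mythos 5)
Your proof is correct, but it is constructed differently from the one in the paper, even though both hinge on Lemma~\ref{lm:eps_dist}. The paper discards the original coupling entirely: it applies Lemma~\ref{lm:eps_dist} directly to the two unconditional marginals $p(\hat{s})$ and $p(s)$, takes the resulting maximal coupling $\tilde{p}(\hat{s},s)$ (which is automatically consistent and, by the lemma's last clause, a product conditioned on $s\neq\hat{s}$), and then recovers the $\eps$-bound via the auxiliary inequality $p(s=s_\circ)\leq p(\hat{s}=s_\circ)+\eps$, which it first extracts from the original coupling; the bound becomes $\tilde{p}(s\neq\hat{s}\land s=s_\circ)=p(s=s_\circ)-\min\{p(\hat{s}=s_\circ),p(s=s_\circ)\}\leq\eps$. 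You instead perform surgery on the original coupling: you keep its diagonal untouched, apply Lemma~\ref{lm:eps_dist} only to the off-diagonal conditionals $D=p(\hat{s}\mid s\neq\hat{s})$ and $E=p(s\mid s\neq\hat{s})$, and splice the rescaled result back in. The trade-off: the paper's route is shorter to write (no marginal bookkeeping, consistency is immediate), while yours avoids the auxiliary inequality altogether and yields the slightly stronger conclusion that the repair never increases $p(s\neq\hat{s}\land s=s_\circ)$ for any $s_\circ$ (and never decreases the diagonal mass), since your bound is pointwise $\tilde{p}(s\neq\hat{s}\land s=s_\circ)\leq q\cdot E(s_\circ)=p(s\neq\hat{s}\land s=s_\circ)$. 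One cosmetic remark: since $V$'s output may be $\bot$, the common set to which you apply Lemma~\ref{lm:eps_dist} should be $\bin^n\cup\{\bot\}$ rather than $\bin^n$ (with $\hat{s}$ assigning probability zero to $\bot$ if need be); this changes nothing, as Definition~\ref{def:fairly_weak} only quantifies over $s_\circ\in\bin^n$, and your handling of the degenerate case $q=0$ is the right one.
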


\begin{proof}
Since the scheme is $\eps$-fairly-weak-binding, it follows that there exists a
consistent joint distribution $p(\hat{s}, s)$ such that $p(s\neq\hat{s}\land s = s_\circ)\leq\eps$ for every $s_\circ$. Because of this, we have
$$
  p(s=s_\circ) = p(s\!=\!s_\circ\land \hat{s} \!=\! s_\circ)
    + p(s \!=\! s_\circ\land \hat{s}\!\neq\! s_\circ)
  = p(s \!=\! s_\circ\land \hat{s} \!=\! s_\circ) + p(s\!\neq\!\hat{s}\land s \!=\! s_\circ)
  \leq p(\hat{s} = s_\circ) + \eps\text .
$$
We apply Lemma~\ref{lm:eps_dist} to the marginal distributions $p(\hat{s})$
and $p(s)$. The resulting joint distribution $\tilde{p}(\hat{s}, s)$ satisfies
$
  \tilde{p}(\hat{s} = s_\circ \land s = s_\circ|s = \hat{s})
  = \min \{p(s=s_\circ),p(\hat{s} = s_\circ)\}
$ and
$
  \tilde{p}(\hat{s}, s|s\neq\hat{s})
  = \tilde{p}(\hat{s}|s\neq\hat{s}) \cdot \tilde{p}(s|s\neq\hat{s})
$. It remains to show that $\tilde{p}(s\neq\hat{s}\land s = s_\circ)\leq\eps$
for all $s_\circ$. Indeed, we have
\begin{align*}
  \tilde{p}(s\neq\hat{s}\land s = s_\circ)
  &= \tilde{p}(s=s_\circ) - \tilde{p}(s = \hat{s} \land s = s_\circ)\\
  &= \tilde{p}(s=s_\circ) - \tilde{p}(\hat{s} = s_\circ \land s = s_\circ)\\
  &= p(s=s_\circ) - \min \{p(\hat{s} =s_\circ), p(s = s_\circ)\} \\
  &\leq p(s=s_\circ) - (p(s=s_\circ) - \eps)\\
  &= \eps
\end{align*}
as claimed.
\qed
\end{proof}
The second property shows that the quantification over all {\em fixed} $s_\circ$ in Definition~\ref{def:fairly_weak} of the fairly-weak-binding property can be relaxed to $s_\circ$ that may depend on $\hat{s}$, but only on $\hat{s}$. Note that we can obviously not allow $s_\circ$ to depend (arbitrarily) on $s$, since then one could choose $s_\circ = s$. 

\begin{proposition}
\label{prop:indep2}
Let $\mathcal S$ be a $\eps$-fairly-weak-binding commitment scheme. Then
$$
\forall\, \dcom_{PQ} \; \exists \, p(\hat{s}) \; \forall \, \dopen_{PQ} \; \exists \, p(\hat{s},s) \; \forall \, p(s_\circ | \hat{s}) : p(s \neq \hat{s} \,\land\, s = s_\circ) \leq \eps  \, ,
$$
where it is understood that $p(\hat{s},s,s_\circ) := p(\hat{s},s) \cdot p(s_\circ|\hat{s})$. Thus, the joint distribution $p(\hat{s},s)$ is such that $p(s \neq \hat{s} \,\land\, s = s_\circ) \leq \eps$ holds in particular for any function $s_\circ = f(\hat{s})$ of $\hat{s}$. 
\end{proposition}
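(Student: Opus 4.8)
The plan is to start from the joint distribution $p(\hat s,s)$ guaranteed by Lemma~\ref{lm:indep}, namely the one satisfying the conditional-independence property $p(\hat s,s\,|\,s\neq\hat s)=p(\hat s\,|\,s\neq\hat s)\cdot p(s\,|\,s\neq\hat s)$ together with $p(s\neq\hat s\land s=s_\circ)\leq\eps$ for every fixed $s_\circ$. I would then fix an arbitrary conditional distribution $p(s_\circ\,|\,\hat s)$ and form $p(\hat s,s,s_\circ):=p(\hat s,s)\cdot p(s_\circ\,|\,\hat s)$ as instructed, and try to show $p(s\neq\hat s\land s=s_\circ)\leq\eps$.

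The key step is to exploit that, conditioned on the event $s\neq\hat s$, the variable $s_\circ$ (which depends only on $\hat s$) is independent of $s$: since $s_\circ$ is a randomized function of $\hat s$ alone, and $\hat s\perp s$ given $s\neq\hat s$, we get $s_\circ\perp s$ given $s\neq\hat s$. Concretely I would write
\begin{align*}
p(s\neq\hat s\land s=s_\circ)
&=\sum_{\sigma} p(s\neq\hat s\land s=\sigma\land s_\circ=\sigma)\\
&=\sum_{\sigma} p(s\neq\hat s)\cdot p(s=\sigma\,|\,s\neq\hat s)\cdot p(s_\circ=\sigma\,|\,s\neq\hat s)\\
&=\sum_{\sigma} p(s_\circ=\sigma\,|\,s\neq\hat s)\cdot p(s\neq\hat s\land s=\sigma)\\
&\leq\sum_{\sigma} p(s_\circ=\sigma\,|\,s\neq\hat s)\cdot\eps\;=\;\eps,
\end{align*}
where the second equality is the conditional independence of $s$ and $s_\circ$ given $s\neq\hat s$, the fourth step uses the Lemma~\ref{lm:indep} bound $p(s\neq\hat s\land s=\sigma)\leq\eps$ for the fixed value $\sigma$, and the last uses $\sum_\sigma p(s_\circ=\sigma\,|\,s\neq\hat s)=1$.

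The main obstacle — and the only real content — is justifying the conditional independence $p(s,s_\circ\,|\,s\neq\hat s)=p(s\,|\,s\neq\hat s)\cdot p(s_\circ\,|\,s\neq\hat s)$ carefully. This follows because, in the extended distribution, $s_\circ$ is conditionally independent of $s$ given $\hat s$ (by construction, $p(s_\circ\,|\,\hat s,s)=p(s_\circ\,|\,\hat s)$), and $\hat s$ is conditionally independent of $s$ given the event $s\neq\hat s$ (by Lemma~\ref{lm:indep}); combining these two facts gives that the pair $(\hat s,s_\circ)$ is independent of $s$ on that event, hence so is $s_\circ$. One should be a little careful that conditioning on an event ($s\neq\hat s$) rather than on a value still lets the Markov-chain-style argument go through, but this is routine. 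The final sentence of the statement, that $s_\circ=f(\hat s)$ for a deterministic function $f$ is a special case, is then immediate by taking $p(s_\circ\,|\,\hat s)$ to be a point mass.
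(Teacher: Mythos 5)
Your proposal is correct and takes essentially the same route as the paper's proof: start from the joint distribution provided by Lemma~\ref{lm:indep}, establish that $s$ and $s_\circ$ are independent conditioned on the event $s\neq\hat{s}$, and then average the fixed-$s_\circ$ bound over $p(s_\circ|s\neq\hat{s})$. The conditional-independence step you call ``routine'' is exactly what the paper isolates and proves as Lemma~\ref{lm:chains}, and your sketch of it (joint independence of the pair $(\hat{s},s_\circ)$ from $s$ on the event, using that the event is determined by $(\hat{s},s)$, then marginalizing) is a valid way to obtain it.
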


\begin{proof}
For given $\dcom_{PQ}$ and $\dopen_{PQ}$, let $p(\hat{s},s)$ be as guaranteed by the fairly-weak-binding property. By Lemma~\ref{lm:indep}, we may assume without loss of generality that 
$p(\hat{s}, s|s\neq\hat{s}) = p(\hat{s}|s\neq\hat{s}) \, p(s|s\neq\hat{s})$. Then, by
Lemma~\ref{lm:chains}, we also have that
$p(s,s_\circ|s\neq\hat{s}) = p(s|s\neq\hat{s})\, p(s_\circ|s\neq\hat{s})$.
It follows that
\begin{align*}
  p(s\neq\hat{s}\land s = s_\circ)
  &= p(s\neq\hat{s}) \cdot p(s = s_\circ|s\neq\hat{s}) \\[2ex]
  &= p(s\neq\hat{s}) \sum_{s_\circ^*} p(s = s_\circ^* \land s_\circ = s_\circ^*|s\neq\hat{s}) \\
  &= p(s\neq\hat{s}) \sum_{s_\circ^*} p(s = s_\circ^*|s\neq\hat{s}) \cdot p(s_\circ = s_\circ^*|s\neq\hat{s}) \\
  &= \sum_{s_\circ^*} p(s\neq\hat{s} \land s = s_\circ^*) \cdot p(s_\circ = s_\circ^* | s\neq\hat{s}) \\
  &\leq \eps \cdot \sum_{s_\circ^*} p(s_\circ = s_\circ^* | s\neq\hat{s}) \\
  &= \eps
\end{align*}
where the inequality follows from the fact that $p(s\neq\hat{s} \land s = s_\circ^*) \leq \eps$ for every {\em fixed} $s_\circ^*$.
\qed
\end{proof}
For the rest of the section, we take it as understood that we only consider
classical attacks. 

\begin{theorem}
\label{thm:comp_bit}
Let $(\mathcal S, \mathcal S')$ be an eligible pair of 2-prover commitment schemes, where 
$\mathcal S$ is $\eps$-weak-binding and $\mathcal S'$ is $\delta$-fairly-weak-binding, and let $\bin^m$ be the domain of $\cal S$. Then, the composition
$S\star\mathcal S'$ is a $(\eps+(2^m \!-\!1)\cdot k(\mathcal S) \cdot \delta)$-weak-binding 
commitment scheme. \\
In particular, if $\mathcal S$ is a {\em bit} commitment scheme then $S\star\mathcal S'$ is a $(\eps+ k(\mathcal S) \cdot \delta)$-weak-binding. 
\end{theorem}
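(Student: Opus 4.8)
The plan is to mimic the structure of the proof of Theorem~\ref{thm:comp_weak}, but to be more careful about how the distribution $p(\hat s)$ on $\cal S$'s side is glued to the distribution coming from $\cal S'$, exploiting the extra freedom provided by Lemma~\ref{lm:indep} and Proposition~\ref{prop:indep2}. As before, it suffices to treat the case $k({\cal S})=1$, fix a deterministic attack $(\dcom_P,\dopen''_{PQ})$ with $\dopen''_{PQ} = \dopen'_{PQ}\circ\ptoq_{PQ}\circ\dcom'_Q$, and reassemble it into an attack $(\dcom'_Q,\dnewopen'_{PQ,a})$ against ${\cal S}'$ for each fixed choice $a$ of $V$'s commit randomness, exactly as in Theorem~\ref{thm:comp_weak}. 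The fairly-weak-binding property of ${\cal S}'$ then yields a distribution $p(\hat y)$ and, for each $a$, a consistent joint distribution $p(\hat y,y|a)$ with $p(\hat y\neq y\wedge y=y_\circ|a)\le\delta$ for every fixed $y_\circ$; this defines $p(a,\hat y,y)$. Sampling $\hat y\sim p(\hat y)$ and outputting it is an opening strategy $\dopen_Q$ for $\cal S$, so the weak-binding property of $\cal S$ gives a distribution $p(\hat s)$ (depending only on $\dcom_P$) and a consistent joint $p(\hat s,\tilde s)$ with $p(\hat s\neq\tilde s\wedge s=\tilde s)\le\eps$ — wait, here $\tilde s := \Extr(\hat y,c(a))$ ranges over $\bin^m$, so this already bounds the ``bad'' event without any further target-string quantifier, which is precisely why a bit (or $m$-bit) commitment $\cal S$ behaves better than a fairly-binding one.

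The key new point is the following. For fairly-weak-binding we only got $p(\hat y\neq y\wedge y=y_\circ|a)\le\delta$ for each \emph{fixed} $y_\circ$, and there are $2^m$ possible values of $y$, so naively summing over all of them costs a factor $2^m$; but the event we actually care about also requires $s\neq\hat s$, i.e.\ $\Extr(y,c)\neq\hat s$, and once $\hat s$ is fixed there are at most $2^m-1$ relevant values of $y_\circ$ (those with $\Extr(y_\circ,c)\neq\hat s$), giving the $(2^m-1)\cdot k({\cal S})$ factor; and when $\cal S$ is a bit commitment ($m=1$), after fixing $\hat b$ there is exactly one ``wrong'' value $y_\circ$ consistent with each $s_\circ\neq\hat b$, so the factor collapses to $k({\cal S})$. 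To make this rigorous I would apply Proposition~\ref{prop:indep2} to the inner scheme ${\cal S}'$: this lets us take $y_\circ$ to depend on $\hat y$, hence (after gluing) on $\hat s$, so that we may set $y_\circ := y_\circ(a,\hat s)$ to be any string with $\Extr(y_\circ,c(a))\neq\hat s$, and still have $p(\hat y\neq y\wedge y=y_\circ|a)\le\delta$. Then $p(s\neq\hat s\wedge y=y_\circ(a,\hat s))\le\delta$ for each of the (at most $2^m-1$ for general domain, $1$ in the bit case) relevant choices of $y_\circ$ as a function of $\hat s$, and summing over those choices yields the claimed $(2^m-1)\cdot k({\cal S})\cdot\delta$ (resp.\ $k({\cal S})\cdot\delta$).

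Concretely, after gluing $p(a,\hat y,y,\tilde s)$ with $p(\hat s,\tilde s)$ via $p(a,\hat y,y,\tilde s,\hat s):=p(a,\hat y,y,\tilde s)\cdot p(\hat s|\tilde s)$ (checking consistency exactly as in Theorem~\ref{thm:comp_weak}), I would bound
\begin{align*}
p(\hat s\neq s\wedge s\neq\bot)
&\le p(\hat s\neq\tilde s\wedge s=\tilde s) + p(s\neq\tilde s\wedge s\neq\hat s\wedge s\neq\bot) \\
&\le \eps + p\bigl(\Extr(y,c)\neq\Extr(\hat y,c)\wedge \Extr(y,c)\neq\hat s\wedge \Extr(y,c)\neq\bot\bigr),
\end{align*}
then split the second term over the (at most $2^m-1$, in the bit case exactly $1$) values $y_\circ = y_\circ(a,\hat s)$ with $\Extr(y_\circ,c(a))\neq\hat s$ and $\neq\bot$, bounding each by $\sum_a p(a)\cdot p(y\neq\hat y\wedge y=y_\circ|a)\le\delta$ using Proposition~\ref{prop:indep2}. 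For $k({\cal S})>1$ one additionally enumerates the $\le k({\cal S})$ preimages $y$ of each such $s$-value, giving the general factor. The main obstacle is the bookkeeping: making sure that after all the gluing the final joint distribution $p(\hat s,s)$ is genuinely consistent with the externally-fixed $p(\hat s)$ and with the true output distribution $p(s)$ of the composed scheme, and that the application of Proposition~\ref{prop:indep2} is legitimate, i.e.\ that the freedom to let $s_\circ$ (here $y_\circ$) depend on $\hat y$ survives the conditioning on $a$ and the subsequent gluing — this is where I expect to have to be most careful, and it is the reason the proof is placed after Lemma~\ref{lm:indep} and Proposition~\ref{prop:indep2} rather than alongside Theorem~\ref{thm:comp_weak}.
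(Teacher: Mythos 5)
Your proposal is correct and takes essentially the same route as the paper's proof: reuse the machinery of Theorem~\ref{thm:comp_weak} and invoke Proposition~\ref{prop:indep2} so that the target $y_\circ$ may depend on the ``hat'' side (the paper lets it depend on $\tilde{s}$ by picking $s_\circ$ at random in $\bin^m\setminus\{\tilde{s}\}$ and a random preimage, you let it depend on $\hat{s}$\,---\,an immaterial difference, justified by the same observation that, conditioned on $a$, $y_\circ$ is independent of $y$ given $\hat{y}$ after the gluing), and then count the at most $(2^m-1)\cdot k({\cal S})$ relevant targets. One small fix: retain the conjunct $s\neq\bot$ (hence $\tilde{s}\neq\bot$) in the first term of your decomposition, since the weak-binding of $\cal S$ bounds $p(\hat{s}\neq\tilde{s}\land\tilde{s}\neq\bot)$ rather than $p(\hat{s}\neq\tilde{s}\land s=\tilde{s})$.
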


\begin{proof}
We follow the proof of Theorem~\ref{thm:comp_weak}, up to when it comes to choosing $y_\circ$. Let us first consider the case $m = 1$, i.e., $\cal S$ is a {\em bit} commitment scheme. In that case, and assuming for the moment that $k(\mathcal S) = 1$, we let $y_\circ$ be the unique string that satisfies $\Extr(y_\circ,c) = s_\circ$, but where now $s_\circ := 1-\tilde{s}$. We emphasize that for a fixed $c$, this choice of $y_\circ$ is {\em not} fixed anymore (in contrast to the choice in the proof of Theorem~\ref{thm:comp_weak}); namely, it is a function of $\tilde{s} = \Extr(\hat{y},c)$, which in turn is a function of $\hat{y}$.
Therefore, by Proposition~\ref{prop:indep2}, it still holds that
$p(y\neq\hat{y}\land y = y_\circ|a)\leq\delta$, and we can conclude that
\begin{align*}
  p(\hat{s}\neq s\land s \neq\bot)
  &\leq p(\hat{s}\neq s\land s \neq\bot \land s = \tilde{s})
    + p(s\neq\tilde{s}\land s\neq \bot)\\
  &= p(\hat{s}\neq \tilde{s}\land s \neq\bot \land s = \tilde{s})
    + p(s\neq\tilde{s}\land s = 1-\tilde{s})\\
  &\leq p(\hat{s}\neq \tilde{s}\land \tilde{s}\neq\bot)
    + p(y\neq\hat{y}\land y = y_\circ)\\
  &\leq p(\hat{s}\neq \tilde{s}\land \tilde{s}\neq\bot)
    + \textstyle\sum_a p(a) \, p(y\neq\hat{y}\land y = y_\circ|a)\\
  &\leq \eps +  \textstyle\sum_a p(a)\,\delta\\
  &= \eps + \delta \, .
\end{align*} 

In the case that $k(\mathcal S) > 1$, we instead randomly select one of the
at most $k(\mathcal S)$ strings $y_\circ$ that satisfy
$\Extr(y_\circ, c) = s_\circ = 1-\tilde{s}$. Then, conditioned on $a$, $y_\circ$ is still independent of $y$ given $\hat{y}$, so that 
Proposition~\ref{prop:indep2} still applies, and we can argue as above, except that we get a factor $k(\mathcal S)$ blow-up from $p(s\neq\tilde{s}\land s = 1-\tilde{s}) \leq k(\mathcal S) \cdot p(y\neq\hat{y}\land y = y_\circ)$. 

Finally, for the case $m > 1$, we first pick a random $s_\circ \in \bin ^m\setminus \{ \tilde{s} \}$, and then choose $y_\circ$ such that $\Extr(y_\circ,c) = s_\circ$, uniquely or at random, depending of $k({\cal S})$. Conditioned on $a$, $y_\circ$ is still independent of $y$ given $\hat{y}$, and therefore
Proposition~\ref{prop:indep2} still applies, but now we get an additional factor $(2^m -1)$ blow-up from  
$p(s\neq\tilde{s}\land s \neq \bot) \leq (2^m -1) \, p(s\neq\tilde{s}\land s = s_\circ)$.
\qed
\end{proof}

\begin{remark}
Theorem~\ref{thm:comp_bit} allows us to slightly improve the bound we obtain in Remark~\ref{rem:fairlyweak_to_strong} on the Lunghi \etal.\ multi-round commitment scheme. 
By Theorem~\ref{thm:comp_weak}, we can compose $m$ instances of $\CHSH^n$ to obtain
a $m\cdot 2^{-(n-1)/2}\,$-fairly-weak-binding string commitment scheme. Then, we can compose the Cr{\'e}peau \etal. {\em bit} commitment scheme (i.e., the bit commitment version of $\CHSH^n$), which is $2^{-(n-1)}$-weak-binding, with this fairly-weak-binding
string commitment scheme; by Theorem~\ref{thm:comp_bit}, this composition, which is the Lunghi \etal.\ multi-round bit commitment scheme, is
$\big( m\cdot 2^{-(n-1)/2} + 2^{-(n-1)} \big)$-weak-binding.
\end{remark}
Finally, for completeness, we point out that the composition theorem also applies to two ordinary binding or weak-binding commitment schemes. 

\begin{theorem}
Let $(\mathcal S, \mathcal S')$ be an eligible pair of 2-prover commitment schemes, where 
$\mathcal S$ is $\eps$-binding and $\mathcal S'$ is $\delta$-binding. Then, the composition
$S\star\mathcal S'$ is $(\eps+\delta)$-binding. The same holds for the weak-binding property.
\end{theorem}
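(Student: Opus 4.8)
The plan is to adapt the proof of Theorem~\ref{thm:comp} (for the binding case) and Theorem~\ref{thm:comp_weak} (for the weak-binding case), which already carry out essentially all of the work; the only difference is that for ordinary (non-fairly) binding, the target string $s_\circ$ no longer appears, so several steps simplify rather than complicate. For the binding case, I would fix a deterministic attack $(\dcom_P,\dopen''_{PQ})$ with $\dopen''_{PQ} = \dopen'_{PQ}\circ\ptoq_{PQ}\circ\dcom'_Q$ against ${\cal S}'' = {\cal S}\star{\cal S}'$. Since $\dcom_P$ is a commit strategy for $\cal S$, the ($\eps$-)binding property of $\cal S$ yields a function $\hat{s}(c)$; exactly as in the proof of Theorem~\ref{thm:comp}, ``decompose and reassemble'' the attack into an attack $(\dcom'_Q,\dnewopen'_{PQ})$ on ${\cal S}'$ (the provers simulate $(\dcom_P\|\com_V)$ to get $c$ as shared randomness, then run $\dopen'_{PQ}$), and invoke the ($\delta$-)binding property of ${\cal S}'$ to obtain a function $\hat{y}(c')$ with $p(y\neq\hat{y}) \le \delta$. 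The key observation is that, because $k({\cal S})=1$ plays no role here — binding does not need us to enumerate preimages of a target string — the argument works even when $k({\cal S})$ is large. Set $\tilde{s} := \Extr(\hat{y}(c'),c)$; the binding property of $\cal S$ applied to the opening strategy ``output $\hat{y}(c')$'' gives $p(\hat{s}(c)\neq\tilde{s}\land\tilde{s}\neq\bot)\le\eps$. Then the chain
\begin{align*}
p(\hat{s}\neq s\land s\neq\bot)
&\le p(\hat{s}\neq s\land s\neq\bot\land s=\tilde{s}) + p(s\neq\tilde{s}\land s\neq\bot)\\
&\le p(\hat{s}\neq\tilde{s}\land\tilde{s}\neq\bot) + p\bigl(\Extr(y,c)\neq\Extr(\hat{y},c)\land \Extr(y,c)\neq\bot\bigr)\\
&\le \eps + p(y\neq\hat{y})\\
&\le \eps + \delta
\end{align*}
finishes the binding case; note the second term no longer needs the preimage enumeration that forced the $k({\cal S})$ factor in Theorem~\ref{thm:comp}, since $y\neq\hat{y}$ on its own is bounded by $\delta$.

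For the weak-binding case, the same template applies but with distributions in place of functions, following the proof of Theorem~\ref{thm:comp_weak}: fix $V$'s randomness $a$ in $\com_V$ so $c = c(a)$; the $\eps$-weak-binding property of $\cal S$ gives $p(\hat{s})$, the $\delta$-weak-binding property of ${\cal S}'$ gives $p(\hat{y})$ and, for each $a$, a consistent joint $p(\hat{y},y|a)$ with $p(\hat{y}\neq y|a)\le\delta$; ``glue'' these with $p(a)$ and with the joint $p(\hat{s},\tilde{s})$ coming from the opening strategy ``sample $\hat{y}\sim p(\hat{y})$, output it'' (with $\tilde{s} := \Extr(\hat{y},c(a))$), producing $p(a,\hat{y},y,\tilde{s},\hat{s}) := p(a,\hat{y},y,\tilde{s})\cdot p(\hat{s}|\tilde{s})$. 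Then marginalize to get $p(\hat{s},s)$ and run a chain analogous to the one above, bounding $p(s\neq\tilde{s}\land s\neq\bot)$ by $p(y\neq\hat{y}) = \sum_a p(a)\,p(y\neq\hat{y}|a)\le\delta$ — again no $k({\cal S})$ blow-up because we bound $p(y\neq\hat{y})$ directly rather than summing over preimages of a specific target. I would also remark that this is consistent with Theorems~\ref{thm:comp} and~\ref{thm:comp_weak}: ordinary binding implies fairly-binding, so those theorems already give an $(\eps+k({\cal S})\cdot\delta)$ bound, and the present theorem sharpens the $k({\cal S})$ away by using the stronger hypothesis.

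I expect the only genuine subtlety to be in the weak-binding part: checking that all the conditional distributions can indeed be glued into one joint distribution consistent with everything (in particular that $p(\hat{y}|a) = p(\hat{y})$ from the consistency in Definition~\ref{def:weak}, so that the ``sample $\hat{y}\sim p(\hat{y})$'' strategy has joint law with $a$ equal to $p(a)\cdot p(\hat{y}) = p(a,\hat{y})$), exactly as in the proof of Theorem~\ref{thm:comp_weak}. This bookkeeping is routine once one has internalized that proof, so I would simply say ``the proof is identical to that of Theorem~\ref{thm:comp} (resp.\ Theorem~\ref{thm:comp_weak}), except that the target string $s_\circ$ is absent, which removes the need to enumerate the $k({\cal S})$ preimages and hence the factor $k({\cal S})$; we omit the details.''
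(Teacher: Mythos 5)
Your overall route is the same as the paper's: rerun the proofs of Theorems~\ref{thm:comp} and~\ref{thm:comp_weak} with the target strings $s_\circ$ and $y_\circ$ removed, which is exactly why the factor $k({\cal S})$ disappears. However, the last step of your chain is not justified as written. You bound $p(s\neq\tilde{s}\land s\neq\bot)$ by $p(y\neq\hat{y})$ and then assert $p(y\neq\hat{y})\leq\delta$, but the $\delta$-binding (resp.\ $\delta$-weak-binding) property of ${\cal S}'$ only controls $p(y\neq\hat{y}\land y\neq\bot)$: the provers may simply cause the opening of ${\cal S}'$ to be rejected, so that $y=\bot\neq\hat{y}$ with probability arbitrarily close to $1$, without contradicting Definition~\ref{def:bind} (or Definition~\ref{def:weak}). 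So the claim that ``$y\neq\hat{y}$ on its own is bounded by $\delta$'' is false in general, and this is precisely the spot where one more observation is needed.

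The repair is the one the paper makes: in the composed scheme, $y=\bot$ forces $s=\Extr(y,c)=\bot$, so the event $s\neq\tilde{s}\land s\neq\bot$ implies $y\neq\hat{y}\land y\neq\bot$, and hence $p(s\neq\tilde{s}\land s\neq\bot)\leq p(y\neq\hat{y}\land y\neq\bot)\leq\delta$, which is exactly what the binding property of ${\cal S}'$ provides (and, conditioned on $a$ and then averaged over $p(a)$, what the weak-binding property provides in the second case). With this one-line fix your argument coincides with the paper's proof; the rest of your write-up, including the gluing of distributions in the weak-binding case, matches it.
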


\begin{proof}
The proof is almost the same as in Theorem~\ref{thm:comp} or
Theorem~\ref{thm:comp_weak}, respectively, except that now there are no
$s_\circ$ and $y_\circ$, and in the end we can simply conclude that
\begin{align*}
  p(s\neq\hat{s}\land s\neq\bot)
  &\leq p(s\neq\hat{s}\land s\neq\bot\land s = \tilde{s})
    + p(s\neq\tilde{s}\land s\neq\bot)\\
  &\leq p(\tilde{s}\neq\hat{s}\land \tilde{s}\neq\bot)
    + p(y\neq\hat{y}\land y\neq\bot)\\
  &\leq \eps + \delta \, ,
\end{align*}
where the second inequality holds since $y = \bot$ implies that
$s = \Extr (y, c) = \bot$.
\qed
\end{proof}

\subsection{Tightness}

We now show that our composition result is nearly tight for $\CHSH^n$.
Let $\CHSH^n_m$ be the $m$-fold composition of $\CHSH^n$ with itself, as
defined in Remark~\ref{rem:selfcomp}. We show that for even $n$,
this composed scheme can be $\eps$-weak-binding as a bit-commitment scheme only
if $\eps\gtrsim\frac 1 4 m2^{-n/2}$. A slightly weaker result was proved in
\cite{BC16}, which shows that $\eps\gtrsim\frac 1 6 m2^{-n/2}$ for even $n$.%
\footnote{The paper states $\eps\gtrsim \frac 1 3 m 2^{-n/2}$, but their
binding definition is $p_0 + p_1 \leq 1 + \eps$; to convert their bound to our
definition (equivalent to $p_0 + p_1\leq 1 + 2\eps$), it must be multiplied by
$1/2$.}
Furthermore, we show that, as a string commitment scheme, $\CHSH_m^n$ can be
$\eps$-fairly-weak-binding only if $\eps\gtrsim \frac 1 2 m2^{-n/2}$
(for even $n$).

\begin{lemma}
\label{lm:x_and_y}
Consider functions $X_n, Y_n: \mathbb F_{2^n}\times R_n\to \mathbb F_{2^n}$.
Let
\begin{equation}
\label{eq:chsh_game}
  q_n = \max_{X_n, Y_n}\ p(X_n(a, r) + Y_n(s, r) = a\cdot s)
\end{equation}
where $a$, $s$ and $r$ are selected uniformly at random in $R_n$.
It holds that:
\begin{enumerate}
  \item There are $X_n$ and $Y_n$ such that
    $p(X_n(a, r) + Y_n(s, r) = a\cdot s) = q_n$ for \em all \em
    $a, s\in\mathbb F_{2^n}$.
  \item For even $n$, we have $q_n = \Omega\bigl( 2^{-n/2} \bigr)$.
    For odd $n$, we have $q_n = \Omega\bigl( 2^{-2n/3} \bigr)$.
\end{enumerate}
\end{lemma}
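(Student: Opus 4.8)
The two items of the lemma call for two different arguments: item 1 is a symmetrization, and item 2 asks for explicit strategies, which I would obtain by reformulating $q_n$ as a point--line incidence quantity and constructing extremal configurations.

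\textbf{Item 1 (symmetrization).} Since $q_n$ is an average over $(a,s,r)$ and fixing $r$ can only decrease that average, $q_n$ equals the value of the \emph{deterministic} game: $q_n = 2^{-2n}\max_{X^*,Y^*}\bigl|\{(a,s): X^*(a)+Y^*(s)=a\cdot s\}\bigr|$ over deterministic $X^*,Y^*\colon\mathbb{F}_{2^n}\to\mathbb{F}_{2^n}$. Fix an optimal such pair and enlarge the shared randomness to $R_n\times\mathbb{F}_{2^n}^2$, writing the new part as $(a_0,s_0)$, uniform and independent. Define $X_n(a,(a_0,s_0)):=X^*(a+a_0)+a\cdot s_0+a_0\cdot s_0$ and $Y_n(s,(a_0,s_0)):=Y^*(s+s_0)+a_0\cdot s$. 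Using $a\cdot s+a\cdot s_0+a_0\cdot s+a_0\cdot s_0=(a+a_0)(s+s_0)$, one checks $X_n(a,r)+Y_n(s,r)=a\cdot s\iff X^*(a+a_0)+Y^*(s+s_0)=(a+a_0)(s+s_0)$. Hence for \emph{every} fixed $(a,s)$, the substitution $(a',s')=(a+a_0,s+s_0)$, which is a bijection of $\mathbb{F}_{2^n}^2$, gives $p\bigl(X_n(a,r)+Y_n(s,r)=a\cdot s\bigr)=2^{-2n}\bigl|\{(a',s'):X^*(a')+Y^*(s')=a'\cdot s'\}\bigr|=q_n$. (The shifts $a_0,s_0$ are a symmetry of the relation $X+Y=a\cdot s$ together with output corrections; the $a\cdot s_0$, $a_0\cdot s$, $a_0\cdot s_0$ summands in the definitions are precisely what absorbs the cross terms.)

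\textbf{Item 2 (lower bounds).} First reformulate: optimizing $Y_n$ away for a fixed deterministic $X$, the best $Y(s)$ is the most frequent value of $a\mapsto X(a)-a\cdot s$, so $q_n = 2^{-2n}\max_X\sum_{s}r_s(X)$ where $r_s(X):=\max_z|\{a:X(a)=a\cdot s+z\}|$ is the maximal number of points of the graph $\{(a,X(a))\}$ on a single line of slope $s$. A second-moment estimate (summing $\sum_z|\{a:X(a)-a\cdot s=z\}|^2=2^n+|\{a\neq a': s=(X(a)-X(a'))/(a-a')\}|$ over $s$, then Cauchy--Schwarz over the $2^n$ slopes) yields $\sum_s r_s(X)\le\sqrt2\,2^{3n/2}$, i.e.\ the matching upper bound $q_n\le\sqrt2\cdot2^{-n/2}$; so the lower bound requires a function graph that nearly saturates this, namely one that is \emph{rich in a constant fraction of directions} — $\Omega(2^n)$ slopes $s$ with $r_s(X)=\Omega(2^{n/2})$. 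For \emph{even} $n$, set $p=2^{n/2}$ and exploit the $\mathbb{F}_p$-vector-space structure of $\mathbb{F}_{2^n}$: take $X$ piecewise affine on the $p$ cosets of $\mathbb{F}_p$, with slopes and intercepts chosen so that, under the coset decomposition, the collinearity-in-many-directions requirement translates into an extremal point--line configuration over $\mathbb{F}_p$; equivalently, realize (an affine image of) a Szemerédi--Trotter-extremal configuration as the graph of a function $\mathbb{F}_{2^n}\to\mathbb{F}_{2^n}$. This gives $\sum_s r_s(X)=\Omega(2^n\cdot p)=\Omega(2^{3n/2})$, hence $q_n=\Omega(2^{-n/2})$. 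For \emph{odd} $n$ there is no quadratic subfield, so I would instead use a coarser grid-type (lifted-real-Szemerédi--Trotter) construction giving a function graph rich to order $\Omega(2^{n/3})$ in $\Omega(2^n)$ directions, whence $\sum_s r_s(X)=\Omega(2^{4n/3})$ and $q_n=\Omega(2^{-2n/3})$.

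\textbf{Main obstacle.} The symmetrization and the reduction to the incidence quantity $\sum_s r_s$ are routine; the crux is the explicit extremal construction — in particular for even $n$, producing a \emph{function} graph (one point per $x$-coordinate) that is rich in a constant fraction of the $2^n$ directions and verifying the count. Every naive algebraic choice (monomials $X(a)=a^d$, linearized polynomials) is rich in only $O(2^{n/2})$ directions and yields merely the trivial $q_n=\Omega(2^{-n})$, so the construction must genuinely use a Szemerédi--Trotter-type extremal configuration rather than a single low-degree curve.
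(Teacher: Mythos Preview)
Your item~1 is essentially the paper's argument verbatim: both derandomize to an optimal deterministic pair $(X^*,Y^*)$, introduce fresh uniform shared shifts $(a_0,s_0)\in\mathbb{F}_{2^n}^2$ (the paper calls them $(r_a,r_s)$), and define the new strategies by evaluating $X^*,Y^*$ at the shifted inputs and correcting by the bilinear cross terms $a\cdot s_0$, $a_0\cdot s$, $a_0\cdot s_0$. Your formulas coincide with the paper's up to sign, which is vacuous in characteristic~$2$. (Your sentence ``fixing $r$ can only decrease that average'' is phrased backwards---you mean the best fixed $r$ does at least as well as the average---but the conclusion is correct.)

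For item~2 the paper does \emph{not} construct anything: it simply identifies $q_n$ with the classical value of the finite-field CHSH game and cites Bavarian--Shor~[BS15] for both bounds. Your incidence reformulation $q_n = 2^{-2n}\max_X\sum_s r_s(X)$ is correct, and the instinct to exploit the intermediate subfield $\mathbb{F}_{2^{n/2}}$ for even $n$ is on the right track (it is indeed what [BS15] does). But as a stand-alone proof your item~2 has a genuine gap, which you flag yourself: ``piecewise affine on the $p$ cosets of $\mathbb{F}_p$, with slopes and intercepts chosen so that \ldots'' and ``coarser grid-type (lifted-real-Szemer\'edi--Trotter) construction'' are \emph{specifications} of what the function must achieve, not actual functions. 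Nothing in the proposal exhibits a concrete $X$ and verifies $\sum_s r_s(X)=\Omega(2^{3n/2})$ (resp.\ $\Omega(2^{4n/3})$). Since these constructions are precisely the main content of~[BS15], the paper's choice to cite rather than reproduce them is the appropriate one; if you want a self-contained argument you would need to write down the explicit strategies from~[BS15] and verify the counts.
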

\begin{proof}
Fix $X_n'$ and $Y_n'$ that achieve the maximum in Equation~\eqref{eq:chsh_game}.
We show that there also are functions $X_n$ and $Y_n$ such that for \em any \em
$a$ and $s$, $p(X_n(a, r) + Y_n(s, r) = a\cdot s) = q_n$:
Without loss of generality, $X_n'$ and $Y_n'$ depend only on $a$ and $s$, not
on $r$. Intuitively, $X_n$ and $Y_n$ do the following: They randomize their
inputs $a$ and $s$ by adding uniformly random elements
$r_a, r_s\in\mathbb F_{2^n}$, then apply $X_n'$ and $Y_n'$, and finally remove
the random terms again from the output. Formally, we let
\begin{align*}
  X_n(a, (r_a, r_s)) &= X_n'(a+r_a) - ar_s - r_ar_s\\
  Y_n(a, (r_a, r_s)) &= Y_n'(s+r_s) - r_as
\end{align*}
For $r_a$ and $r_s$ uniformly random, we have
$p(X_n'(a+r_a) + Y_n'(s+r_s) = as + ar_s +r_ar_s + sr_a) = q_n$.
Thus, it is easy to see that
$p(X_n(a, (r_a, r_s)) + Y_n(s, (r_a, r_s)) = as) = q_n$.

The functions $X_n$ and $Y_n$ in Equation~\eqref{eq:chsh_game} describe
strategies for the $\CHSH^n$ game with classical players and $q_n$ is the
maximal winning probability that classical players can achieve in this game.
As shown in \cite{BS15}, it holds that $q_n = \Omega\bigl(2^{-n/2}\bigr)$ for
even $n$, and $q_n = \Omega\bigl(2^{-2n/3}\bigr)$ for odd $n$.
\qed
\end{proof}

The following lemma can be seen as a generalization of
Theorem~\ref{thm:equiv_weak} to string commitment schemes.
Intuitively, it bounds the winning probability of the provers in the following
game: First, they have to produce a commitment. Then, they receive a uniformly
random string $s_\circ$ and, in order to win, they have to open the commitment
to $s_\circ$. The winning probability in this game is at most $\eps + 2^{-n}$,
when the scheme is an $\eps$-fairly-weak-binding $n$-bit string commitment
scheme.

\begin{lemma}
\label{lm:prob_eq}
Let $\cal S$ be a $\eps$-fairly-weak-binding $n$-bit string commitment scheme.
Fix an allowed commit strategy $\dcom_{PQ}$ for $\cal S$ and, for each
$s_\circ\in\mathbb F_{2^n}$, an allowed opening strategy $\dopen_{PQ}(s_\circ)$.
Let $p(s|s_\circ)$ be the output distribution of $\cal S$ if the provers use
$\dcom_{PQ}$ and $\dopen_{PQ}(s_\circ)$. Let $p(s_\circ)$ be distributed
uniformly over $\mathbb F_{2^n}$. Then,
$p(s=s_\circ) := \sum_{s_\circ\in\mathbb F_{2^n}}p(s_\circ)p(s=s_\circ|s_\circ)
\leq \eps + 2^{-n}$.
\end{lemma}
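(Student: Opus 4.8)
The statement is a weighted-average version of the fairly-weak-binding guarantee, so the natural strategy is to invoke the fairly-weak-binding property of $\cal S$ for the fixed commit strategy $\dcom_{PQ}$, obtaining a distribution $p(\hat s)$ that works uniformly over all opening strategies. The subtlety is that the guarantee $p(s\neq\hat s\land s=s_\circ)\le\eps$ must hold for each individual $s_\circ$ \emph{with the joint distribution $p(\hat s,s)$ that depends on the opening strategy}, and here we have a different opening strategy $\dopen_{PQ}(s_\circ)$ for each target $s_\circ$. So first I would fix, for each $s_\circ$, the joint distribution $p_{s_\circ}(\hat s,s)$ guaranteed by Definition~\ref{def:fairly_weak} for the opening strategy $\dopen_{PQ}(s_\circ)$; each of these has the \emph{same} marginal $p(\hat s)$ (this is the crucial uniformity built into the definition), and each satisfies $p_{s_\circ}(s\neq\hat s\land s=s'_\circ)\le\eps$ for every fixed $s'_\circ$.

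\medskip
\noindent\textbf{Key steps.} With these in hand I would split $p(s=s_\circ|s_\circ)$ into the ``correct'' part $p_{s_\circ}(s=\hat s\land s=s_\circ)$ and the ``wrong'' part $p_{s_\circ}(s\neq\hat s\land s=s_\circ)$. The wrong part is bounded by $\eps$ by the fairly-weak-binding guarantee applied with $s'_\circ=s_\circ$. For the correct part, observe that $p_{s_\circ}(s=\hat s\land s=s_\circ)\le p_{s_\circ}(\hat s=s_\circ)=p(\hat s=s_\circ)$, using that the marginal of $\hat s$ is the fixed distribution $p(\hat s)$ independent of $s_\circ$. Now average over $s_\circ$ uniform in $\mathbb F_{2^n}$:
\begin{align*}
p(s=s_\circ)
&= \sum_{s_\circ} 2^{-n}\, p(s=s_\circ\mid s_\circ) \\
&\le \sum_{s_\circ} 2^{-n}\bigl( p(\hat s = s_\circ) + \eps \bigr) \\
&= 2^{-n}\sum_{s_\circ} p(\hat s = s_\circ) + \eps \\
&= 2^{-n} + \eps
\end{align*}
since $\sum_{s_\circ} p(\hat s=s_\circ)=1$ as $p(\hat s)$ is a distribution on $\mathbb F_{2^n}$. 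That gives exactly the claimed bound.

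\medskip
\noindent\textbf{Main obstacle.} The only place where care is genuinely needed is the bookkeeping around the joint distribution: the fairly-weak-binding definition first fixes $p(\hat s)$ (depending only on $\dcom_{PQ}$), and only \emph{then}, for each opening strategy, produces a consistent $p(\hat s,s)$. One must be sure that ``consistent'' forces the $\hat s$-marginal to remain $p(\hat s)$ across all the different $\dopen_{PQ}(s_\circ)$ — this is what makes the bound $p_{s_\circ}(\hat s=s_\circ)=p(\hat s=s_\circ)$ legitimate and lets the average telescope. Beyond that point the argument is just the two-term split plus a union-style estimate, with no analytic difficulty; deterministic strategies suffice by Remark~\ref{rem:det}/Remark~\ref{rem:weak_and_strong} if one wants to simplify, but it is not needed here. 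I would also remark that this lemma is the string-commitment analogue of the easy direction of Theorem~\ref{thm:equiv_weak}, which is a useful sanity check on the constant.
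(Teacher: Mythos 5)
Your proposal is correct and follows essentially the same route as the paper: fix the single distribution $p(\hat{s})$ guaranteed by the fairly-weak-binding definition, use for each target $s_\circ$ the consistent joint distribution for the opening strategy $\dopen_{PQ}(s_\circ)$ (so the $\hat{s}$-marginal stays $p(\hat{s})$), bound $p(s=s_\circ\mid s_\circ)\leq p(\hat{s}=s_\circ)+\eps$, and average over the uniform $s_\circ$. The paper's proof is just a more compressed version of the same two-term split, so there is nothing to add.
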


\begin{proof}
Let $p(\hat s)$ be a distribution that satisfies
Equation~\eqref{eq:fairly_weak_bind} for the commit strategy $\dcom_{PQ}$.
Now consider any consistent joint distribution $p(s, \hat s|s_\circ)$. Here,
consistency also means that $p(\hat s|s_\circ) = p(\hat s)$. Thus, for
a uniformly random $s_\circ$, $p(\hat s = s_\circ) = 2^{-n}$.
By the $\eps$-fairly-weak-binding property of $\cal S$, we have
$$
  \eps \geq p(s \neq \hat s \land s = s_\circ)
    \geq p(s = s_\circ) - p(\hat s = s_\circ)
    = p(s = s_\circ) -2^{-n}
$$
and thus our claim follows.
\qed
\end{proof}

With the help of the lemma above, is easy to see that $q_n$ limits the binding
parameter of the one-round scheme $\CHSH^n$: If $P$ sends $X_n(a, r)$ and $Q$
sends $Y_n(s_\circ, r)$ for uniformly random $r$, then we have
$p(s = s_\circ|a\neq 0) = q_n$, and thus $p(s = s_\circ) \geq q_n - 2^{-n}$ for
every $s_\circ$.
Thus, by Lemma~\ref{lm:prob_eq}, $\CHSH^n$ can be $\eps$-fairly-weak-binding
only if $\eps\geq q_n - 2^{-n+1}$.
We now show that this bound scales approximately linearly with the number of
rounds.

\begin{theorem}
\label{thm:tightness}
Let $q_n$ as in Lemma~\ref{lm:x_and_y}.
For odd $m$, the $\CHSH^n_m$ commitment scheme can be
$\eps$-fairly-weak-binding as a string commitment scheme only if
$$
  \eps \geq \frac{(m+1)q_n}{2} - \frac{\bigl(m^2-1\bigr)q_n^2}{8}
    - (m+1)2^{-n}\text .
$$
If $m = o\bigl(q_n^{-1}\bigr)$, it holds that $\eps\geq\Omega(mq_n)$.
If, furthermore, $n$ is even, we have $\eps\geq\Omega\bigl(m2^{-n/2}\bigr)$;
if $n$ is odd, $\eps\geq\Omega\bigl(m2^{-2n/3}\bigr)$.
\end{theorem}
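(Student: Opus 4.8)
The strategy is to exhibit, for every target string $s_\circ\in\bin^n$, an explicit attack against $\CHSH^n_m$ and then quote Lemma~\ref{lm:prob_eq}. Concretely, it suffices to give a single commit strategy $\dcom_{PQ}$ together with opening strategies $\dopen_{PQ}(s_\circ)$ (one per $s_\circ$) such that the verifier's output $s$ equals $s_\circ$ with probability at least $\tfrac{(m+1)q_n}{2}-\tfrac{(m^2-1)q_n^2}{8}-m\,2^{-n}$ for every fixed $s_\circ$: applying Lemma~\ref{lm:prob_eq} with $s_\circ$ uniform then gives exactly the claimed bound on $\eps$, after which the $\Omega$-statements follow by plugging in part~(2) of Lemma~\ref{lm:x_and_y} and noting that, when $m=o(q_n^{-1})$, both subtracted terms are $o(mq_n)$ (for $m\,2^{-n}$ one uses $2^{-n}=o(q_n)$, which holds since $q_n=\Omega(2^{-2n/3})$).

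Recall (Remark~\ref{rem:selfcomp}) that $\CHSH^n_m=\CHSH^n\star\XCHSH^n\star\CHSH^n\star\cdots$ with $m$ factors, which unfolds into the Lunghi \etal.\ scheme: a commit round where $P$ answers $x_0$ to a challenge $a_0$, then answer rounds $x_1,\dots,x_{m-1}$ to challenges $a_1,\dots,a_{m-1}$ by the alternating prover, then a final reveal of $y_{m-1}$; when all challenges are nonzero the acceptance condition unfolds to the single linear relation $(a_0\cdots a_{m-1})\,s=(a_1\cdots a_{m-1})x_0+(a_2\cdots a_{m-1})x_1+\cdots+a_{m-1}x_{m-2}+x_{m-1}+y_{m-1}$, and otherwise $V$ rejects. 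Three features matter: there are $m+1$ ``message slots'' ($x_0,\dots,x_{m-1}$ and the reveal); the relativistic timing allows the prover answering in a given slot to know every challenge except the one of the immediately preceding slot; and $s_\circ$ is revealed to the provers only after the commit slot $x_0$ has been sent.

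For the attack I would use the functions $X_n,Y_n$ of Lemma~\ref{lm:x_and_y}, which win with probability \emph{exactly} $q_n$ on \emph{every} input pair, and implement $(m+1)/2$ independent CHSH sub-games — one per pair of message slots, using that $m$ is odd — each with its own fresh block of joint randomness $r_\kappa$. This is done recursively along $\CHSH^n_m=\CHSH^n\star{\cal S}'$: $P$ answers $x_0=X_n(a_0,r_1)$, and the provers then attack ${\cal S}'$ with the intent of opening it to $Y_n(s_\circ,r_1)$; the value $y_0$ that $V$ extracts from ${\cal S}'$ then equals the ``good'' value $y_0^{\mathrm{tgt}}:=x_0+a_0s_\circ$ (which forces $s=s_\circ$) precisely when $X_n(a_0,r_1)+Y_n(s_\circ,r_1)=a_0s_\circ$, i.e.\ precisely when this first sub-game is won. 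If it is won, all layers are opened honestly and $V$ outputs $s_\circ$; if it is lost — an event $P$ can detect from $a_0,r_1,s_\circ$ — then after the prover-to-prover communication permitted between non-adjacent rounds the provers know $y_0^{\mathrm{tgt}}$, and they recurse on the remaining $\XCHSH^n\star\CHSH^n_{m-2}$ part, now trying to open ${\cal S}'$ to $y_0^{\mathrm{tgt}}$ instead; this introduces the second sub-game on fresh randomness $r_2$, and so on down the recursion. By construction $s=s_\circ$ holds as soon as the first of the $(m+1)/2$ sub-games that is won (if any) forces it, so writing $A_\kappa$ for ``sub-game $\kappa$ is won'' we have $\{s=s_\circ\}\supseteq\bigl(\bigcup_\kappa A_\kappa\bigr)\cap\{a_i\neq0\text{ for all }i\}$.

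The analysis is then immediate. Conditioned on the (nonzero) challenges, the uniform-input property of $X_n,Y_n$ makes the $A_\kappa$ mutually independent of probability $q_n$ each, so the inclusion--exclusion bound~(\ref{eq:pr_sum}) gives
\[
  p\!\left(\textstyle\bigcup_\kappa A_\kappa\right)\ \ge\ \sum_\kappa q_n-\sum_{\kappa<\kappa'}q_n^2\ =\ \frac{(m+1)q_n}{2}-\binom{(m+1)/2}{2}q_n^2\ =\ \frac{(m+1)q_n}{2}-\frac{(m^2-1)q_n^2}{8}.
\]
Since $p(\exists i: a_i=0)\le m\,2^{-n}$, this yields $p(s=s_\circ)\ge\frac{(m+1)q_n}{2}-\frac{(m^2-1)q_n^2}{8}-m\,2^{-n}$ for every fixed $s_\circ$, whence Lemma~\ref{lm:prob_eq} (with uniform $s_\circ$) gives $\eps\ge\frac{(m+1)q_n}{2}-\frac{(m^2-1)q_n^2}{8}-(m+1)2^{-n}$, as claimed. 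I expect the main obstacle to be making the recursion of the previous paragraph fully rigorous: one must track exactly which challenge each prover knows when producing each message (so that the CHSH roles respect the no-communication-between-adjacent-rounds restriction and the unavailability of $s_\circ$ in the commit slot), and one must verify the ``override'' property, namely that a win in sub-game $\kappa$ pins $s$ to $s_\circ$ irrespective of sub-games $\kappa+1,\dots$ — which reduces to checking that the sensitivity of the verifier's linear reconstruction to the deviation introduced at level $\kappa$ is exactly compensated by the CHSH winning condition. The cleanest route is induction on odd $m$, peeling off one sub-game (one $\CHSH^n\star\XCHSH^n$ layer) at a time and reducing to $\CHSH^n_{m-2}$.
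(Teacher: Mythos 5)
Your proposal is correct and is essentially the paper's own proof: the recursive attack you sketch, once unrolled, is exactly the paper's round-by-round strategy ($P$ issuing ``fake'' commitments $X_n(a_i,r')$, $Q$ honestly committing to the guess $Y_n(\text{current target},r')$ of the current target $y_{i-1}$, giving $(m+1)/2$ independent chances of probability $q_n$ each), combined with the same inclusion--exclusion estimate, the same $m2^{-n}$ allowance for zero challenges, and the same final invocation of Lemma~\ref{lm:prob_eq} and Lemma~\ref{lm:x_and_y}. The only difference is presentational (recursive peeling of $\CHSH^n\star\XCHSH^n$ layers versus the paper's explicit iterative description), and the bookkeeping you flag as the remaining obstacle is precisely what the paper's explicit description resolves.
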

\begin{proof}
Let $X_n(a, r)$ and $Y_n(b, r)$ be functions as in Lemma~\ref{lm:x_and_y}.
We define a commit strategy $\dcom_{PQ}$ and an opening strategy
$\dopen_{PQ}(s_\circ)$ for every $s_\circ$ which aims to open to $s_\circ$.

We assume that the provers have $m$ uniformly random strings
$r_i\in\mathbb F_{2^n}$ and $(m+1)/2$ uniformly random inputs $r_i'$, $i$ odd,
for $X_n$ and $Y_n$ as shared randomness. We write $c_i = (a_i, x_i)$ for the
communication between the verifier and the active prover in round $i$, where
the $x_i$ are specified below. The dishonest provers exchange their
communications as fast as possible, so in round $i+2$, the active prover knows
$c_1,\ldots, c_i$. Let $y_0 = s_\circ$ and for $i>0$, let $y_i$ such that
$\Extr(y_i, c_i) = y_{i-1}$. Such a $y_i$ exists and is unique if $a_i \neq 0$.
We only specify our strategy for the case where the verifier's messages $a_i$
are all non-zero and assume that the provers fail to open to $s_\circ$
otherwise. One can compute $y_i$ from $c_1,\ldots,c_i$, so in round $i+2$, the
active prover can compute $y_i$.

If in any round $i$, the commitment is $(a_i, r_i+a_i\cdot y_{i-1})$, the
provers can open to $s_\circ$ simply by following the honest strategy for
$\CHSH^n_m$ from that round on. The strategy described below is such that the
provers have $(m+1)/2$ chances to bring about this situation with probability
$q_n$.
\begin{itemize}
  \item Round 1 (commit): $P$ produces a ``fake commitment''
    $x_1 = X_n(a_1, r_1')$.
  \item Round $i$, $i$ even: $Q$ computes $y_{i-1}' = Y_n(y_{i-2}, r_{i-1}')$,
    hoping that $x_{i-1} + y_{i-1}' = a_{i-1}\cdot y_{i-2}$, i.e.,
    $y_{i-1}' = y_{i-1}$. He honestly commits to $y_{i-1}'$ by computing
    $x_i = a_i\cdot y_{i-1}' + r_i$.
  \item Round $i+1$, $i$ even: $P$ checks if $y_{i-1} = y_{i-1}'$. If yes, both
    provers proceed honestly from this round on, i.e., they follow the honest
    strategy for $\CHSH^n_m$ in all subsequent rounds.\footnote{$Q$ can compute
    $y_{i-1}$ in round $i+2$ and thus he too knows whether the provers should
    proceed honestly or not.}
    If not, $P$ again produces a ``fake commitment''
    $x_{i+1} = X_n(a_{i+1}, r_{i+1}')$.
  \item Round $m+1$: $Q$ sends $y_m' = Y_n(y_{m-1}, r_m')$ to $V$.
\end{itemize}

By definition, we have $y_{i-1}' = y_{i-1}$ if and only if
$X_n(a_{i-1}, r'_{i-1}) + Y_n(y_{i-2}, r'_{i-1}) = a_{i-1}\cdot y_{i-2}$,
which happens with probability $q_n$.
In this case, we have $c_i = (a_i, r_i + a_i\cdot y_{i-1})$, so
the provers can indeed open to $s_\circ$ by proceeding honestly (ignoring
completeness errors for now).

By definition of $X_n$, $Y_n$, and $q_n$, if the provers use the strategy
$\dopen_{PQ}(s_\circ)$, then for
$$
  q = 1 - ( 1 - q_n )^{(m+1)/2}
  \geq \frac{(m+1)q_n}{2} - {(m+1)/2\choose 2}q_n^2
  = \frac{(m+1)q_n}{2} - \frac{\bigl(m^2-1\bigr)q_n^2}{8}
$$
we have $p(s = s_\circ|a_1,\ldots,a_m\neq 0) = q$.
Thus, $p(s = s_\circ) \geq q - m2^{-n}$ for all $s_\circ$. Applying
Lemma~\ref{lm:prob_eq}, we conclude that the scheme can be
$\eps$-fairly-weak-binding only if
$$
  \eps\geq q - (m+1)2^{-n}
    \geq \frac{(m+1)q_n}{2} - \frac{\bigl(m^2-1\bigr)q_n^2}{8} - (m+1)2^{-n}
$$
which is in $\Omega(mq_n)$ if $m = o\bigl(q_n^{-1}\bigr)$.
Finally, we have $\Omega(mq_n) = \Omega\bigl(m2^{-n/2}\bigr)$ if $n$ is even
and $\Omega(mq_n) = \Omega\bigl(m2^{-2n/3}\bigr)$ if $n$ is odd, by claim 2 of
Lemma~\ref{lm:x_and_y}.
\qed
\end{proof}

From the analysis in the above proof, we can also derive a version of the
theorem for the bit-commitment scheme described in
Proposition~\ref{prop:fairlyweaktostrong}.

\begin{corollary}
For even $m$, the commitment scheme $\CHSH^n_m$ can be $\eps$-binding as a
bit-commitment scheme only if
$$
  \eps \geq \frac{mq_n}{4} - \frac{(m^2-2m)q_n^2}{16} - (m+1)2^{-n}\text .
$$
If $m = o\bigl(q_n^{-1}\bigr)$, it holds that $\eps\geq\Omega(mq_n)$. If
$n$ is even, we have $\eps\geq\Omega\bigl(m2^{-n/2}\bigr)$ and if it is odd,
$\eps\geq\Omega\bigl(m2^{-2n/3}\bigr)$.
\end{corollary}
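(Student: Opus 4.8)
The plan is to derive the corollary from the attack built in the proof of Theorem~\ref{thm:tightness}, combined with an honest commitment, and to exploit the fact (Theorem~\ref{thm:equiv_weak}, together with the trivial implication that binding implies weak-binding) that a bit commitment scheme can be $\eps$-binding only if $p_0 + p_1 \le 1 + 2\eps$. So it suffices to exhibit, for the bit-commitment version of $\CHSH^n_m$ (which, recall from Proposition~\ref{prop:fairlyweaktostrong}, commits to a bit $b$ by committing via $\CHSH^n_m$ to a fixed string $s_b$, with $s_0 \ne s_1$, and lets $V$ output $b$ iff the recovered string equals $s_b$), a single commit strategy together with two opening strategies whose success probabilities satisfy $p_0 + p_1 \ge 1 + q$ up to an $O(m2^{-n})$ additive loss, where $q := 1 - (1-q_n)^{m/2}$.

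First I would have the provers \emph{honestly commit to $s_0$} in the first round, i.e.\ $P$ sends $x_1 = r_1 + a_1\cdot s_0$. With this commitment, opening honestly recovers $s_0$ unless one of $a_1,\dots,a_m$ vanishes, so $p_0 \ge 1 - O(m2^{-n})$, the loss being exactly the completeness error of $\CHSH^n_m$. The other opening strategy \emph{redirects} the commitment to $s_1$ during the sustain phase. The point is that, once $c_1$ is in place, the rest of $\CHSH^n_m$ is just a standalone run of the scheme $\CHSH^n_{m-1}$ (with the two provers' roles interchanged) whose committed string ``should be'' the unique opening string $\tilde y_1 := a_1(s_1-s_0) - r_1$ that opens $c_1$ to $s_1$; and $\tilde y_1$ is known to $P$, who received $a_1$. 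So the provers run the fake-commit attack from the proof of Theorem~\ref{thm:tightness} against this $\CHSH^n_{m-1}$ with target string $\tilde y_1$: the sustain-and-open rounds $2,\dots,m+1$ split into pairs, in each pair the currently passive prover produces a fake commitment and the other prover then uses the functions $X_n,Y_n$ of Lemma~\ref{lm:x_and_y} to guess the opening string that steers the chain toward $\tilde y_1$, which is correct with probability $q_n$; on the first success the provers switch to the honest continuation and $V$ recovers $s_1$. Crucially, for \emph{even} $m$ the index $m-1$ is odd, so the counting in the proof of Theorem~\ref{thm:tightness} applies verbatim and yields exactly $m/2$ such pairs; hence $p_1 \ge q - O(m2^{-n})$.

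Combining the two, $p_0 + p_1 \ge 1 + q - O(m2^{-n})$, and since $\CHSH^n_m$ being $\eps$-binding as a bit commitment scheme forces $p_0 + p_1 \le 1 + 2\eps$, we get $\eps \ge \tfrac12 q - O(m2^{-n})$. Plugging in $q = 1-(1-q_n)^{m/2} \ge \tfrac m2 q_n - \binom{m/2}{2}q_n^2 = \tfrac m2 q_n - \tfrac{m^2-2m}{8}q_n^2$ and accounting for the $2^{-n}$ terms yields $\eps \ge \tfrac{mq_n}{4} - \tfrac{(m^2-2m)q_n^2}{16} - (m+1)2^{-n}$, as claimed. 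For the asymptotic statements: when $m = o(q_n^{-1})$ the quadratic term is $o(mq_n)$ and $(m+1)2^{-n} = o(mq_n)$ because $q_n \gg 2^{-n}$ by part~2 of Lemma~\ref{lm:x_and_y}, so $\eps = \Omega(mq_n)$; this is $\Omega(m2^{-n/2})$ for even $n$ and $\Omega(m2^{-2n/3})$ for odd $n$, again by part~2 of Lemma~\ref{lm:x_and_y}.

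The step I expect to require the most care is justifying that the ``redirection'' genuinely inherits the success analysis of Theorem~\ref{thm:tightness}: one must check that after the honest commitment to $s_0$ the remainder of $\CHSH^n_m$ really is a standalone copy of $\CHSH^n_{m-1}$ with committed string $\tilde y_1$, that the prover speaking in each round of this sub-run knows precisely the $a_i$'s it needs for its guess (this is exactly where the restriction that the active prover in round~$i$ does not yet know $a_{i-1}$ is used, and where one sees why $P$, not $Q$, must be the one committing in the very first round), that a failed guess still leaves a well-defined target for the next pair of rounds, and that the parity of $m$ is what makes the last pair coincide with the final opening round. This parity bookkeeping is also what explains why the leading constant here is $\tfrac14$, half of the $\tfrac12$ appearing in the string-commitment bound of Theorem~\ref{thm:tightness}.
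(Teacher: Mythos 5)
Your proposal is correct and is essentially the paper's own proof: an honest commitment to $s_0$ (giving $p_0 \ge 1-(m+1)2^{-n}$), the fake-commit redirection attack of Theorem~\ref{thm:tightness} run on the remaining $m$ rounds with target the unique opener of $c_1$ to $s_1$, known to $P$ (giving $p_1 \ge 1-(1-q_n)^{m/2}-(m+1)2^{-n}$ with $m/2$ chances), and the conclusion via Theorem~\ref{thm:equiv_weak}. One small quibble with your closing remark: the drop of the leading constant from $1/2$ to $1/4$ is caused by the conversion through $p_0+p_1\le 1+2\eps$ (Lemma~\ref{lm:prob_eq} carries no such factor), not by the parity bookkeeping, which merely turns $(m+1)/2$ chances into $m/2$.
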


\begin{proof}
Let $\dcom_P = \com_P(0)$, i.e., $P$ produces an honest commitment to $0$.
Let $\dopen_{PQ}(0) = \open_{PQ}$, i.e., the honest opening strategy.
Since the provers play honestly, they are successful with probability
at least $1 - (m+1)2^{-n}$.

For $\dopen_{PQ}(1)$, let $s_\circ$ such that $\Extr(s_\circ, c_1) = 1$.
The provers then use the strategy in the proof of Theorem~\ref{thm:tightness}
to produce a fake commitment $c_1$ and open it to $s_\circ$. Then, we have
$$
  p(b = 1|a_1,\ldots, a_m\neq 0) \geq \frac{mq_n}{2} - \frac{(m^2-2m)q_n^2}{8}
    - 2^{-n}
$$
and thus,
$$
  p(b = 1) \geq \frac{mq_n}{2} - \frac{(m^2-2m)q_n^2}{8} - (m+1)2^{-n}\text .
$$
It follows that
$$
  p(b = 0) + p(b = 1) \geq 1 + \frac{mq_n}{2} - \frac{(m^2-2m)q_n^2}{8}
    - (m+1)2^{-n+1}
$$
and, by Theorem~\ref{thm:equiv_weak}, the scheme can be $\eps$-weak-binding
only if
$$
  \eps \geq \frac{mq_n}{4} - \frac{(m^2-2m)q_n^2}{16} - (m+1)2^{-n}\text .
$$
\qed
\end{proof}

\subsection*{Acknowledgments}

We would like to thank J\k{e}drzej Kaniewski for helpful discussions regarding \cite{LKB+15}, and for commenting on an earlier version of our work. 

\bibliographystyle{alpha}
\bibliography{FF}

\begin{appendix}

\section{Proof of Lemma~\ref{lm:eps_dist}}

We first extend the respective probability spaces given by the distributions $p(x)$ and $p(y)$ by introducing an event $\Delta$ and declaring that 
$$
p(x\!=\!x_\circ \land \Delta) =  \min \set{p(x=x_\circ), p(y=x_\circ)} = p(y\!=\!x_\circ \land \Delta)
$$
for every $x_\circ \in \cal X$. 
Note that $p(\Delta)$ is well defined (by summing over all $x_\circ$). 
As we will see below, $\Delta$ will become the event $x = y$. 
In order to find a consistent joint distribution $p(x,y)$, it suffices to find a consistent joint distribution $p(x,y|\Delta)$ for $p(x|\Delta)$ and $p(y|\Delta)$, and a consistent joint distribution $p(x,y|\neg\Delta)$ for $p(x|\neg\Delta)$ and $p(y|\neg\Delta)$. The former, we choose as 
$$
p(x=x_\circ \land y = x_\circ |\Delta) :=  \min \set{p(x=x_\circ), p(y=x_\circ)}/p(\Delta)
$$ 
for all $x_\circ \in \cal X$, 
and $p(x=x_\circ \land y = y_\circ |\Delta) := 0$ for all $x_\circ \neq y_\circ \in \cal X$, and the latter we choose as
$$
p(x=x_\circ \land y = y_\circ |\neg\Delta) := p(x=x_\circ|\neg\Delta) \cdot p(y=y_\circ|\neg\Delta)
$$
for all $x_\circ,y_\circ \in \cal X$. 
It is straightforward to verify that these are indeed {\em consistent} joint distributions, as required, so that $p(x,y) = p(x,y|\Delta) \cdot p(\Delta) + p(x,y|\neg\Delta) \cdot p(\neg\Delta)$ is also consistent. Furthermore, note that $p(x\!=\!y|\Delta) = 1$ and $p(x\!=\!y|\neg\Delta) = 0$; the latter holds because we have $p(x\!=\!x_\circ \land \Delta) =  p(x=x_\circ)$ or $p(y\!=\!x_\circ \land \Delta) =  p(y=x_\circ)$ for each $x_\circ \in \cal X$, and thus $p(x\!=\!x_\circ \land \neg\Delta) =  0$ or $p(y\!=\!x_\circ \land \neg\Delta) =  0$. As such, $\Delta$ is the event $x = y$, and therefore $p(x = y = x_\circ) = p(x\!=\!x_\circ \land \Delta) =  \min \set{p(x=x_\circ), p(y=x_\circ)} $ for every $x_\circ \in \cal X$ as required.
Finally, the claim regarding $p(x,y|x \neq y)$ holds by construction. 
\qed

\section{A Property for Conditionally Independent Random Variables}

Let $p(x,y,z)$ be a distribution, and let $\Lambda \subset {\cal X} \times {\cal Y} \times {\cal Z}$ be an event. Then, we write $x\rightarrow y\rightarrow z$ to express that $p(x,z|y) = p(x|y) \, p(z|y)$, and $x\rightarrow \Lambda \rightarrow y$ to express that $p(x,y|\Lambda) = p(x|\Lambda) \, p(y|\Lambda)$, etc. 

\begin{lemma}
\label{lm:chains}
If $x\rightarrow y\rightarrow z$
and $x\rightarrow x\neq y\rightarrow y$, then
$x\rightarrow x\neq y\rightarrow z$.
\end{lemma}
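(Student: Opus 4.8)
The plan is to prove the conditional independence statement by a direct computation with conditional probabilities, carefully tracking the two given hypotheses. Write $\Lambda$ for the event $x\neq y$. We are given $p(x,z\mid y) = p(x\mid y)\,p(z\mid y)$ for every value $y$ (in particular for those $y$ that are compatible with $\Lambda$, i.e.\ those appearing together with some $x\neq y$), and $p(x,y\mid\Lambda) = p(x\mid\Lambda)\,p(y\mid\Lambda)$. We want $p(x,z\mid\Lambda) = p(x\mid\Lambda)\,p(z\mid\Lambda)$.

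First I would expand $p(x = x_\circ,\, z = z_\circ \mid \Lambda)$ by summing over the value of $y$, restricting to $y = y_\circ \neq x_\circ$:
\begin{align*}
p(x\!=\!x_\circ, z\!=\!z_\circ \mid \Lambda)
&= \sum_{y_\circ \neq x_\circ} p(x\!=\!x_\circ,\, y\!=\!y_\circ,\, z\!=\!z_\circ \mid \Lambda) \\
&= \sum_{y_\circ \neq x_\circ} p(y\!=\!y_\circ \mid \Lambda)\cdot p(x\!=\!x_\circ,\, z\!=\!z_\circ \mid \Lambda,\, y\!=\!y_\circ) \, .
\end{align*}
The key observation is that conditioning on $\Lambda$ \emph{and} $y = y_\circ$ is the same as conditioning on $y = y_\circ$ and $x \neq y_\circ$; and since the hypothesis $x\to y\to z$ gives $p(x,z\mid y\!=\!y_\circ) = p(x\mid y\!=\!y_\circ)\,p(z\mid y\!=\!y_\circ)$, conditioning this further on the event $\{x\neq y_\circ\}$ (which only constrains $x$) preserves the product form and leaves the $z$-marginal unchanged: $p(x\!=\!x_\circ, z\!=\!z_\circ \mid y\!=\!y_\circ, x\!\neq\! y_\circ) = p(x\!=\!x_\circ \mid y\!=\!y_\circ, x\!\neq\!y_\circ)\cdot p(z\!=\!z_\circ \mid y\!=\!y_\circ)$. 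Substituting this back, and then using $x\to\Lambda\to y$ to rewrite $p(y\!=\!y_\circ\mid\Lambda)\,p(x\!=\!x_\circ\mid y\!=\!y_\circ, x\!\neq\! y_\circ)$ as a joint-over-$\Lambda$ probability, the sum over $y_\circ\neq x_\circ$ collapses: the $x$-part reassembles to $p(x\!=\!x_\circ\mid\Lambda)$ and the $z$-part, after noting $p(z\!=\!z_\circ\mid y\!=\!y_\circ) = p(z\!=\!z_\circ\mid y\!=\!y_\circ,\Lambda)$, reassembles via the same marginalization identity to $p(z\!=\!z_\circ\mid\Lambda)$.

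I expect the main obstacle to be the bookkeeping in the middle step: one must be precise that $p(z\mid y\!=\!y_\circ)$ does not change when we additionally condition on $x\neq y_\circ$ (this is exactly $x\to y\to z$ applied to the sub-event, i.e.\ $z$ is independent of any $\sigma(x)$-measurable event given $y$), and that the two marginalizations (going into $\Lambda$-conditioning and coming back out) are performed against consistent weights $p(y_\circ\mid\Lambda)$. Degenerate cases — values $y_\circ$ with $p(y\!=\!y_\circ\mid\Lambda)=0$, or $x_\circ$ with $p(x\!=\!x_\circ\mid\Lambda)=0$ — contribute zero on both sides and can be dropped at the outset, so no separate argument is needed for them. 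Everything else is a routine rearrangement of finite sums.
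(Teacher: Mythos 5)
Your argument follows essentially the same route as the paper's proof: decompose $p(x,z\mid x\neq y)$ over the value of $y$, use $x\rightarrow y\rightarrow z$ to replace the conditional law of $z$ given $(x,y,x\neq y)$ by $p(z\mid y)$ (and hence by $p(z\mid y, x\neq y)$), and then invoke $x\rightarrow x\neq y\rightarrow y$ to factor and re-marginalize. Those steps are all sound.

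There is, however, one step that is not justified as written. After factoring, you are left with $p(x\!=\!x_\circ\mid\Lambda)\sum_{y_\circ\neq x_\circ} p(y\!=\!y_\circ\mid\Lambda)\, p(z\!=\!z_\circ\mid y\!=\!y_\circ,\Lambda)$ and you assert that the sum ``reassembles via the same marginalization identity to $p(z\!=\!z_\circ\mid\Lambda)$''. But the marginalization identity requires the sum over \emph{all} $y_\circ$; your sum excludes $y_\circ=x_\circ$, so it equals $p(y\neq x_\circ\land z\!=\!z_\circ\mid\Lambda)$, which differs from $p(z\!=\!z_\circ\mid\Lambda)$ in general. The missing observation is that the second hypothesis, applied to the diagonal pair $(x_\circ,x_\circ)$, gives $0=p(x\!=\!x_\circ\land y\!=\!x_\circ\mid\Lambda)=p(x\!=\!x_\circ\mid\Lambda)\,p(y\!=\!x_\circ\mid\Lambda)$; since you may assume $p(x\!=\!x_\circ\mid\Lambda)>0$ (the other case is degenerate, as you note), this forces $p(y\!=\!x_\circ\mid\Lambda)=0$, so the excluded term vanishes and the collapse is legitimate. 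With that one line added your proof is complete; the paper sidesteps the issue by summing over all $y$ and using that the factorization $p(x,y\mid x\neq y)=p(x\mid x\neq y)\,p(y\mid x\neq y)$ holds for every pair of values, including equal ones, where both sides are zero.
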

\begin{proof}
We assume that  $x\rightarrow y\rightarrow z$
and $x\rightarrow x\neq y\rightarrow y$. We first observe that
$$
  p(x,x\neq y, z|y) = p(x,x\neq y|y) \, p(z|x,y,x\neq y)
  = p(x,x\neq y|y) \, p(z|x,y)
  = p(x,x\neq y|y) \, p(z|y) \, ,
$$
which means that $(x,x\neq y)\rightarrow y \rightarrow z$, and, by summing over $x$, implies $x\neq y \rightarrow y \rightarrow z$. 
It follows that
$$
p(z|x,y,x\neq y) = p(z|y) = p(z|y,x\neq y) \, ,
$$
which actually means that $x\rightarrow (y,x\neq y)\rightarrow z$. Therefore, 
\begin{align*}
  p(x,z|x\neq y) = \sum_y p(x,y,z|x\neq y) 
  &=  \sum_y p(x,y|x\neq y) \, p(z|x,y,x\neq y)\\
  &= p(x|x\neq y) \sum_y p(y|x\neq y) \, p(z|y,x\neq y)\\
  &= p(x|x\neq y) \sum_y p(y,z|x\neq y)\\
  &= p(x|x\neq y) \, p(z|x\neq y) \, ,
\end{align*}
which was to be proven. 
\qed
\end{proof}

\section{Proof of Theorem~\ref{thm:sim_open2fairlyweak}}
\label{ap:sim_open2fairlyweak}
Fix a commit strategy $\dcom _{PQ}$ against $\mathcal S$. 
Enumerate all strings in the domain $\bin^n$ of $\cal S$ as $s_1,\ldots,s_{2^n}$, and for every $i \in \set{1,\ldots,2^n}$ let $\dopen ^i_{PQ}$ be an opening strategy maximizing $p_i := p(s = s_i)$, where $s$ is the output of the verifier when $P$ and $Q$ use this strategy. We assume without loss of generality that the $p_i$s are in descending order. 
We define $p(\hat{s})$ as follows. Let $N \geq 2$ be an integer which
we will fix later. By Definition \ref{def:sim_open} and inequality (\ref{eq:pr_sum}), it holds that
$$
  \sum _{i=1}^N p_i \leq 1+{N\choose 2} \cdot \eps = 1 + \frac{N(N-1)}{2} \cdot \eps
$$
where we let $p_i = 0$ for $i > 2^n$ in case $N > 2^n$.
We would like to define $p(\hat{s})$ as $p(\hat{s} = s_i) := p_i-(N-1)\eps/2$
for all $i\leq N, 2^n$; however, this is not always possible because $p_i-(N-1)\eps/2$ may be negative. To deal with this, let $N'$ be the largest
integer such that $N'\leq N$ and $p_1,\dots, p_{N'}\geq (N-1)\eps/2$.
(We take $N=0$ if $p_1 < (N-1)\eps/2$.) It follows that
$$
  \sum _{i=1}^{N'} p_i \leq 1 + \frac{N'(N'-1)}{2}\cdot\eps
  \leq 1+\frac{N'(N-1)}{2}\cdot\eps
  \quad\text{ and thus }\quad
  \sum _{i=1}^{N'} p_i = 1 + \frac{N'(N-1)}{2}\cdot\tilde\eps
$$
for some $\tilde\eps \leq \eps$. We now set $p(\hat{s})$ to be $p(\hat{s} = s_i): = p_i - (N-1)\tilde\eps/2\geq p_i-(N-1)\eps/2 \geq 0$ for all $i\leq N'$.
Now consider an opening strategy $\dopen _{PQ}$ and let $p(s)$ be
the resulting output distribution. By definition of the $p_i$, it follows that
$p(s = s_i) \leq p_i$ for all $i \leq 2^n$, and $p_i \leq p(\hat{s} = s_i) + (N-1)\eps/2$ for all $i \leq N'$. 
By Lemma~\ref{lm:eps_dist}, we can conclude that there exists a consistent joint distribution $p(\hat{s}, s)$ with $p(\hat{s} = s = s_i) = \min\set{p(s=s_i),p(\hat{s}=s_i)} \geq p(s=s_i) - (N-1)\eps /2$ for all $i\leq N'$, and thus $p(\hat{s} \neq s = s_i) = p(s = s_i)  - p(\hat{s} = s = s_i) \leq (N-1)\eps /2$ for all $i\leq N'$
Furthermore, when
$N' < i \leq N$, we have $p(\hat{s} \neq s = s_i) = p(s=s_i) \leq p_i < (N-1)\eps/2$ by
definition of~$N'$. Since the $p_i$ are sorted in descending order, it follows
that for all $i > N$
$$
  p(\hat{s} \neq s = s_i) = p(s = s_i) \leq p_i \leq p_N \leq \frac{1}{N} \sum _{i=1}^N p_i \leq 
  \frac{1}{N} + \frac{N-1}{2}\cdot\eps
$$
and thus, we have shown for all $s_\circ\in \bin ^n$ that
$$
  p(\hat{s} \neq s = s_\circ) \leq \frac{1}{N} + \frac{N-1}{2}\cdot \eps\text{.}
$$
We now select $N$ so that this value is minimized: It is easy to verify that the
function $f: \mathbb R_{>0} \to \mathbb R_{>0}$, $x\mapsto 1/x + (x-1)\eps /2$
has its global minimum in $\sqrt{2/\eps}$; thus, we pick
$N := \lceil \sqrt{2/\eps}\rceil$, which gives us
$$
  p(\hat{s} \neq s = s_\circ) \leq \frac{1}{N} + \frac{N-1}{2}\cdot \eps \leq \frac{1}{\sqrt{2/\eps}} + \frac{\sqrt{2/\eps}}{2}\cdot \eps = \sqrt{2\eps}
$$
for any $s_\circ \in \bin^n$, as claimed.\qed

\section{The Hiding Property of Composed Schemes}
\label{ap:hiding}

We already mentioned that the standard hiding property is not good enough for
multi-round bit commitment schemes: The standard definition is not violated if
the verifier learns the string $s$ immediately after the commit phase. However,
the purpose of multi-round schemes is to maintain the commitment over a longer
period of time in the relativistic setting, without \em disclosing \em the
string $s$ until the very end. In this appendix, we define a hiding property
that captures this requirement, and we prove that a composed scheme
$\cal S'' = \cal S\star\cal S'$ is hiding if both $\cal S$ and $\cal S'$ are
hiding (with the error parameters adding up).

\begin{definition}
Let ${\cal S} = (\com_{PQV},\open_{PQV})$ be a commitment scheme. We write $v$
for the verifier's view immediately before the last round of communication in
$\open_{PQV}$. We say that a scheme is \em $\eps$-hiding until the last
round \em if for any (possibly dishonest) verifier $V$ and any two inputs $s_0$
and $s_1$ to the honest provers, we have $d(p(v|s_0), p(v|s_1))\leq\eps$.
\end{definition}

\begin{theorem}
\label{thm:add_hiding}
Let $\cal S$ be a $\eps$-hiding commitment scheme and $\cal S'$ a scheme that is
$\delta$-hiding until the last round. If $(\cal S, \cal S')$ is eligible, then
the composed scheme $\cal S'' = \cal S\star \cal S'$ is
$(\eps + \delta)$-hiding until the last round.
\end{theorem}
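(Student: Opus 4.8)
The plan is to unfold $\cal S''=\cal S\star\cal S'$, identify the verifier's view just before the last round as a pair $(c,v')$ — the $\cal S$-commitment $c$ together with the view $v'$ of the embedded $\cal S'$-subprotocol — and then bound the statistical distance between the two input cases by a single add-and-subtract estimate in which the $\cal S$-part contributes $\eps$ and the $\cal S'$-part contributes $\delta$.

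First I would set up notation. Fix a dishonest verifier $V^*$ for $\cal S''$ (which may be taken deterministic) and two honest-prover inputs $s_0,s_1$. By eligibility, the commit phase of $\cal S''$ is exactly $\com_{PV}$ of $\cal S$, so it produces the $\cal S$-commitment $c$ (which by Remark~\ref{rem:entire_com} equals the commit-phase communication) together with the provers' states, from which $Q$ locally computes the opening string $y=\open_Q(state_Q)$; again by eligibility the opening phase of $\cal S''$ then commits to $y$ via $\com'_{QV}$ and opens that commitment via $\open'_{PQV}$. Thus the last round of $\open''_{PQV}$ is the last round of $\open'_{PQV}$, and the verifier's view immediately before it is $v''=(c,v')$, where $v'$ is the verifier's view of the $\cal S'$-subprotocol (its commit phase plus its opening phase up to, but excluding, the last round) run on honest input $y$. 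Writing $p(\cdot\,|\,s_b)$ for the distribution induced by the honest execution of $\cal S''$ on input $s_b$, I would record that $p(v''|s_b)=p(c|s_b)\,p(v'|c,s_b)$, and that, since the execution of the $\cal S'$-subprotocol (hence $v'$) depends on the rest of the experiment only through $c$ — which determines the dishonest $\cal S'$-verifier $V^*_c$ obtained from $V^*$'s continuation — and through the honest input $y$, we have $p(v'|c,s_b)=\sum_y p(y|c,s_b)\,p(v'|c,y)$ with the channel $p(v'|c,y)$ the \emph{same} for $b=0$ and $b=1$.

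Next I would extract the two ingredients. From the standard $\eps$-hiding property of $\cal S$ (Definition~\ref{def:hiding_standard}), applied to the commit-phase strategy that $V^*$ induces, I get $d\big(p(c|s_0),p(c|s_1)\big)\le\eps$; this uses only that $c$ is the $\cal S$-commitment and is insensitive to whatever $V^*$ does afterwards — crucially, no hiding of the provers' states is needed (which is essential, since $y$ typically reveals $s$). From the $\delta$-``hiding until the last round'' property of $\cal S'$, applied for each fixed $c$ to the dishonest $\cal S'$-verifier $V^*_c$, I get $d\big(p(v'|c,y),p(v'|c,y')\big)\le\delta$ for \emph{every} pair of inputs $y,y'$. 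Because this bound is uniform in $y,y'$, convexity of statistical distance then gives, for each $c$,
\[
d\big(p(v'|c,s_0),p(v'|c,s_1)\big)\le\sum_{y,y'}p(y|c,s_0)\,p(y'|c,s_1)\,d\big(p(v'|c,y),p(v'|c,y')\big)\le\delta ,
\]
no matter how differently $y$ is distributed given $c$ under $s_0$ and $s_1$ (terms with $c$ outside a support are handled by fixing an arbitrary reference input, which does not affect the bound).

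Finally I would combine these by add-and-subtract:
\[
2\,d\big(p(v''|s_0),p(v''|s_1)\big)=\sum_{c,v'}\big|p(c|s_0)p(v'|c,s_0)-p(c|s_1)p(v'|c,s_1)\big|\le\sum_c p(c|s_0)\sum_{v'}\big|p(v'|c,s_0)-p(v'|c,s_1)\big|+\sum_c\big|p(c|s_0)-p(c|s_1)\big| ,
\]
whose first term is at most $2\delta$ by the previous display and whose second term equals $2\,d\big(p(c|s_0),p(c|s_1)\big)\le2\eps$; hence $d\big(p(v''|s_0),p(v''|s_1)\big)\le\eps+\delta$. The main obstacle here is conceptual rather than computational: the opening string $y$ leaks essentially all of $s$, so ``the error parameters add up'' is \emph{not} what a naive two-hybrid argument (swap $s_0\to s_1$ first inside $\cal S$, then inside $\cal S'$) would give, since that would pay $\delta$ twice. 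The resolution is to exploit that $\cal S'$ hides \emph{every} input equally well and to splice directly at the level of the joint distribution $p(c,v')$ as above; the remaining points — reduction to deterministic strategies and treating $V^*$'s adaptivity by conditioning on $c$ — are routine.
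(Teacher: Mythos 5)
Your proof is correct and follows essentially the same route as the paper's: decompose the verifier's final view into the $\cal S$-commit-phase part and the $\cal S'$-subprotocol part, bound the former by the $\eps$-hiding of $\cal S$, bound the latter (conditioned on the commit-phase view) by the $\delta$-hiding-until-the-last-round of $\cal S'$ uniformly over all inputs $y$ plus convexity of statistical distance, and combine with a single triangle-inequality/hybrid step. Your add-and-subtract computation is just an explicit writing-out of the paper's chain $p(v|s_0)p(v'|v,s_0)\approx_\delta p(v|s_0)p(v'|v,s_1)\approx_\eps p(v|s_1)p(v'|v,s_1)$, so nothing is missing.
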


\begin{proof}
Fix a strategy against the hiding-until-the-last-round property of
$\cal S''$. We consider the distribution $p(v,y,v'|s)$ where $s$ is the string
that the provers commit to, $v$ the verifier's view after $\dcom_{PQV}$ has been
executed, $y$ the opening information to which $Q$ commits using the scheme
$\cal S'$, and $v'$ the verifier's view immediately before the last round of
communication. We need to show that
$d(p(v'|s_0), p(v'|s_1)) \leq \eps + \delta$ for any $s_0$ and $s_1$.

First, note that $p(v'|v,y,s_b) = p(v'|v,y)$ since $v'$ is produced by $P$, $Q$
and $V$ acting on $y$ and $v$ only. From any strategy against
$\cal S''$, we can obtain a strategy against $\cal S'$ by fixing $v$. Thus,
by the hiding property of $\cal S'$, for any $y_0$ and $y_1$, we have
$
  d(p(v'|v,y=y_0), p(v'|v,y=y_1))\leq\delta
$
and it follows by the convexity of the statistical distance in both arguments
that
$$
  p(v'|v,s_0) = \sum_y p(y|v,s_0)p(v'|v,y)
  \approx_{\delta} \sum_y p(y|v,s_1)p(v'|v,y)
  = p(v'|v,s_1)
$$
where we use $\approx_\delta$ to indicate that the two distributions have
statistical distance at most $\delta$. Since we have
$d(p(v|s_0), p(v|s_1))\leq\eps$ by the hiding property of $\cal S$, it follows
that
$$
  p(v'|s_0) = p(v,v'|s_0) = p(v|s_0)p(v'|v,s_0)
  \approx_{\delta} p(v|s_0)p(v'|v,s_1)
  \approx_{\eps} p(v|s_1)p(v'|v,s_1) = p(v,v'|s_1) = p(v'|s_1)
$$
where the first and last equality hold because $v'$ contains $v$ since $v'$ is
the view of $V$ at a later point in time.
\qed
\end{proof}

\end{appendix}

\end{document}